\renewcommand{\emptyset}{\varnothing}
\renewcommand{\emptyset}{\varnothing}
\newcommand{\yes}{\textcolor{green!50!black}{\ding{51}}}
\newcommand{\no}{\textcolor{red!50!black}{\ding{55}}}
\newtheorem{theorem}{Theorem}[section]
\newtheorem{proposition}[theorem]{Proposition}
\newtheorem{lemma}[theorem]{Lemma}
\theoremstyle{definition}
\title{Reachability of Fair Allocations via Sequential Exchanges}
\author{
Ayumi Igarashi\\University of Tokyo
\and
Naoyuki Kamiyama\\Kyushu University
\and
Warut Suksompong
\qquad
Sheung Man Yuen\\National University of Singapore
}
\date{\vspace{-5ex}}
\begin{document}

\maketitle

\begin{abstract}
In the allocation of indivisible goods, a prominent fairness notion is envy-freeness up to one good (EF1).
We initiate the study of reachability problems in fair division by investigating the problem of whether one EF1 allocation can be reached from another EF1 allocation via a sequence of exchanges such that every intermediate allocation is also EF1.
We show that two EF1 allocations may not be reachable from each other even in the case of two agents, and deciding their reachability is PSPACE-complete in general.
On the other hand, we prove that reachability is guaranteed for two agents with identical or binary utilities as well as for any number of agents with identical binary utilities.
We also examine the complexity of deciding whether there is an EF1 exchange sequence that is optimal in the number of exchanges required.
\end{abstract}

\section{Introduction}

Fair division refers to the study of how to allocate resources fairly among competing agents, with applications ranging from divorce settlement to university course allocation to international dispute resolution \citep{BramsTa96,Moulin03,Thomson16}.
While its formal study has a long and storied history dating back to the work of \citet{Steinhaus48}, it remains a highly active research area at the intersection of mathematics, economics, and computer science.
In particular, researchers have recently drawn connections between fair division and various other fields such as graph theory \citep{BeiIgLu22,BiloCaFl22}, extremal combinatorics \citep{BerendsohnBoKo22,AkramiAlCh23}, two-sided matching \citep{FreemanMiSh21,IgarashiKaSu23}, and differential privacy \citep{ManurangsiSu23}, to name but a few.

In fair division, the goal is typically to find an allocation of the resource that is ``fair'' with respect to the agents' preferences.
When allocating indivisible goods---such as books, clothes, and office supplies---a prominent fairness notion in the literature is \emph{envy-freeness up to one good (EF1)}.
In an EF1 allocation of the goods, an agent is allowed to envy another agent only if there exists a good in the latter agent's bundle whose removal would eliminate this envy.
The ``up to one good'' relaxation is necessitated by the fact that full envy-freeness is sometimes infeasible, as can be seen when two agents compete for a single valuable good.
It is well-known that an EF1 allocation always exists regardless of the agents' valuations for the goods and can moreover be computed in polynomial time \citep{LiptonMaMo04,Budish11}.
The simplicity, guaranteed existence, and efficient computation makes EF1 a particularly attractive fairness notion.\footnote{By contrast, it remains unknown whether a stronger fairness notion called \emph{envy-freeness up to any good (EFX)} can always be satisfied \citep{AkramiAlCh23}, whereas another well-studied fairness notion, \emph{maximin share fairness (MMS)}, does not offer guaranteed existence \citep{KurokawaPrWa18}.}

In this work, we take a different perspective by initiating the study of \emph{reachability} in fair division.
Given two fair allocations---an initial allocation and a target allocation---we are interested in whether the target allocation can be reached from the initial allocation via a sequence of operations such that every intermediate allocation is also fair.
As an application of our problem, consider a company that wants to redistribute some of its employees between its departments.
Since performing the entire redistribution at once may excessively disrupt the operation of the departments, the company prefers to gradually adjust the distribution while maintaining fairness among the departments throughout the process.
Another example is a museum that plans to reallocate certain exhibits among its branches---performing one small change at a time can help ensure a seamless transition for the visitors.
In this paper, we shall use EF1 as our fairness benchmark and allow any two agents to \emph{exchange} a pair of goods in an operation.
The reachability between EF1 allocations, or lack thereof, is an interesting structural property in itself; similar properties have been studied in other collective decision-making scenarios such as voting \citep{ObraztsovaElFa13,ObraztsovaElFa20}.

Closest to our work is perhaps a line of work initiated by \citet{GourvesLeWi17}.
These authors considered the ``housing market'' setting, where the number of agents is the same as the number of goods and each agent receives exactly one good.
In their model, a pair of agents is allowed to exchange goods if the two agents are neighbors in a given social network and the exchange benefits both agents.
Their paper, along with a series of follow-up papers \citep{HuangXi20,LiPlSi21,MullerBe21,ItoKaKa23}, explored the complexity of determining whether an allocation can be reached from another allocation in this model and its variants. 
More broadly, reachability problems are also known as \emph{reconfiguration} problems \citep{Nishimura18}; examples of such problems that have been studied include minimum spanning tree \citep{ItoDeHa11}, graph coloring \citep{JohnsonKrKr16}, and perfect matching \citep{BonamyBoHe19}.

\subsection{Our Contributions}

\begin{table*}[t]
\centering
\begin{NiceTabular}{c|c||c|c|c|c}
\multicolumn{2}{c||}{utilities} & general & identical & binary & identical + binary \\ \hline \hline
\multirow{2}{*}{$n=2$} 
& connected? & \no \ (Th.~\ref{thm:gen_2_connected}) 
             & \yes \ (Th.~\ref{thm:iden_2_optimal}) 
             & \yes \ (Th.~\ref{thm:binary_2_optimal}) 
             & \yes \ (Th.~\ref{thm:iden_2_optimal}/\ref{thm:binary_2_optimal}) \\
& optimal?   & \no \ (Th.~\ref{thm:gen_2_optimal}) 
             & \yes \ (Th.~\ref{thm:iden_2_optimal}) 
             & \yes \ (Th.~\ref{thm:binary_2_optimal}) 
             & \yes \ (Th.~\ref{thm:iden_2_optimal}/\ref{thm:binary_2_optimal}) \\ \hline
\multirow{2}{*}{$n \geq 3$} 
& connected? & \no \ (Th.~\ref{thm:binary_3_connected}/\ref{thm:iden_3_connected}) 
             & \no \ (Th.~\ref{thm:iden_3_connected}) 
             & \no \ (Th.~\ref{thm:binary_3_connected}) 
             & \yes \ (Th.~\ref{thm:idenbin_2_connected}) \\
& optimal?   & \no \ (Th.~\ref{thm:idenbin_3_optimal}) 
             & \no \ (Th.~\ref{thm:idenbin_3_optimal}) 
             & \no \ (Th.~\ref{thm:idenbin_3_optimal}) 
             & \no \ (Th.~\ref{thm:idenbin_3_optimal})
\end{NiceTabular}
\caption{Overview of our results. 
The top row indicates the class of utility functions considered, ``connected?''~refers to whether the EF1 exchange graph is always connected, and ``optimal?''~refers to whether there always exists an optimal EF1 exchange path between any two EF1 allocations provided that the EF1 exchange graph is connected.} \label{tab:results}
\end{table*}

As is often done in fair division, we assume that every agent is equipped with an additive utility function.
We consider an ``exchange graph'' with allocations as vertices.
The first question we study is whether it is always possible for agents to reach a target EF1 allocation from an initial EF1 allocation by exchanging goods sequentially with each other while maintaining the EF1 property in all the intermediate allocations; in other words, we ask whether the subgraph of the exchange graph consisting of all EF1 allocations is \emph{connected}. 
The second question is whether we could perform this exchange process using as few exchanges as would be required if the intermediate allocations need not be EF1; that is, whether there exists an EF1 exchange path which is \emph{optimal} in terms of the number of exchanges required.
Note that each agent's bundle size remains unchanged throughout the process since every operation is an exchange of goods. 
Our formal model is described in \Cref{sec:prelim}.

In \Cref{sec:two_agents}, we investigate the setting where there are only two agents. 
Perhaps surprisingly, we establish negative results even for this setting: the EF1 exchange graph may not be connected, and even for those instances in which it is connected, optimal EF1 exchange paths may not exist between EF1 allocations. 
Therefore, we consider restricted classes of utility functions. 
We show that an optimal EF1 exchange path always exists between any two EF1 allocations if the utilities are identical \emph{or} binary; this implies the connectivity of the EF1 exchange graph in these cases as well.

In \Cref{sec:three_agents}, we explore the general setting of three or more agents. 
Interestingly, we show that finding the smallest number of exchanges between two allocations is NP-hard in this setting even if we disregard the EF1 restriction. 
In addition, we establish that deciding whether an EF1 exchange path exists between two allocations is PSPACE-complete, and deciding whether an optimal such path exists is NP-hard even for four agents with identical utilities. 
We also examine restricted utility functions in more detail.
We show that while connectivity of the EF1 exchange graph is guaranteed for identical binary utilities, the same holds neither for identical utilities nor for binary utilities separately. Furthermore, the optimality of EF1 exchange paths cannot be guaranteed even for identical binary utilities.
Overall, our findings demonstrate that the case of three or more agents is much less tractable than that of two agents in our setting.

With the exception of hardness results (Theorems \ref{thm:dist_nphard}, \ref{thm:gen_2_connected_pspace}, and \ref{thm:iden_4_optimal_nphard}), our results are summarized in \Cref{tab:results}.
For the positive results, we also show that the corresponding exchange paths can be found in polynomial time.
Additionally, in \Cref{app:transfer}, we present results for an alternative setting where \emph{transferring} goods is allowed instead of, or in addition to, exchanging them.

\section{Preliminaries} \label{sec:prelim}

Let $N$ be a set of $n \geq 2$ agents, and $M$ be a set of $m \geq 1$ goods.
We typically denote the agents by $1, \dots ,n$ and the goods by $g_1, \dots ,g_m$.
A \emph{bundle} is a (possibly empty) subset of goods.
An \emph{allocation} $\mathcal{A} = (A_1, \ldots, A_n)$ is an ordered partition of $M$ into $n$ bundles such that bundle $A_i$ is allocated to agent $i \in N$.
An \emph{(allocation) size vector} $\vec{s} = (s_1, \ldots, s_n)$ is a vector of non-negative integers such that $\sum_{i \in N} s_i = m$ and $s_i = |A_i|$ for all $i \in N$.

Given $N$, $M$, and $\vec{s}$, define the \emph{exchange graph} $G = G(N, M, \vec{s})$ as a simple undirected graph with the following properties: the set of vertices consists of all allocations~$\mathcal{A}$ with size vector $\vec{s}$, and the set of edges consists of all pairs $\{\mathcal{A}, \mathcal{B}\}$ such that $\mathcal{B} = (B_1, \ldots, B_n)$ can be obtained from $\mathcal{A} = (A_1, \ldots, A_n)$ by having two agents exchange one pair of goods with each other---that is, there exist distinct agents $i, j \in N$ and goods $g \in A_i$ and $g' \in A_j$ such that $B_i = (A_i \cup \{g'\}) \setminus \{g\}$, $B_j = (A_j \cup \{g\}) \setminus \{g'\}$, and $B_k = A_k$ for all $k \in N \setminus \{i, j\}$. 
Note that the exchange graph is a non-empty connected graph. 
A path from one allocation to another on the graph is called an \emph{exchange path}. 
The \emph{distance} between two allocations is the length of a shortest exchange path between them.

Each agent $i \in N$ has a \emph{utility function} $u_i : 2^M \to \mathbb{R}_{\geq 0}$ that maps bundles to non-negative real numbers.
We write $u_i (g)$ instead of $u_i (\{g\})$ for a single good $g \in M$, and assume that the utility functions are additive, i.e., $u_i (M') = \sum_{g \in M'} u_i (g)$ for all $i \in N$ and $M' \subseteq M$. 
The utility functions are \emph{identical} if $u_i = u_j$ for all $i, j \in N$---we shall use $u$ to denote the common utility function in this case.
The utility functions are \emph{binary} if $u_i(g) \in \{0, 1\}$ for all $i \in N$ and $g \in M$.
An allocation $\mathcal{A}$ is \emph{envy-free up to one good (EF1)} if for all pairs $i, j \in N$ such that $A_j \neq \emptyset$, there exists a good $g \in A_j$ such that $u_i(A_i) \geq u_i(A_j \setminus \{g\})$.

Given $N$, $M$, $\vec{s}$, and $(u_i)_{i \in N}$, define the \emph{EF1 exchange graph} $H = H(N, M, \vec{s}, (u_i)_{i \in N})$ as the subgraph of the exchange graph $G$ induced by all EF1 allocations, i.e., $H$ contains all vertices in $G$ that correspond to EF1 allocations and all edges in $G$ incident to two EF1 allocations. 
As we shall see later, EF1 exchange graphs are not always connected, unlike exchange graphs.
An exchange path using only the edges in~$H$ is called an \emph{EF1 exchange path}. 
An EF1 exchange path is \emph{optimal} if its length is equal to the distance between the two corresponding allocations (in $G$). 

An \emph{instance} consists of a set of agents $N$, a set of goods $M$, a size vector $\vec{s}$, and agents' utility functions $(u_i)_{i \in N}$.

\section{Two Agents} \label{sec:two_agents}

In this section, we examine properties of the EF1 exchange graph when there are only two agents.
We remark that this is an important special case in fair division and has been the focus of several prior papers in the area.\footnote{\citet[Sec.~1.1.1]{PlautRo20} discussed the significance of the two-agent setting in detail.}

We first consider the question of whether the EF1 exchange graph is necessarily connected. One may intuitively think that with only two agents, an EF1 exchange path is guaranteed between any two EF1 allocations because the two agents only need to consider the envy between themselves.
An agent may then carefully select a good from her bundle to exchange with the other agent so as to ensure that the subsequent allocation is also EF1. 
However, this in fact cannot always be done, as our first result shows.

\begin{theorem} \label{thm:gen_2_connected}
There exists an instance with $n = 2$ agents with the same ordinal preferences over the goods such that the EF1 exchange graph is disconnected.
\end{theorem}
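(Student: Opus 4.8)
The plan is to exhibit a single explicit instance together with two designated EF1 allocations, an initial one $\mathcal{P}$ and a target one $\mathcal{Q}$, and then to argue that no EF1 exchange path connects them. Since both agents must share a common ordinal ranking of the goods, the only freedom lies in the cardinal magnitudes, so I would fix a strict common order $g_1 \succ g_2 \succ \dots \succ g_m$ and tune $u_1,u_2$ to agree on this order while differing in magnitude. A useful first step is to dispose of the easy cases: when some bundle has size at most one, a single exchange can turn any holding into any other, so the exchange graph is complete and certainly stays connected, and a direct check shows that the balanced case of two goods each is always connected as well. This pushes the search toward a size vector in which both bundles are somewhat larger, for instance $\vec{s}=(3,3)$.

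The mechanism I would exploit is \emph{mutual envy}, which is unavailable under identical utilities but becomes possible here: with a common ordinal ranking but different cardinal values, agent~$1$ can strictly envy agent~$2$ while agent~$2$ simultaneously strictly envies agent~$1$, provided the two bundles are rank-incomparable (one holds a few top-ranked goods, the other holds several middling goods) and the two agents weight the top of the ranking differently. At such an allocation the two EF1 constraints are close to tight but point in opposite directions: any exchange that makes $A_1$ more valuable in the common order pushes agent~$2$ past her one-good threshold, whereas any exchange that makes $A_1$ less valuable pushes agent~$1$ past hers. I would engineer the cardinal gaps so that an entire family of allocations lying ``between'' $\mathcal{P}$ and $\mathcal{Q}$ fails EF1 for exactly this reason, and then certify the resulting cut either by a direct exhaustive check of the EF1 status of the finitely many allocations with this size vector, or by identifying an invariant that every EF1 exchange preserves and that takes different values on the two sides.

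The main obstacle is that single-good exchanges are remarkably good at locally \emph{rebalancing}: from a typical EF1 allocation one can usually swap a pair of goods and remain EF1, so it appears one cannot isolate a single EF1 vertex, and the construction must instead separate two multi-vertex clusters. The delicate point is therefore to block \emph{every} bridging exchange at once—including the harmless-looking ones in which one agent simply hands over a good the other agent covets, which tend to \emph{reduce} envy and hence preserve EF1—while still keeping both $\mathcal{P}$ and $\mathcal{Q}$ themselves EF1. Because the common ordinal ranking forces the two agents to agree on which goods are better and to differ only in how much, achieving this requires choosing the cardinal magnitudes so that each potential bridging exchange overshoots in value for precisely one of the two agents; I would complete the argument by verifying the EF1 status of the relevant allocations and their exchange neighbors explicitly.
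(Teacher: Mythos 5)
Your proposal correctly identifies the mechanism that the paper's own construction exploits: an EF1 allocation at which the two agents \emph{mutually} envy each other with both EF1 constraints essentially tight, so that any single exchange shifts value in one direction and pushes exactly one agent past her one-good threshold. (In the paper's instance the goods have values $3,3,2,2,2,2,0,0$ for agent~$1$ and $3,3,1,1,1,1,0,0$ for agent~$2$, the critical allocation gives $\{g_1,g_2,g_7,g_8\}$ to agent~$1$ and the four middling goods to agent~$2$, and this vertex is in fact \emph{isolated} in the EF1 exchange graph --- every one of its exchange neighbors violates EF1 for one of the two agents --- which immediately disconnects it from the other EF1 allocation obtained by swapping the bundles.)

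However, as written your argument has a genuine gap: the theorem is an existence statement whose entire content is the explicit instance, and you never produce one. Phrases such as ``I would engineer the cardinal gaps so that an entire family of allocations \ldots fails EF1'' and ``I would complete the argument by verifying the EF1 status of the relevant allocations \ldots explicitly'' defer precisely the step that constitutes the proof. It is not a priori clear that the overshooting you describe can be arranged simultaneously for \emph{every} exchange out of a single EF1 allocation while keeping that allocation (and a second, distinct EF1 allocation) valid --- indeed the paper emphasizes that the naive intuition points the other way --- so until concrete utilities and bundles are exhibited and checked, the claim remains a plausible conjecture rather than a theorem. Secondary unverified assertions (e.g.\ that the size vector $(2,2)$ always yields a connected EF1 exchange graph) are harmless to the main line but are likewise stated without proof. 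To repair the proposal, supply explicit numbers realizing your mutual-envy configuration and carry out the finite case check on the exchange neighbors of the critical allocation.
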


\begin{proof}
Consider the utility of the goods as follows:
\begin{center}
\begin{tabular}{c|cccccccc}
$g$      & $g_1$ & $g_2$ & $g_3$ & $g_4$ & $g_5$ & $g_6$ & $g_7$ & $g_8$ \\ \hline
$u_1(g)$ & $3$   & $3$   & $2$   & $2$   & $2$   & $2$   & $0$   & $0$   \\
$u_2(g)$ & $3$   & $3$   & $1$   & $1$   & $1$   & $1$   & $0$   & $0$   \\
\end{tabular}
\end{center}
Let $\mathcal{A}$ and $\mathcal{B}$ be allocations such that $A_1 = B_2 = \{g_1, g_2, g_7, g_8\}$ and $A_2 = B_1 = \{g_3, g_4, g_5, g_6\}$---it can be verified that both $\mathcal{A}$ and $\mathcal{B}$ are EF1. 
If there exists an EF1 exchange path between $\mathcal{A}$ and $\mathcal{B}$, then there exists an EF1 allocation $\mathcal{A}'$ adjacent to $\mathcal{A}$ on the exchange path. 
Without loss of generality, $\mathcal{A}'$ can be reached from $\mathcal{A}$ by exchanging $g_3$ with either $g_1$ or $g_7$. 
If $g_3$ is exchanged with $g_1$, then agent~$1$ envies agent~$2$ by more than one good. 
If $g_3$ is exchanged with $g_7$, then agent~$2$ envies agent~$1$ by more than one good.
Therefore, neither of these exchanges leads to an EF1 allocation, so $\mathcal{A}'$ cannot be EF1. Hence, no EF1 exchange path exists between $\mathcal{A}$ and $\mathcal{B}$.
\end{proof}

Next, we consider the question of whether an \emph{optimal} EF1 exchange path always exists between two EF1 allocations. 
By \Cref{thm:gen_2_connected}, even an EF1 exchange path may not exist, so an optimal such path does not necessarily exist either.
We therefore focus on instances in which the EF1 exchange graph is \emph{connected}. 
It turns out that even for such instances, an optimal EF1 exchange path still might not exist.

\begin{theorem}  \label{thm:gen_2_optimal}
There exists an instance with $n = 2$ agents satisfying the following properties: the EF1 exchange graph is connected, but for some pair of EF1 allocations, no optimal EF1 exchange path exists between them.
\end{theorem}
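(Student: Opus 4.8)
The plan is to reduce the claim to one explicit six-good instance, so I would begin by recording what optimality means for two agents. With $n=2$ every exchange swaps a single good of agent~1 for a single good of agent~2, so passing from $\mathcal{A}$ to $\mathcal{B}$ forces agent~1 to shed exactly the goods in $A_1 \setminus B_1$ and acquire those in $B_1 \setminus A_1$. These sets have the same size $d$, and each exchange removes and adds exactly one good on agent~1's side, so the distance in $G$ is precisely $d = |A_1 \setminus B_1|$ and no shortest path ever touches a good of $A_1 \cap B_1$. Hence an optimal EF1 path exists iff the $d$ removals and $d$ additions can be interleaved so that every partial allocation is EF1. To prove the theorem it therefore suffices to exhibit two EF1 allocations at distance $d$ for which \emph{every} such interleaving passes through a non-EF1 allocation, while a longer EF1 path still exists and the EF1 exchange graph is connected.

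The construction I have in mind uses goods $x,y,z,w,R,P$ with bundle sizes $(3,3)$ and the utilities
\begin{center}
\begin{tabular}{c|cccccc}
$g$      & $x$ & $y$ & $z$ & $w$ & $R$ & $P$ \\ \hline
$u_1(g)$ & $3$ & $1$ & $4$ & $0$ & $0$ & $10$ \\
$u_2(g)$ & $4$ & $2$ & $5$ & $1$ & $10$ & $0$
\end{tabular}
\end{center}
Here $R$ is prized only by agent~2 and $P$ only by agent~1; these two ``polarizing'' goods are what put \emph{both} agents on the knife-edge of envy. I would take $\mathcal{A}, \mathcal{B}$ with $A_1 = \{R,x,y\}$ and $B_1 = \{R,z,w\}$, so that $d=2$ and agent~1 keeps $R$ while exchanging $\{x,y\}$ for $\{z,w\}$. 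One checks that both $\mathcal{A}$ and $\mathcal{B}$ are EF1, with \emph{equality} in each agent's EF1 constraint. The four candidate midpoints are the allocations in which agent~1 holds $R$ together with one of $x,y$ and one of $z,w$, and I would verify that in each of them the rebalancing tips exactly one agent strictly past its boundary: agent~1 whenever it is left with the low-value pair (bundles $\{R,x,w\}$ and $\{R,y,w\}$), and agent~2 whenever agent~1's pair is too valuable to agent~2 (bundles $\{R,x,z\}$ and $\{R,y,z\}$). Since these four allocations are exactly the intermediate allocations of all length-$2$ exchange paths, all such paths are blocked, so no optimal EF1 path between $\mathcal{A}$ and $\mathcal{B}$ exists.

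It then remains to show that the EF1 exchange graph is connected, so that $\mathcal{A}$ and $\mathcal{B}$ are genuinely reachable, only not optimally. The escape route has agent~1 temporarily surrender $R$, which releases the tightness: I would exhibit the length-$3$ EF1 path $\{R,x,y\} \to \{x,y,z\} \to \{y,z,w\} \to \{R,z,w\}$ (each step a valid single exchange, and each allocation EF1). For global connectivity I would argue that every EF1 allocation lies within one exchange of an allocation containing $P$, and that all allocations containing $P$ are mutually reachable inside the EF1 graph, since their non-$P$ parts range over all pairs and form a connected Johnson-type graph; a short finite check over the $\binom{6}{3}=20$ allocations confirms that the only non-EF1 ones are the four blocked midpoints. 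I expect the main obstacle to be exactly this simultaneous calibration of the utilities: forcing all four midpoints to violate EF1 while keeping both endpoints EF1 pushes the instance toward the disconnected behavior of \Cref{thm:gen_2_connected}, so the delicate point is to seal off every shortest route without also severing the longer detours—which is precisely what holding $P$ (equivalently, letting agent~1 drop $R$) is engineered to keep open.
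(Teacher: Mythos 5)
Your proposal is correct and takes essentially the same approach as the paper's proof: an explicit six-good, size-$(3,3)$ instance with two ``polarizing'' goods each valued highly by exactly one agent, a distance-$2$ pair of EF1 allocations all of whose midpoints violate EF1, and connectivity established by routing through a hub of safe allocations (the paper's hub fixes its two polarizing goods with their natural owners; yours routes through the allocations in which agent~1 holds $P$). I checked your numbers: both endpoints are EF1 with equality, all four midpoints fail EF1 (agent~2 is over the threshold at $\{R,x,z\}$ and $\{R,y,z\}$, agent~1 at $\{R,x,w\}$ and $\{R,y,w\}$), the length-$3$ detour is EF1 throughout, and among all $\binom{6}{3}=20$ allocations the only non-EF1 ones are indeed those four midpoints, so the $P$-hub connectivity argument goes through.
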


\begin{proof}
Consider $\vec{s} = (3, 3)$ and the utility of the goods as follows:
\begin{center}
\begin{tabular}{c|cccccc}
$g$      & $g_1$ & $g_2$ & $g_3$ & $g_4$ & $g_5$ & $g_6$ \\ \hline
$u_1(g)$ & $5$   & $3$   & $1$   & $0$   & $2$   & $2$   \\
$u_2(g)$ & $0$   & $3$   & $1$   & $5$   & $2$   & $2$   \\ 
\end{tabular}
\end{center}
Let $\mathcal{B}$ be the allocation such that $B_1 = \{g_1, g_2, g_3\}$ and $B_2 = \{g_4, g_5, g_6\}$---it can be verified that $\mathcal{B}$ is EF1.
We first prove that the EF1 exchange graph is connected by constructing an EF1 exchange path between any EF1 allocation $\mathcal{A}$ and the EF1 allocation~$\mathcal{B}$.
If $g_1$ is not with agent~$1$ or $g_4$ is not with agent~$2$ in~$\mathcal{A}$, perform any exchange involving $g_1$ and/or $g_4$ so that $g_1$ is now with agent $1$ and $g_4$ is now with agent~$2$. 
After the exchange, for each $i \in \{1, 2\}$, agent $i$'s bundle is worth at least $5$ to her, while any two goods in agent $(3 - i)$'s bundle are worth at most $5$ to agent $i$, so the allocation is EF1. 
Now, we can exchange the goods in $\{g_2, g_3, g_5, g_6\}$ in an arbitrary order to reach $\mathcal{B}$ after at most two more exchanges.

We next prove that an optimal EF1 exchange path between allocations $\mathcal{C}$ and $\mathcal{D}$ does not exist, where $C_1 = \{g_2, g_3, g_4\}$, $C_2 = \{g_1, g_5, g_6\}$, $D_1 = \{g_4, g_5, g_6\}$, and $D_2 = \{g_1, g_2, g_3\}$; it can be verified that both $\mathcal{C}$ and $\mathcal{D}$ are EF1, and the distance between $\mathcal{C}$ and $\mathcal{D}$ is $2$ (through exchanging $g_2 \leftrightarrow g_5$ and $g_3 \leftrightarrow g_6$). 
Suppose there exists an optimal EF1 exchange path between $\mathcal{C}$ and $\mathcal{D}$, and let $\mathcal{C}'$ be the EF1 allocation between $\mathcal{C}$ and~$\mathcal{D}$ on the exchange path. 
Since $\mathcal{C}$ and $\mathcal{C}'$ are adjacent, one good from $\{g_2, g_3\}$ must be exchanged with one good from $\{g_5, g_6\}$ in $\mathcal{C}$ to reach $\mathcal{C}'$. 
However, no matter which goods are exchanged with this restriction, there exists $i \in \{1, 2\}$ such that agent $i$'s bundle is worth $3$ to her and agent $(3 - i)$'s bundle is worth $5 + 5$ to agent $i$, contradicting the EF1 property of $\mathcal{C}'$. 
Therefore, no optimal EF1 exchange path exists between $\mathcal{C}$ and~$\mathcal{D}$. 
\end{proof}

In light of these negative results, we turn our attention to special classes of utility functions: identical utilities and binary utilities. 
We prove that for these two classes of utility functions, the EF1 exchange graph is always connected, and moreover, an optimal EF1 exchange path exists between every pair of EF1 allocations.

\begin{theorem} \label{thm:iden_2_optimal}
Let an instance with $n = 2$ agents and identical utilities be given. 
Then, the EF1 exchange graph is connected. 
Moreover, there exists an optimal EF1 exchange path between any two EF1 allocations, and this path can be computed in polynomial time.
\end{theorem}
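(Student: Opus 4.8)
The plan is to exploit the fact that, with a single common utility function $u$, the EF1 test for two agents collapses to a simple balance condition: an allocation $(A_1,A_2)$ is EF1 if and only if $u(A_1)-\max_{g\in A_1}u(g)\le u(A_2)$ and $u(A_2)-\max_{g\in A_2}u(g)\le u(A_1)$; equivalently, writing $A_r$ for the bundle of larger value, $u(A_r)-u(A_{3-r})\le \max_{g\in A_r}u(g)$. First I would record the structural observation that, for two agents, the distance in $G$ between $\mathcal A$ and $\mathcal B$ equals $|A_1\setminus B_1|=:k$, and that every shortest exchange path consists solely of \emph{productive} swaps, i.e., each step moves one good of $S:=A_1\setminus B_1$ to agent~$2$ and one good of $T:=B_1\setminus A_1$ to agent~$1$ (each swap changes $|A_1^{(t)}\setminus B_1|$ by at most one, so a length-$k$ path must decrease it by exactly one at every step). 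Consequently it suffices to order the $k$ productive swaps so that every intermediate allocation is EF1; such a path is automatically optimal, and its existence also yields connectivity.

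The core of the argument is then the following inductive step: from any EF1 allocation $\mathcal X\neq\mathcal B$ whose remaining misplaced sets $S,T$ are nonempty, there is a productive swap producing another EF1 allocation, after which the misplaced sets shrink to $S\setminus\{a\}$ and $T\setminus\{b\}$ and we recurse. To locate such a swap I would track two quantities: the imbalance $d:=u(X_1)-u(X_2)$ and the position of a most valuable good. A productive swap of $a\in S$ for $b\in T$ changes the imbalance by exactly $2(u(b)-u(a))$, so the new imbalances attainable over all productive swaps are controlled by the extreme values of $S$ and $T$. The idea is to select, at each step, the productive swap whose resulting imbalance $|d'|$ is smallest (the \emph{most balancing} swap), and to argue EF1 using the slack guaranteed by EF1-ness of both endpoints: since $|d_{\mathcal X}|\le\max_{g\in X_r}u(g)$ and $|d_{\mathcal B}|\le\max_{g\in B_r}u(g)$, one can schedule the swaps so that the imbalance never leaves the interval spanned by $d_{\mathcal X}$ and $d_{\mathcal B}$, keeping $|d'|\le u^\ast:=\max_{g\in M}u(g)$ throughout.

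The main obstacle is verifying EF1 precisely when the bound $|d'|\le u^\ast$ is not by itself sufficient, namely when the richer bundle fails to contain a good of value at least $|d'|$. This forces a case analysis around the most valuable good $g^\ast$: if $g^\ast$ is not misplaced, it stays on one side throughout, and one must ensure the imbalance does not change sign in a way that leaves the newly richer bundle without a large enough good; if $g^\ast\in S\cup T$, one must schedule its transfer to occur while the imbalance is small, so that immediately after $g^\ast$ crosses sides it supplies the required slack to the bundle that becomes richer. I expect the bulk of the work to be this scheduling argument, handled by choosing at each step, according to the sign of $d$, whether to move a large good out of the richer bundle or into it, and then verifying that the chosen swap preserves the invariant. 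Finally, since each inductive step inspects $O(k^2)$ candidate swaps and there are $k\le m$ steps, the path is produced in polynomial time; optimality is immediate because only productive swaps are used, and connectivity follows a fortiori.
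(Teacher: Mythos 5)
Your high-level skeleton matches the paper's: induct on the distance $t=|A_1\setminus B_1|$, observe that an optimal path must consist only of ``productive'' swaps (one good of $A_1\setminus B_1$ for one good of $A_2\setminus B_2$), and reduce everything to finding a single productive swap that preserves EF1. However, the core of the proof --- identifying \emph{which} productive swap to perform and proving it preserves EF1 --- is exactly where your proposal does not go through. The invariant you propose to maintain, namely that the imbalance $d$ ``never leaves the interval spanned by $d_{\mathcal X}$ and $d_{\mathcal B}$,'' is false and in fact unachievable. Take $u(x_1)=u(x_2)=5$, $u(y_1)=10$, $u(y_2)=0$, with $A_1=\{x_1,x_2\}$, $A_2=\{y_1,y_2\}$ and $B_1=A_2$, $B_2=A_1$: both endpoints have imbalance $0$, so your interval is the single point $\{0\}$, yet every productive swap changes the imbalance by $2(u(y_j)-u(x_i))=\pm 10$. (An optimal EF1 path still exists here --- swap $x_1\leftrightarrow y_1$ first --- but it is certified by the presence of the value-$10$ good in the richer bundle, not by any containment of the imbalance in that interval.) Your fallback, the bound $|d'|\le u^\ast$, is correctly flagged by you as insufficient, since EF1 needs a good of value at least $|d'|$ \emph{inside the richer bundle}; but the ``scheduling argument'' around $g^\ast$ that is supposed to close this case is only announced (``I expect the bulk of the work to be this scheduling argument''), not carried out, and it is precisely the hard part.

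For comparison, the paper resolves this by sorting $X$ and $Y$ in non-increasing order of utility and pairing them index-by-index. It first tries swapping the top pair $(x_1,y_1)$; if that fails EF1, it extracts the inequality $\gamma+2\Delta_1>\max\{u(x),u(y_1)\}$ (where $x$ is a largest good of $A_1$ and $\Delta_k=u(y_k)-u(x_k)$), then uses the EF1-ness of the \emph{target} $\mathcal B$ to prove that some index $k\ge 2$ satisfies $2\Delta_k\le u(x)-\gamma$, and a separate argument to get the matching lower bound $2\Delta_k\ge\max\{u(x),u(y_1)\}-2u(y_1)-\gamma$. Together these place the new imbalance in $[-u(y_1),u(x)]$, where $x$ and $y_1$ are certificate goods that provably remain un-exchanged and thus witness EF1 on whichever side becomes richer. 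This interplay --- using EF1 of $\mathcal B$ to constrain the multiset of increments $\{2\Delta_k\}$, and keeping two specific large goods stationary as EF1 certificates --- is the missing idea; neither the ``most balancing swap'' greedy nor the interval invariant substitutes for it.
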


\begin{theorem} \label{thm:binary_2_optimal}
Let an instance with $n = 2$ agents and binary utilities be given. 
Then, the EF1 exchange graph is connected. 
Moreover, there exists an optimal EF1 exchange path between any two EF1 allocations, and this path can be computed in polynomial time.
\end{theorem}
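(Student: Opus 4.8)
The plan is to exploit the rigid structure of binary utilities with two agents. Identify each allocation with its agent-$1$ bundle $S = A_1$ (since $A_2 = M \setminus S$ and $|S| = s_1$ is fixed), and classify each good by its type $(u_1(g), u_2(g)) \in \{0,1\}^2$. Writing $a = u_1(A_1)$ and $d = u_2(A_2)$, and letting $W_i$ denote the total number of goods valued by agent $i$, I would first establish the characterization that $\mathcal{A}$ is EF1 if and only if $a \ge \lfloor W_1/2 \rfloor$ and $d \ge \lfloor W_2/2 \rfloor$; that is, each agent retains at least half (rounded down) of the goods she values. This follows because, for binary valuations, the ``up to one good'' removal can always target a good the envious agent values.

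Next I would pin down distances in $G$. An exchange removes one good from $A_1$ and inserts one, so for two allocations with agent-$1$ bundles $S$ and $T$ each exchange decreases $|S' \setminus T|$ by at most one; hence the distance is exactly $k := |S \setminus T| = |T \setminus S|$. Building an optimal EF1 path therefore means performing exactly $k$ exchanges, each removing a good of $D := S \setminus T$ from agent $1$ and inserting a good of $I := T \setminus S$, while staying EF1 throughout. Since EF1 depends only on type counts, the only relevant freedom is the order of moves and the pairing of removed with inserted goods, which reduces the whole problem to a scheduling question.

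The construction would be greedy: maintain the surpluses $P_1 = a - \lfloor W_1/2 \rfloor \ge 0$ and $P_2 = d - \lfloor W_2/2 \rfloor \ge 0$ and, at each step, pick a still-needed exchange keeping both nonnegative. If both surpluses are at least $1$, any remaining exchange is safe, as one exchange changes each of $a$ and $d$ by at most one. The delicate case is when a surplus vanishes, and this is where I expect the main difficulty. The leverage is that the target $\mathcal{D}$ is itself EF1: when $P_1 = 0$ we have $a = \lfloor W_1/2 \rfloor \le u_1(D_1)$, so the remaining exchanges must raise $a$ on net, which rules out the only configuration (all remaining removals valued by agent~$1$, all remaining insertions not) in which every exchange would lower $a$; thus a safe exchange for agent~$1$ exists, and if $P_2 \ge 1$ it is globally safe.

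The genuinely tight case $P_1 = P_2 = 0$ is the crux. Here EF1-ness of $\mathcal{D}$ yields $i'_1 \ge d'_1$ and $d'_2 \ge i'_2$ for the remaining removals and insertions, where the subscripts count goods valued by agent~$1$ and by agent~$2$ respectively. A short type-based case analysis then shows that removing a good valued only by agent~$2$, or inserting a good valued only by agent~$1$, is always safe; and when neither is available, the two inequalities force the remaining removals and insertions to consist only of goods valued by both agents and goods valued by neither, in equal numbers, so that pairing like types with like is safe. Hence a safe exchange always exists, the greedy procedure never stalls, and after $k$ steps it reaches $\mathcal{D}$ along an optimal EF1 path; connectivity of $H$ is an immediate corollary, and the whole procedure runs in polynomial time. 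I expect verifying the doubly tight case analysis carefully to be the principal obstacle.
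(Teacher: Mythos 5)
Your proposal is correct and follows essentially the same route as the paper: an inductive/greedy argument that performs one safe exchange at a time, splitting into cases according to whether each agent has slack ($|A_i\cap M_i|>|A_{3-i}\cap M_i|$ in the paper, $P_i\ge 1$ in your notation), and using the EF1-ness of the target allocation to derive exactly the counting inequalities you call $i'_1\ge d'_1$ and $d'_2\ge i'_2$ (the paper's $|X\cap M_1|\le|Y\cap M_1|$ and its counterpart for agent $2$). Your explicit $\lfloor W_i/2\rfloor$ characterization and direct type-count analysis in the doubly tight case replace the paper's construction of a bijection $\phi$ with $u_1(x)\le u_1(\phi(x))$, but this is only a difference of packaging, not of substance.
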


To establish these results, we shall prove by induction on $t$ that two EF1 allocations with distance $t$ have an optimal EF1 exchange path between them. 
For the base case $t = 0$, an optimal EF1 exchange path trivially exists. 
For the inductive step, let $t \geq 1$ be given, and assume the inductive hypothesis that any two EF1 allocations with distance $t - 1$ have an EF1 exchange path of length $t - 1$. 
Now, let $\mathcal{A} = (A_1, A_2)$ and $\mathcal{B} = (B_1, B_2)$ be any two EF1 allocations with distance $t$; this means that $|A_1 \setminus B_1| = |A_2 \setminus B_2| = t$.
Define $X = A_1 \setminus B_1 = \{x_1, \ldots, x_t\}$ and $Y = A_2 \setminus B_2 = \{y_1, \ldots, y_t\}$. 
We show that there exist goods $x_k \in X$ and $y_\ell \in Y$ such that exchanging them in $\mathcal{A}$ leads to an EF1 allocation $\mathcal{A}' = (A'_1, A'_2)$. 
If this is possible, then $|A'_1 \setminus B_1| = |A'_2 \setminus B_2| = t - 1$, which implies that the distance between $\mathcal{A}'$ and $\mathcal{B}$ is $t - 1$. 
By the inductive hypothesis, there exists an EF1 exchange path between $\mathcal{A}'$ and $\mathcal{B}$ of length $t - 1$. 
This means that there exists an EF1 exchange path between $\mathcal{A}$ and $\mathcal{B}$ via $\mathcal{A}'$ of length $t$, which is optimal, hence completing the proof.

For the time complexity, for each pair of goods from \mbox{$X \times Y$}, one can check in polynomial time whether exchanging them leads to an EF1 allocation. 
Since there are at most $t^2$ pairs of goods to check at each step, and there are $t$ steps in the path, the running time claim follows.

\begin{proof}[Proof of \Cref{thm:iden_2_optimal}]
We follow the notation and inductive outline described before this proof.
Assume that the goods in $X$ and $Y$ are arranged in non-increasing order of utilities, i.e., $u(x_i) \geq u(x_j)$ and $u(y_i) \geq u(y_j)$ whenever $i < j$. 
Denote $\Delta_k = u(y_k) - u(x_k)$ for all $k \in \{1, \ldots, t\}$. 
Define $A_1' = (A_1 \cup \{y_1\}) \setminus \{x_1\}$ and $A_2' = (A_2 \cup \{x_1\}) \setminus \{y_1\}$ to be the bundles after exchanging $x_1$ and $y_1$. 
If $(A_1', A_2')$ is EF1, we are done by induction.
Otherwise, we assume without loss of generality that in the allocation $(A_1', A_2')$, agent~$2$ envies agent~$1$ by more than one good. 
Let $x$ be a highest-utility good in $A_1$---we may assume that $x \neq x_k$ for all $k \geq 2$. Since $(A_1, A_2)$ is an EF1 allocation, we have $u(x) \geq \gamma := u(A_1) - u(A_2)$.

If both $x = x_1$ and $\Delta_1 < 0$ are true, then 
\begin{align*}
    u(A_2')
    = u(A_2) - \Delta_1 
    > u(A_2) 
    \geq u(A_1 \setminus \{x\}) 
    = u(A_1 \setminus \{x_1\}) 
    &= u(A_1' \setminus \{y_1\}),
\end{align*}
which shows that agent $2$ does not envy agent $1$ by more than one good in $(A_1', A_2')$---a contradiction. 
Therefore, we must have $x \neq x_1$ or $\Delta_1 \geq 0$.
If $x \neq x_1$, then both $x$ and $y_1$ belong to $A_1'$.
If $x = x_1$ and $\Delta_1 \ge 0$, then $y_1$ belongs to $A_1'$ and $u(y_1) \ge u(x)$.
Hence, in either case, we have
\begin{align*}
    \max \{u(x), u(y_1)\} < u(A_1') - u(A_2') 
    &= u(A_1) - u(A_2) + 2\Delta_1,
\end{align*}
which implies
\begin{align} \label{eq:k_bound}
    \gamma + 2\Delta_1 > \max \{u(x), u(y_1)\}.
\end{align}

We claim that there exists $k \in \{2, \ldots, t\}$ such that 
\begin{align} \label{eq:ineq_1}
    2\Delta_k \leq u(x) - \gamma .
\end{align}
Suppose on the contrary that $2\Delta_k > u(x) - \gamma$ for all $k \in \{2, \ldots, t\}$.
Since every good in $A_1$ has value at most $u(x)$ and every good in $B_1\setminus A_1$ has value at most $u(y_1)$, it holds that every good in $B_1$ has value at most $\max\{u(x), u(y_1)\}$.
As $(B_1, B_2)$ is an EF1 allocation, we have
\begin{align*}
    \max \{u(x), u(y_1)\} \geq u(B_1) - u(B_2) 
    &= (u(A_1) - u(A_2)) + \sum_{k=1}^t 2\Delta_k \\
    &= \gamma + 2\Delta_1 + \sum_{k=2}^t 2\Delta_k \\
    &\geq \gamma + 2\Delta_1 + \sum_{k=2}^t (u(x) - \gamma) 
    \geq \gamma + 2\Delta_1,
\end{align*}
where the last inequality holds because $u(x) \geq \gamma$ and \mbox{$t\ge 1$}.
This contradicts \eqref{eq:k_bound}. 
Therefore, let $k \in \{2, \ldots, t\}$ be an index that satisfies \eqref{eq:ineq_1}. We now claim that we must have
\begin{align} \label{eq:ineq_2}
    2\Delta_k \geq \max \{u(x), u(y_1)\} - 2u(y_1) - \gamma.
\end{align} 
Suppose on the contrary that $2\Delta_k < \max \{u(x), u(y_1)\} - 2u(y_1) - \gamma$. Then we have
\begin{align*}
    \max \{u(x), u(y_1)\} - 2u(y_1) - \gamma 
    &> 2\Delta_k \tag*{(by assumption)} \\
    &\geq -2u(x_k) \tag*{(since $u(y_k) \geq 0$)} \\
    &\geq -2u(x_1), \tag*{(since $u(x_k) \leq u(x_1)$)}
\end{align*}
which implies
\begin{align*}
    \gamma + 2\Delta_1 < \max \{u(x), u(y_1)\},
\end{align*}
contradicting \eqref{eq:k_bound}.
This establishes \eqref{eq:ineq_2}.

Combining inequalities \eqref{eq:ineq_1} and \eqref{eq:ineq_2}, we have
\begin{align*} 
    -u(y_1) &\leq \max \{u(x) - u(y_1), 0\} - u(y_1) \\ 
    &= \max \{u(x), u(y_1)\} - 2u(y_1) \\
    &\leq \gamma + 2\Delta_k \tag*{(by \eqref{eq:ineq_2})} \\ 
    &\leq u(x), \tag*{(by \eqref{eq:ineq_1})}
\end{align*}
which implies $\gamma + 2\Delta_k \in [-u(y_1), u(x)]$.
We claim that exchanging $x_k$ and $y_k$ results in an EF1 allocation, i.e., the allocation comprising $A_1'' = (A_1 \cup \{y_k\}) \setminus \{x_k\}$ and $A_2'' = (A_2 \cup \{x_k\}) \setminus \{y_k\}$ is EF1. This is because
\begin{align*}
    u(A_1'') - u(A_2'') = u(A_1) - u(A_2) + 2\Delta_k 
    &= \gamma + 2\Delta_k 
    \in [-u(y_1), u(x)],
\end{align*}
where $x \in A_1''$ and $y_1 \in A_2''$---note that $x$ ($\neq x_k$) and $y_1$ were not exchanged going from $\mathcal{A}$ to $\mathcal{A}''$. 
This completes the induction and therefore the proof.
\end{proof}

\begin{proof}[Proof of \Cref{thm:binary_2_optimal}]
We follow the notation and inductive outline described before the proof of \Cref{thm:iden_2_optimal}.
Recall that $X = A_1 \setminus B_1 = \{x_1, \ldots, x_t\}$ and $Y = A_2 \setminus B_2 = \{y_1, \ldots, y_t\}$.
Let $M_i = \{g \in M \mid u_i(g) = 1\}$ for $i \in \{1, 2\}$. 
Note that if $|A_i \cap M_i| > |A_{3-i} \cap M_i|$, then agent~$i$ does not envy agent $(3 - i)$ by more than one good after the exchange of any pair of goods. 
Therefore, if $|A_i \cap M_i| > |A_{3-i} \cap M_i|$ is true for both $i \in \{1, 2\}$, then exchanging any pair of goods from $X$ and $Y$ works.

Otherwise, suppose that $|A_i \cap M_i| \leq |A_{3-i} \cap M_i|$ is true for some $i \in \{1, 2\}$, and without loss of generality, assume that $i = 1$. 
We claim that $|X \cap M_1| \leq |Y \cap M_1|$.
Suppose by way of contradiction that $|X \cap M_1| > |Y \cap M_1|$. Then, 
\begin{align*}
    |B_1 \cap M_1| &= |((A_1 \setminus X) \cup Y) \cap M_1| \\
    &= |A_1 \cap M_1| - |X \cap M_1| + |Y \cap M_1| \\
    &\leq |A_2 \cap M_1| - 1 \\
    &= |((B_2 \setminus X) \cup Y) \cap M_1| - 1 \\
    &= |B_2 \cap M_1| - |X \cap M_1| + |Y \cap M_1| - 1 
    \leq |B_2 \cap M_1| - 2,
\end{align*}
which means that agent $1$ envies agent $2$ by more than one good in $\mathcal{B}$, contradicting the assumption that $\mathcal{B}$ is an EF1 allocation. 
Therefore, we must have $|X \cap M_1| \leq |Y \cap M_1|$. 
Thus, there exists a bijection $\phi : X \to Y$ such that $u_1(x) \leq u_1(\phi(x))$ for all $x \in X$; this can be obtained by ensuring that goods in $X \cap M_1$ are mapped to goods in $Y \cap M_1$.
Exchanging $x$ and $\phi(x)$ in $\mathcal{A}$ for any $x \in X$ will not make agent $1$ envy agent $2$ by more than one good.

Now, we consider two cases for agent $2$. 
If $|A_2 \cap M_2| > |A_1 \cap M_2|$, then agent $2$ does not envy agent~$1$ by more than one good after the exchange of any pair of goods. 
In particular, we can exchange $x$ and $\phi(x)$ for any $x \in X$, and we are done by induction. 
In the other case, we have $|A_2 \cap M_2| \leq |A_1 \cap M_2|$.
We claim that there exists some $x \in X$ such that $u_2(\phi(x)) \leq u_2(x)$.
Suppose on the contrary that $u_2(\phi(x)) > u_2(x)$ for all $x \in X$. 
This means that $u_2(x) = 0$ for all $x \in X$ and $u_2(y) = 1$ for all $y \in Y$, and hence $|Y \cap M_2| - |X \cap M_2| = t \geq 1$, where $t = |X| = |Y|$. 
Then, we have
\begin{align*}
    |B_2 \cap M_2| &= |((A_2 \setminus Y) \cup X) \cap M_2| \\
    &= |A_2 \cap M_2| - |Y \cap M_2| + |X \cap M_2| \\
    &\leq |A_1 \cap M_2| - 1 \\
    &= |((B_1 \setminus Y) \cup X) \cap M_2| - 1 \\
    &= |B_1 \cap M_2| - |Y \cap M_2| + |X \cap M_2| - 1 
    \leq |B_1 \cap M_2| - 2,
\end{align*}
which means that agent $2$ envies agent $1$ by more than one good in $\mathcal{B}$, contradicting the assumption that $\mathcal{B}$ is an EF1 allocation. 
Thus, $u_2(\phi(x_k)) \leq u_2(x_k)$ for some $x_k \in X$. 
In particular, exchanging $x_k$ and $\phi(x_k)$ in $\mathcal{A}$ does not make agent $2$ envy agent $1$ by more than one good.
Hence, exchanging $x_k \in X$ and $\phi(x_k) \in Y$ leads to an EF1 allocation, completing the induction and therefore the proof.
\end{proof}

Since the EF1 exchange graph $H$ is a subgraph of the exchange graph $G$, the distance between two allocations (in $G$) cannot be greater than the length of the shortest EF1 exchange path between the two allocations in $H$.
In Theorems \ref{thm:iden_2_optimal} and \ref{thm:binary_2_optimal}, the polynomial-time algorithms find EF1 exchange paths in $H$ that are optimal in the exchange graph~$G$; such exchange paths must also be the shortest possible ones in $H$.

\section{Three or More Agents} \label{sec:three_agents}

In this section, we address the general case where there are more than two agents. 
We shall see that this case is less tractable, both existentially and computationally.

Before we present our results on the EF1 exchange graph, we provide some insights on finding the distance between two allocations regardless of EF1 considerations.  
Observe that finding this distance for two agents is simple, as the distance equals the number of goods from each of the two bundles that need to be exchanged.
However, this task is not so trivial for more agents---in fact, we shall show that it is NP-hard.
To this end, we draw an interesting connection between this distance and the maximum number of disjoint cycles in a graph constructed based on the allocations. 
We start off by detailing how to construct such a graph.

Let $N$, $M$, and $\vec{s}$ be given, and let $\mathcal{A} = (A_1, \ldots, A_n)$ and $\mathcal{B} = (B_1, \ldots, B_n)$ be two allocations with size vector $\vec{s}$. 
Define $G_{\mathcal{A}, \mathcal{B}} = (V_{\mathcal{A}, \mathcal{B}}, E_{\mathcal{A}, \mathcal{B}})$ as a directed multigraph consisting of a set of vertices $V_{\mathcal{A}, \mathcal{B}} = N$ and a set of (directed) edges $E_{\mathcal{A}, \mathcal{B}} = \{e_1, \ldots, e_m\}$. 
For each $k \in \{1, \ldots, m\}$, the edge $e_k$ represents the good $g_k$, and $e_k = (i, j)$ if and only if $g_k \in A_i \cap B_j$, i.e., $g_k$ is in agent $i$'s bundle in $\mathcal{A}$ and in agent $j$'s bundle in $\mathcal{B}$ (possibly $i = j$).
Note that for every vertex~$i$, its indegree is equal to its outdegree, which is equal to $s_i$, the number of goods in agent $i$'s bundle.
Let $\mathfrak{C}_{\mathcal{A}, \mathcal{B}}$ be the collection of partitions of $E_{\mathcal{A}, \mathcal{B}}$ into directed circuits.\footnote{Recall that a directed circuit is a non-empty walk such that the first vertex and the last vertex coincide; we consider a self-loop to be a directed circuit as well.} 
Note that $\mathfrak{C}_{\mathcal{A}, \mathcal{B}}$ is non-empty---for example, a partition of $E_{\mathcal{A}, \mathcal{B}}$ into directed circuits can be constructed in the following way: start with any vertex with an outdegree of at least~$1$, traverse a path until some vertex $v$ is encountered for the second time, remove the resulting directed cycle from $v$ to itself, and repeat on the remaining graph; the remaining graph still satisfies the condition that every vertex has its indegree equal to its outdegree.
Let $c^*_{\mathcal{A}, \mathcal{B}} = \max_{C_{\mathcal{A}, \mathcal{B}} \in \mathfrak{C}_{\mathcal{A}, \mathcal{B}}} |C_{\mathcal{A}, \mathcal{B}}|$ be the maximum cardinality of such a partition. 
Note that a partition with the maximum cardinality must consist only of directed \emph{cycles}; otherwise, if it contains a circuit that passes through a vertex more than once, we can break the circuit into two smaller circuits, contradicting the fact that this partition has the maximum cardinality. 
We claim that the distance between allocations $\mathcal{A}$ and $\mathcal{B}$ is $m - c^*_{\mathcal{A}, \mathcal{B}}$.

\begin{proposition} \label{lem:distance_between_allocations}
Let $N$, $M$, and $\vec{s}$ be given, and let $\mathcal{A}$ and $\mathcal{B}$ be two allocations with size vector $\vec{s}$. 
Then, the distance between $\mathcal{A}$ and $\mathcal{B}$ is $m - c^*_{\mathcal{A}, \mathcal{B}}$.
\end{proposition}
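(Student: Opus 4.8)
The plan is to establish the claimed equality by proving the two matching inequalities: the distance between $\mathcal{A}$ and $\mathcal{B}$ is at most $m - c^*_{\mathcal{A}, \mathcal{B}}$ (an explicit construction) and at least $m - c^*_{\mathcal{A}, \mathcal{B}}$ (a potential-function argument). Throughout I will exploit the dictionary between exchanges and the multigraph $G_{\mathcal{A}, \mathcal{B}}$: a self-loop at $i$ is a good already sitting in its target bundle, and a nontrivial directed cycle $i_1 \to i_2 \to \cdots \to i_\ell \to i_1$ records goods $g_1, \ldots, g_\ell$ with $g_r \in A_{i_r} \cap B_{i_{r+1}}$ (indices mod $\ell$) that must be rotated one step forward.

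For the upper bound I would fix a partition of $E_{\mathcal{A}, \mathcal{B}}$ into $c^*_{\mathcal{A}, \mathcal{B}}$ directed cycles and resolve each cycle independently, which is legitimate because the cycles are edge-disjoint and thus involve disjoint sets of goods. I claim a single cycle of length $\ell$ can be resolved using exactly $\ell - 1$ exchanges: repeatedly have agent $i_1$ exchange the (wrongly-held) good currently in her bundle with the next agent along the cycle. Concretely, $i_1$ first swaps $g_1$ with $g_2$ at $i_2$ (placing $g_1$ correctly and leaving $i_1$ holding $g_2$), then swaps $g_2$ with $g_3$ at $i_3$, and so on; after the $(\ell-1)$-th swap every good in the cycle lies in its $\mathcal{B}$-bundle. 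Summing over all cycles gives $\sum_j (\ell_j - 1) = m - c^*_{\mathcal{A}, \mathcal{B}}$ exchanges (self-loops cost $0$), so the distance is at most $m - c^*_{\mathcal{A}, \mathcal{B}}$.

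For the lower bound I would track the quantity $m - c^*_{\mathcal{X}, \mathcal{B}}$ as a potential on allocations $\mathcal{X}$, noting it equals $m - c^*_{\mathcal{A},\mathcal{B}}$ at $\mathcal{X} = \mathcal{A}$ and equals $0$ at $\mathcal{X} = \mathcal{B}$ (where $G_{\mathcal{B}, \mathcal{B}}$ is all self-loops, so $c^*_{\mathcal{B},\mathcal{B}} = m$). It then suffices to show that a single exchange changes $c^*_{\cdot, \mathcal{B}}$ by at most one in absolute value; telescoping along any exchange path from $\mathcal{A}$ to $\mathcal{B}$ immediately forces its length to be at least $m - c^*_{\mathcal{A}, \mathcal{B}}$. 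The key observation is how an exchange acts on the graph: if agents $i, j$ swap goods $g$ (with edge $(i,p)$, where $p$ is $g$'s target) and $g'$ (with edge $(j,q)$), the resulting graph $G_{\mathcal{X}', \mathcal{B}}$ is obtained from $G_{\mathcal{X}, \mathcal{B}}$ by replacing these two edges with $(j,p)$ and $(i,q)$ — that is, by swapping the two tails while keeping the two heads fixed. This operation is an involution, so it is enough to bound the change in one direction. Given a maximum circuit partition of $G_{\mathcal{X}, \mathcal{B}}$, I would analyze two cases for the edges carrying $g$ and $g'$: if they lie in the same circuit, the tail-swap splits that circuit into two (gaining one part), and if they lie in different circuits, it merges them into one (losing one part). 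In each case the resulting object is still a valid partition of $E_{\mathcal{X}', \mathcal{B}}$ into circuits whose size differs from $c^*_{\mathcal{X}, \mathcal{B}}$ by exactly one, yielding $|c^*_{\mathcal{X}', \mathcal{B}} - c^*_{\mathcal{X}, \mathcal{B}}| \le 1$.

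The main obstacle is this split/merge analysis underpinning the lower bound: verifying that re-pairing the tails of two edges within a partition really does change the number of circuits by exactly one, and that the rerouted object is genuinely a partition into circuits (reusing each good-edge exactly once). Writing the affected circuit(s) as concatenations of the two modified edges with the intervening directed paths makes both the split and the merge transparent, and since $c^*$ is defined as the maximum over \emph{circuit} partitions, I need not separately refine these circuits into cycles for the inequality to go through. Edge cases such as $p = q$, or one of the two goods being a self-loop, are absorbed by the same concatenation bookkeeping and should not require special treatment.
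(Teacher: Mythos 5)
Your proof is correct, and its key combinatorial content coincides with the paper's: both arguments hinge on the observation that an exchange of goods $g, g'$ replaces the edges $(i,p)$ and $(j,q)$ of $G_{\cdot,\mathcal{B}}$ by $(j,p)$ and $(i,q)$, and that under this tail-swap a circuit partition either splits one circuit into two or merges two circuits into one, changing its cardinality by exactly one (this is precisely the paper's Case 1/Case 2 analysis). The packaging, however, is genuinely different. The paper runs a single strong induction on the quantity $m - c^*_{\mathcal{A},\mathcal{B}}$, proving the upper bound by performing one exchange that increases the partition size by one and invoking the inductive hypothesis, and proving the lower bound by a contradiction argument inside the same induction. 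You instead decouple the two bounds: for the upper bound you resolve each cycle of a maximum partition outright with $\ell-1$ rotating exchanges (the paper's inductive step, unrolled), and for the lower bound you observe that $m - c^*_{\cdot,\mathcal{B}}$ is a potential that is $1$-Lipschitz along edges of the exchange graph and telescope from $\mathcal{A}$ to $\mathcal{B}$. The Lipschitz/telescoping formulation avoids the induction entirely and makes the lower bound cleaner; the paper's version keeps everything inside one inductive frame at the cost of a slightly more convoluted contradiction step. Your handling of the degenerate cases (self-loops, coinciding endpoints) via the same concatenation bookkeeping is also sound, and your appeal to the fact that a maximum partition consists of cycles (needed so that consecutive agents on a resolved cycle are distinct) is justified by the observation the paper records just before the proposition.
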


\begin{proof}
We have $m \geq c^*_{\mathcal{A}, \mathcal{B}}$ since every partition in $\mathfrak{C}_{\mathcal{A}, \mathcal{B}}$ is a partition of a set with cardinality $m$, so $m - c^*_{\mathcal{A}, \mathcal{B}} \geq 0$. 
We shall prove the result by strong induction on $m - c^*_{\mathcal{A}, \mathcal{B}}$. 
For the base case, let $\mathcal{A}$ and $\mathcal{B}$ be given such that $m - c^*_{\mathcal{A}, \mathcal{B}} = 0$. 
This means that there exists a partition $C_{\mathcal{A}, \mathcal{B}} \in \mathfrak{C}_{\mathcal{A}, \mathcal{B}}$ such that $|C_{\mathcal{A}, \mathcal{B}}| = m = |E_{\mathcal{A}, \mathcal{B}}|$. 
The only way that this is possible is when every edge in $E_{\mathcal{A}, \mathcal{B}}$ is a self-loop. 
Thus, each good appears in the same agent's bundle in $\mathcal{A}$ and $\mathcal{B}$. 
This means that $\mathcal{A} = \mathcal{B}$, and so the distance between $\mathcal{A}$ and $\mathcal{B}$ is $0 = m - c^*_{\mathcal{A}, \mathcal{B}}$.

For the inductive hypothesis, suppose that there exists a non-negative integer $p_0$ such that for all pairs of allocations $\mathcal{A}$ and $\mathcal{B}$ satisfying $m - c^*_{\mathcal{A}, \mathcal{B}} = p$ for any $p \in \{0, \ldots, p_0\}$, the distance between $\mathcal{A}$ and $\mathcal{B}$ is $p$. 
For the inductive step, consider a pair of allocations $\mathcal{A}$ and $\mathcal{B}$ such that $m - c^*_{\mathcal{A}, \mathcal{B}} = p_0 + 1$. 
We shall prove that the distance between $\mathcal{A}$ and~$\mathcal{B}$ is $p_0 + 1$. 

We first prove that the distance between $\mathcal{A}$ and $\mathcal{B}$ is at most $p_0 + 1$. 
Let $C_{\mathcal{A}, \mathcal{B}} \in \mathfrak{C}_{\mathcal{A}, \mathcal{B}}$ be such that $m - |C_{\mathcal{A}, \mathcal{B}}| = p_0 + 1 > 0$. Since $|E_{\mathcal{A}, \mathcal{B}}| > |C_{\mathcal{A}, \mathcal{B}}|$, $C_{\mathcal{A}, \mathcal{B}}$ contains at least one directed circuit of length at least two. 
For notational simplicity, let one such directed circuit be $v_0 \to v_1 \to \cdots \to v_\ell = v_0$ for some $\ell \geq 2$, where $e_k = (v_{k-1}, v_k)$ for $k \in \{1, \ldots, \ell\}$ without loss of generality. 
From $\mathcal{A}$, exchange $g_{\ell-1}$ (in agent $v_{\ell-2}$'s bundle) with $g_\ell$ (in agent $v_{\ell-1}$'s bundle) to form the allocation $\mathcal{A}'$. 
This removes the directed circuit $v_0 \to v_1 \to \cdots \to v_\ell$ but introduces two new directed circuits: $v_{\ell-1} \to v_{\ell-1}$ and $v_0 \to v_1 \to \cdots \to v_{\ell-2} \to v_0$. 
Thus, there exists a partition $C_{\mathcal{A}', \mathcal{B}} \in \mathfrak{C}_{\mathcal{A}', \mathcal{B}}$ such that $|C_{\mathcal{A}', \mathcal{B}}| = |C_{\mathcal{A}, \mathcal{B}}| + 1$. 
This gives $m - |C_{\mathcal{A}', \mathcal{B}}| = p_0$, and thus $m - c^*_{\mathcal{A}', \mathcal{B}} \leq p_0$. 
By the inductive hypothesis, the distance between $\mathcal{A}'$ and $\mathcal{B}$ is at most $p_0$. 
Therefore, the distance between $\mathcal{A}$ and $\mathcal{B}$ (via $\mathcal{A}'$) is at most $p_0 + 1$, which is $m - c^*_{\mathcal{A}, \mathcal{B}}$.

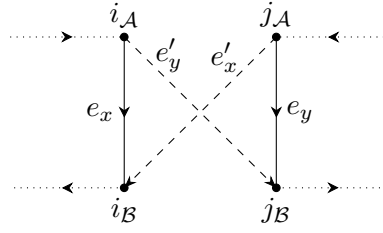
\begin{figure}[htb]
\centering
  \begin{subfigure}[b]{.47\textwidth}
    \centering 
    \begin{tikzpicture}
    \usetikzlibrary{positioning}
    \newdimen\nodeDist
    \nodeDist=10mm
    \newcommand{\arrow}{ \tikz \draw[-{Stealth[length=1.5mm, width=1.5mm]}] (-1pt,0) -- (1pt,0); }
    \node at (0,0) (v0) {};
    \node at (-\nodeDist,\nodeDist) (via) {};
    \node at (-\nodeDist,-\nodeDist) (vib) {};
    \node at (\nodeDist,\nodeDist) (vja) {};
    \node at (\nodeDist,-\nodeDist) (vjb) {};
    \node at (-2.5*\nodeDist,\nodeDist) (via2) {};
    \node at (-2.5*\nodeDist,-\nodeDist) (vib2) {};
    \node at (2.5*\nodeDist,\nodeDist) (vja2) {};
    \node at (2.5*\nodeDist,-\nodeDist) (vjb2) {};
    \draw [black, fill=black] (via) circle [radius=0.05] node[above] {$i_\mathcal{A}$};
    \draw [black, fill=black] (vib) circle [radius=0.05] node[below] {$i_\mathcal{B}$};
    \draw [black, fill=black] (vja) circle [radius=0.05] node[above] {$j_\mathcal{A}$};
    \draw [black, fill=black] (vjb) circle [radius=0.05] node[below] {$j_\mathcal{B}$};
    \draw [] (via.center) -- node[left] {$e_x$} (vib.center) node[sloped, midway, allow upside down]{\arrow};;
    \draw [] (vja.center) -- node[right] {$e_y$} (vjb.center) node[sloped, midway, allow upside down]{\arrow};;
    \draw [-Stealth, dashed] (via.center) -- node[above right=4mm and 0mm] {$e_x'$} (vjb.center);
    \draw [-Stealth, dashed] (vja.center) -- node[above left=4mm and 1mm] {$e_y'$} (vib.center);
    \draw [dotted] (via2.center) -- (via.center) node[sloped, midway, allow upside down]{\arrow};;
    \draw [dotted] (vib.center) -- (vib2.center) node[sloped, midway, allow upside down]{\arrow};;
    \draw [dotted] (vja2.center) -- (vja.center) node[sloped, midway, allow upside down]{\arrow};;
    \draw [dotted] (vjb.center) -- (vjb2.center) node[sloped, midway, allow upside down]{\arrow};;
    \end{tikzpicture}
  \end{subfigure}
\caption{The exchange of goods $g_x$ and $g_y$. The edges $e_x$ and $e_y$ correspond to the respective goods in $G_{\mathcal{A}, \mathcal{B}}$, while the edges $e_x'$ and $e_y'$ correspond to those in $G_{\mathcal{A}', \mathcal{B}}$.} \label{fig:cAB}
\end{figure}

It remains to prove that the distance between $\mathcal{A}$ and $\mathcal{B}$ is at least $p_0 + 1$. 
Suppose on the contrary that the distance between $\mathcal{A}$ and $\mathcal{B}$ is at most $p_0$.
Since $c^*_{\mathcal{A}, \mathcal{B}} < m$, $\mathcal{A}$ and $\mathcal{B}$ are distinct allocations. 
Consider a shortest path between $\mathcal{A}$ and $\mathcal{B}$ on the exchange graph, and let $\mathcal{A}'$ be the allocation on this path adjacent to $\mathcal{A}$.
By assumption, the distance between $\mathcal{A}'$ and $\mathcal{B}$ is $p$ for some $p < p_0$. 
By the inductive hypothesis, $m - c^*_{\mathcal{A}', \mathcal{B}} = p$.
Let $C_{\mathcal{A}', \mathcal{B}} \in \mathfrak{C}_{\mathcal{A}', \mathcal{B}}$ be such that $|C_{\mathcal{A}', \mathcal{B}}| = m - p$; by definition of $c^*_{\mathcal{A}', \mathcal{B}}$, $C_{\mathcal{A}', \mathcal{B}}$ must consist only of directed cycles.
Now, since $\mathcal{A}$ and $\mathcal{A}'$ are adjacent on the exchange graph, there exist distinct goods $g_x$ and $g_y$ such that exchanging them in allocation $\mathcal{A}$ leads to the allocation $\mathcal{A}'$. 
Let $i_\mathcal{A}, i_\mathcal{B}, j_\mathcal{A}, j_\mathcal{B} \in N$ be such that $e_x = (i_\mathcal{A}, i_\mathcal{B})$ and $e_y = (j_\mathcal{A}, j_\mathcal{B})$ are edges in $E_{\mathcal{A}, \mathcal{B}}$ corresponding to goods $g_x$ and $g_y$, respectively (some of $i_\mathcal{A}, i_\mathcal{B}, j_\mathcal{A}, j_\mathcal{B}$ may coincide).
Accordingly, we must have edges $e'_x = (j_\mathcal{A}, i_\mathcal{B})$ and $e'_y = (i_\mathcal{A}, j_\mathcal{B})$ in $E_{\mathcal{A}', \mathcal{B}}$ (some of these edges may be self-loops).
See \Cref{fig:cAB} for an illustration.
We consider two cases; in each case, we will construct a partition $C_{\mathcal{A}, \mathcal{B}} \in \mathfrak{C}_{\mathcal{A}, \mathcal{B}}$ with at least $|C_{\mathcal{A}', \mathcal{B}}| - 1$ directed circuits.
\begin{itemize}
    \item \textbf{Case 1: $e'_x$ and $e'_y$ belong to different cycles in $C_{\mathcal{A}', \mathcal{B}}$. } \\
    Let $D_x = j_\mathcal{A} \xrightarrow{e'_x} i_\mathcal{B} \xrightarrow{\cdots} j_\mathcal{A}$ and $D_y = i_\mathcal{A} \xrightarrow{e'_y} j_\mathcal{B} \xrightarrow{\cdots} i_\mathcal{A}$ be the cycles in $C_{\mathcal{A}', \mathcal{B}}$ containing $e'_x$ and $e'_y$, respectively.
    Define $C_{\mathcal{A}, \mathcal{B}} = (C_{\mathcal{A}', \mathcal{B}} \setminus \{D_x, D_y\}) \cup \{i_\mathcal{A} \xrightarrow{e_x} i_\mathcal{B} \xrightarrow{\cdots} j_\mathcal{A} \xrightarrow{e_y} j_\mathcal{B} \xrightarrow{\cdots} i_\mathcal{A}\}$, where $i_\mathcal{B} \xrightarrow{\cdots} j_\mathcal{A}$ and $j_\mathcal{B} \xrightarrow{\cdots} i_\mathcal{A}$ represent the corresponding (possibly empty) trails in $D_x$ and $D_y$, respectively. 
    Note that $|C_{\mathcal{A}, \mathcal{B}}| = |C_{\mathcal{A}', \mathcal{B}}| - 1$.
    \item \textbf{Case 2: $e'_x$ and $e'_y$ belong to the same cycle in $C_{\mathcal{A}', \mathcal{B}}$. } \\
    Let $D = j_\mathcal{A} \xrightarrow{e'_x} i_\mathcal{B} \xrightarrow{\cdots} i_\mathcal{A} \xrightarrow{e'_y} j_\mathcal{B} \xrightarrow{\cdots} j_\mathcal{A}$ be the cycle in $C_{\mathcal{A}', \mathcal{B}}$ containing $e'_x$ and $e'_y$. 
    Define $C_{\mathcal{A}, \mathcal{B}} = (C_{\mathcal{A}', \mathcal{B}} \setminus \{D\}) \cup \{i_\mathcal{A} \xrightarrow{e_x} i_\mathcal{B} \xrightarrow{\cdots} i_\mathcal{A}, \; j_\mathcal{A}\xrightarrow{e_y} j_\mathcal{B} \xrightarrow{\cdots} j_\mathcal{A}\}$, where the $\xrightarrow{\cdots}$ represents the corresponding (possibly empty) trails in $D$. 
    Note that $|C_{\mathcal{A}, \mathcal{B}}| = |C_{\mathcal{A}', \mathcal{B}}| + 1$.
\end{itemize}
In either case, there exists a partition $C_{\mathcal{A}, \mathcal{B}} \in \mathfrak{C}_{\mathcal{A}, \mathcal{B}}$ of cardinality at least $|C_{\mathcal{A}', \mathcal{B}}| - 1 = m - p - 1$.
This means that $c^*_{\mathcal{A}, \mathcal{B}} \geq m - p - 1$, which implies that $m - c^*_{\mathcal{A}, \mathcal{B}} \leq p + 1 < p_0 + 1$.
However, this contradicts the assumption that $m - c^*_{\mathcal{A}, \mathcal{B}} = p_0 + 1$.
Therefore, the distance between $\mathcal{A}$ and $\mathcal{B}$ is at least $p_0 + 1$, as desired.
\end{proof}

Having found a correspondence for the distance between two allocations, a natural question is whether there exists an efficient algorithm to compute this distance.
It turns out that computing this distance is an NP-hard problem, so no polynomial-time algorithm exists unless $\text{P} = \text{NP}$.
We show this via a series of reductions.

We start by considering the decision problem \textsc{Directed Triangle Partition}: given a directed graph with no directed cycles of length $1$ or $2$, determine whether there is a partition of edges into triangles (i.e., directed cycles of length $3$).
This decision problem is NP-hard via a reduction from \textsc{3SAT}.
The idea is similar to that used by \citet{Holyer81} in his proof of the corresponding result for \emph{undirected} graphs; the details are involved and can be found in \Cref{app:directedtrianglepartition}.

\begin{restatable}{lemma}{directedtrianglepartitionnphard} \label{lem:directed_triangle_partition_nphard}
\textsc{Directed Triangle Partition} is NP-hard.
\end{restatable}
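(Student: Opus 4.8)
The plan is to give a polynomial-time many-one reduction from \textsc{3SAT} to \textsc{Directed Triangle Partition}, building from a formula $\phi$ a directed graph $D_\phi$ with no directed $1$- or $2$-cycles that decomposes into directed triangles if and only if $\phi$ is satisfiable. I would first record the structural facts that any such partition must obey, since these are exactly what make the gadgets rigid: every edge lies in exactly one directed triangle, each triangle is a cyclically oriented triple $a \to b \to c \to a$, and consequently every vertex has indegree equal to outdegree (each triangle through a vertex contributes one in-edge and one out-edge there). These local parity constraints are the levers the construction exploits.

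Next I would design three families of gadgets, in direct analogy with \citet{Holyer81} but with orientations chosen so that the intended triples are directed cycles. A \emph{variable gadget} $V_x$ is a directed graph whose edge set admits a decomposition into directed triangles in exactly two ways, labeled ``true'' and ``false''; the two decompositions differ only on a distinguished set of output edges shared with the clauses in which $x$ or $\lnot x$ occurs. A \emph{clause gadget} for a clause with three literals is oriented so that its edges complete into directed triangles if and only if at least one of its three incident literal edges is left available by the corresponding variable gadget. The two gadget types are joined by \emph{connection edges}: each literal occurrence is a shared edge that can be covered inside the variable gadget or inside the clause gadget but not both, so that a globally consistent choice of which side covers each shared edge is precisely a truth assignment that is consistent across all occurrences of each variable and satisfies every clause.

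With the gadgets in place, correctness splits into the two standard directions. For the forward direction I would check, gadget by gadget, that a satisfying assignment yields a decomposition of $D_\phi$: set each $V_x$ to its true/false decomposition and use a satisfied literal to triangulate each clause gadget. For the reverse direction I would argue that any directed-triangle partition must restrict to one of the two admissible decompositions on each $V_x$ (this is where the rigidity facts above are invoked), hence induces a well-defined assignment, and that coverage of each clause gadget forces at least one of its literals true. Finally I would note that $D_\phi$ has size linear in $|\phi|$ and is built in polynomial time, and verify that no gadget or connection introduces a directed $1$- or $2$-cycle, so that $D_\phi$ is a legal instance.

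The main obstacle I anticipate is orientation control, which has no analogue in Holyer's undirected setting. Two difficulties must be handled at once: first, every triple intended to be a usable triangle must actually be oriented cyclically, whereas a careless orientation would render it a transitive, and hence useless, triple; second, and more delicately, I must prove that no \emph{unintended} directed triangle appears, since a stray cyclic triple could furnish a spurious third decomposition of a variable gadget and break the correspondence with truth assignments. Establishing this rigidity, while keeping each vertex's indegree equal to its outdegree within every gadget under either choice and avoiding all $2$-cycles, is the technical heart of the argument; the remaining bookkeeping is routine, and I would defer the full construction and verification to \Cref{app:directedtrianglepartition}.
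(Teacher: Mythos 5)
Your high-level plan is the same as the paper's: a polynomial-time reduction from \textsc{3SAT} in the style of \citet{Holyer81}, with variable gadgets admitting exactly two triangle decompositions, clause gadgets, and shared structure enforcing consistency. You have also correctly identified where the difficulty lies in the directed setting. But that is precisely the problem: the proposal contains no actual gadgets. Everything that constitutes the proof of this lemma---exhibiting a directed graph with no $1$- or $2$-cycles that admits \emph{exactly two} edge-partitions into directed triangles, and proving that no third (unintended) decomposition exists---is announced as ``the technical heart of the argument'' and then deferred. In the directed setting it is not even a priori clear that such a rigid gadget exists, so this cannot be waved through as routine; as written, the reduction is a specification of what a proof would need, not a proof.

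For comparison, the paper discharges exactly this obligation with a concrete construction: the graph $H_p$ on vertex set $\{(a_1,a_2,a_3)\in\mathbb{Z}_p^3 : a_1+a_2+a_3\equiv 0\}$, whose edges shift one coordinate up and another down modulo $p$. Its only directed triangles are the $T$-type and $F$-type, every vertex has indegree and outdegree $3$, and a cascading argument starting from any vertex shows that any triangle partition of $H_p$ is globally all-$T$ or all-$F$ (away from designated ``patches''). Copies of $H_p$ then serve as both variable and literal gadgets, coupled through $F$-$F$-$F$, $F$-$F$, and $F$-$T$ joins on non-interfering patches; the degree-counting at the shared patch vertices is what enforces the clause and consistency constraints, and one still has to verify that the joins introduce no $2$-cycles and that the claimed partitions are realizable. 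To complete your argument you would need to supply gadgets with these properties and prove their rigidity; without that, the reduction does not go through.
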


We now use \Cref{lem:directed_triangle_partition_nphard} to show that computing $c^*_{\mathcal{A}, \mathcal{B}}$ is NP-hard.

\begin{lemma} \label{lem:max_partition_nphard}
Given a directed graph such that for each vertex, its indegree and outdegree are equal, computing the maximum cardinality of a partition of the edges into directed cycles is an NP-hard problem.
\end{lemma}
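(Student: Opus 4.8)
The plan is to reduce from \textsc{Directed Triangle Partition}, which is NP-hard by \Cref{lem:directed_triangle_partition_nphard}. Let $G = (V, E)$ be an instance of \textsc{Directed Triangle Partition}, i.e., a directed graph with $m := |E|$ edges and no directed cycles of length $1$ or $2$. First I would check in polynomial time whether every vertex of $G$ has its indegree equal to its outdegree. If this fails at some vertex, then $G$ cannot admit any partition of its edges into directed cycles (let alone triangles), so \textsc{Directed Triangle Partition} has answer ``no'' and we are done. Otherwise $G$ is a legitimate input to the maximum-cardinality cycle partition problem, and I take $G$ itself as the instance produced by the reduction.

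The key observation is that the hypothesis on $G$ pins down the relevant threshold exactly. Since $G$ has no directed cycle of length $1$ or $2$, every directed cycle in $G$ has length at least $3$. Hence, for any partition of $E$ into directed cycles $C_1, \dots, C_c$, we have $m = \sum_{i=1}^{c} |C_i| \geq 3c$, so $c \leq m/3$, with equality if and only if every $C_i$ is a triangle---that is, if and only if the partition is a partition into triangles. Writing $c^*$ for the maximum cardinality of a partition of $E$ into directed cycles, it follows that $G$ admits a partition into triangles if and only if $3 c^* = m$.

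Therefore, any algorithm computing $c^*$ can be used to decide \textsc{Directed Triangle Partition}: compute $c^*$ and check whether $3 c^* = m$. Combined with the polynomial-time degree-balance preprocessing above, a polynomial-time algorithm for computing $c^*$ would yield a polynomial-time algorithm for \textsc{Directed Triangle Partition}, contradicting \Cref{lem:directed_triangle_partition_nphard} unless $\text{P} = \text{NP}$. Hence computing the maximum cardinality of a partition of the edges into directed cycles is NP-hard.

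I do not anticipate a genuine obstacle here, since the heavy lifting has already been done in \Cref{lem:directed_triangle_partition_nphard}; this step is a short but clean reduction. The only points requiring care are ensuring that the instance produced by the reduction satisfies the promised indegree-equals-outdegree condition (handled by the preprocessing, using the fact that a triangle partition forces degree balance) and verifying that the length-$\geq 3$ bound on cycles is tight precisely at triangle partitions (handled by the counting argument above, together with the earlier remark that a maximum-cardinality partition consists only of directed cycles).
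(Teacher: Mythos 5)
Your proposal is correct and follows essentially the same route as the paper: reduce from \textsc{Directed Triangle Partition}, dispose of degree-imbalanced inputs as trivial ``no'' instances, and use the absence of cycles of length $1$ or $2$ to show that a triangle partition exists if and only if the maximum cycle-partition cardinality equals $|E|/3$. Your counting argument ($m \geq 3c$ with equality exactly for triangle partitions) just makes explicit the step the paper states more briefly.
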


\begin{proof} 
The result follows from reducing \textsc{Directed Triangle Partition} to the problem of deciding whether there exists a partition of the edges of a directed graph into $|E|/3$ directed cycles.
Let $G = (V, E)$ be an instance of \textsc{Directed Triangle Partition}.
If there is some vertex with unequal indegree and outdegree, then $G$ cannot be edge-partitioned into triangles. 
Otherwise, since $G$ does not have cycles of length $1$ or $2$ (by definition of \textsc{Directed Triangle Partition}), the edges of $G$ can be partitioned into triangles if and only if the maximum cardinality of a partition of the edges into directed cycles is $|E|/3$. 
Since \textsc{Directed Triangle Partition} is NP-hard by \Cref{lem:directed_triangle_partition_nphard}, so is the problem of finding the maximum cardinality of a partition of the edges into directed cycles.
\end{proof}

\Cref{lem:distance_between_allocations} and \Cref{lem:max_partition_nphard} imply the following theorem.

\begin{theorem} \label{thm:dist_nphard}
Finding the distance between two allocations is an NP-hard problem.
\end{theorem}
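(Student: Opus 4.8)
The plan is to exhibit a polynomial-time reduction from the NP-hard problem of \Cref{lem:max_partition_nphard} — computing the maximum cardinality $c^*$ of a partition into directed cycles of the edges of a directed graph in which indegree equals outdegree at every vertex — to the problem of computing the distance between two allocations. The crux is the observation that the construction $G_{\mathcal{A}, \mathcal{B}}$ introduced before \Cref{lem:distance_between_allocations} is ``surjective'' onto exactly this class of graphs: every directed graph with balanced in/outdegrees arises as $G_{\mathcal{A}, \mathcal{B}}$ for a suitable pair of allocations. Thus the graph-theoretic quantity $c^*$ and the allocation distance are two sides of the same coin, and the two lemmas can simply be chained.

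Concretely, given an instance $G = (V, E)$ of the \Cref{lem:max_partition_nphard} problem, I would set $N := V$ and identify the set of goods $M$ with the edge set $E$, so that $m = |E|$. For each edge $e_k = (i, j)$, I place the corresponding good $g_k$ in agent~$i$'s bundle in $\mathcal{A}$ and in agent~$j$'s bundle in $\mathcal{B}$; equivalently, $A_i$ collects the goods of the edges leaving $i$ and $B_i$ collects the goods of the edges entering $i$. The only thing that genuinely needs checking is that $\mathcal{A}$ and $\mathcal{B}$ admit a common size vector: $|A_i| = \deg^+(i)$ and $|B_i| = \deg^-(i)$, and these coincide precisely by the balanced-degree hypothesis of \Cref{lem:max_partition_nphard}. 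Hence $\vec{s}$ defined by $s_i := \deg^+(i) = \deg^-(i)$ is a legitimate size vector with $\sum_{i \in N} s_i = |E| = m$, and by construction $G_{\mathcal{A}, \mathcal{B}} = G$, so $c^*_{\mathcal{A}, \mathcal{B}} = c^*$.

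It then remains to invoke \Cref{lem:distance_between_allocations}, which gives that the distance between $\mathcal{A}$ and $\mathcal{B}$ equals $m - c^*_{\mathcal{A}, \mathcal{B}} = m - c^*$. Consequently any procedure computing this distance yields $c^* = m - \mathrm{dist}(\mathcal{A}, \mathcal{B})$ after a single subtraction, and the entire reduction runs in polynomial time (it merely rewrites the graph as a pair of allocations). Since computing $c^*$ is NP-hard by \Cref{lem:max_partition_nphard}, computing the distance is NP-hard as well. I do not expect any serious obstacle here: the argument is essentially a dictionary translation between the two representations, and the single point warranting care — that the balanced-degree condition is exactly what makes the constructed allocations share a valid size vector — is immediate from the definition of $G_{\mathcal{A}, \mathcal{B}}$.
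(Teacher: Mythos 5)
Your reduction is correct and is essentially identical to the paper's own proof: both take an instance of the problem from \Cref{lem:max_partition_nphard}, identify agents with vertices and goods with edges, place the good for edge $(i,j)$ in $A_i$ and $B_j$ so that $G_{\mathcal{A},\mathcal{B}}$ is the given graph, and then invoke \Cref{lem:distance_between_allocations} to recover $c^*$ from the distance. Your explicit check that the balanced-degree hypothesis is what makes $\mathcal{A}$ and $\mathcal{B}$ share a size vector is a nice touch that the paper leaves implicit.
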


\begin{proof}
Start with an instance $\mathcal{G} = (V, E)$ of the problem described in \Cref{lem:max_partition_nphard}, and denote $V = \{v_1, \ldots, v_n\}$. 
We shall construct, in polynomial time, an instance of the problem of finding the distance between two allocations.
Let $N = \{1, \ldots, n\}$ be the set of agents, $M = \{g_e\}_{e \in E}$ be the set of goods, and $s_i = |\{e \in E \mid \exists j \in N, \; e = (v_i, v_j)\}|$ be the size of agent $i$'s bundle for each $i \in N$. 
The initial allocation $\mathcal{A} = (A_1, \ldots, A_n)$ and target allocation $\mathcal{B} = (B_1, \ldots, B_n)$ are such that $A_i = \{g_e \in M \mid \exists j \in N, \; e = (v_i, v_j) \in E\}$ and $B_i = \{g_e \in M \mid \exists j \in N, \; e = (v_j, v_i) \in E\}$ for each $i \in N$.
Note that this induces the graph $G_{\mathcal{A}, \mathcal{B}}$ isomorphic to $\mathcal{G}$. 
The distance between $\mathcal{A}$ and $\mathcal{B}$ is $|E| - c^*_{\mathcal{A}, \mathcal{B}}$, by \Cref{lem:distance_between_allocations}.
Therefore, if we can find this distance, then we can find $c^*_{\mathcal{A}, \mathcal{B}}$, solving the problem instance from \Cref{lem:max_partition_nphard}.
\end{proof}

\subsection{General Utilities}

We now discuss properties of the EF1 exchange graph.
The following result demonstrates that deciding whether an EF1 exchange path exists is a PSPACE-complete problem.

\begin{theorem} \label{thm:gen_2_connected_pspace}
Deciding the existence of an EF1 exchange path between two EF1 allocations is PSPACE-complete.
\end{theorem}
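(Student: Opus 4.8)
The plan is to prove PSPACE-completeness in two directions. For membership in PSPACE, I would argue that the existence of an EF1 exchange path is decidable by a standard nondeterministic reconfiguration search: guess the path one exchange at a time, verifying at each step that the current allocation is EF1 and that it connects to the next via a single exchange. Since each allocation can be described in polynomial space and we need only store the current and target allocations (not the whole path), this gives an \textbf{NPSPACE} algorithm, and by Savitch's theorem \textbf{NPSPACE} $=$ \textbf{PSPACE}. This direction is routine and I would keep it to a sentence or two.

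The main work is \textbf{PSPACE-hardness}, which I would establish by a reduction from a known PSPACE-complete reconfiguration problem. The natural candidates are \textsc{Nondeterministic Constraint Logic} (NCL) in the style of Hearn and Demaine, or a token/sliding reconfiguration problem such as \textsc{Sliding Tokens} or a reconfiguration variant of a constraint-satisfaction problem; given the exchange structure here, a gadget-based reduction from NCL or from a suitably chosen reconfiguration problem seems most promising. The idea is to encode each configuration of the source problem as an EF1 allocation, and to design utility-function gadgets so that the only EF1-preserving exchanges available at any allocation correspond exactly to the legal moves of the source problem. Each agent's additive utilities would be chosen so that the EF1 constraint acts as a ``lock'': a good can move out of (or into) a particular agent's bundle only when the local configuration permits the corresponding move, mirroring the inequality tightness exploited in the two-agent examples (Theorems~\ref{thm:gen_2_connected} and \ref{thm:gen_2_optimal}), where a single good's value determines whether an exchange keeps envy within one good.

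The key technical steps, in order, would be: (i) fix the source PSPACE-complete reconfiguration problem and recall its move structure; (ii) build an instance with agents and goods encoding the variables/edges of a source configuration, assigning additive utilities so that in every EF1 allocation the ``state'' is read off from which goods sit in which bundles; (iii) prove a \emph{move-simulation lemma} stating that from any EF1 allocation, the set of exchanges leading to another EF1 allocation is in bijection with the legal moves of the source instance, so that EF1 exchange paths correspond precisely to reconfiguration sequences; and (iv) conclude that the target configuration is reachable in the source problem if and only if the corresponding EF1 allocation is reachable via an EF1 exchange path. Care is needed to ensure that spurious exchanges---ones not corresponding to any intended move---always break EF1, which is enforced by introducing high-value ``blocker'' goods whose displacement immediately causes envy beyond one good.

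The hard part will be step (iii): designing the utilities so that the EF1 constraint neither forbids a legal move nor permits an illegal one, across \emph{all} reachable intermediate allocations and not just near the source and target. Because EF1 is a global condition over all pairs of agents and the exchange graph has exponentially many vertices, the gadgets must be robust: the envy inequalities must hold with the right slack uniformly, which typically forces a careful choice of geometrically or polynomially separated utility values (so that no unintended combination of goods accidentally satisfies or violates EF1). Verifying this robustness---essentially a case analysis over the gadget interfaces showing that the EF1 ``locks'' behave identically in every admissible allocation---is where the bulk of the proof effort lies, and I would isolate it as a self-contained lemma before assembling the full reduction.
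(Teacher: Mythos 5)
Your membership argument is correct and essentially identical to the paper's: nondeterministically guess the path step by step, storing only the current allocation, and invoke Savitch's theorem. No issue there.

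The hardness direction, however, is a plan rather than a proof, and the missing content is precisely the part that carries all the difficulty. You correctly identify that one should reduce from a known PSPACE-complete reconfiguration problem and that the crux is a move-simulation lemma showing that EF1-preserving exchanges are in bijection with legal moves of the source problem \emph{at every reachable allocation}, but you never commit to a source problem, never exhibit the agents, goods, or utility values, and never carry out the case analysis you acknowledge is ``where the bulk of the proof effort lies.'' The choice of source problem is not a detail one can defer: the paper reduces from \textsc{Perfect Matching Reconfiguration} on balanced bipartite graphs (Bonamy et al.), and this choice does most of the work, because a flip --- removing $\{\widetilde{p}_i,\widetilde{q}_k\},\{\widetilde{p}_j,\widetilde{q}_\ell\}$ and adding $\{\widetilde{p}_i,\widetilde{q}_\ell\},\{\widetilde{p}_j,\widetilde{q}_k\}$ --- is \emph{literally} an exchange of the goods $q_k$ and $q_\ell$ between two agents. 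With that alignment, the entire gadgetry reduces to one extra agent $0$ holding four goods of value $2$ (forcing every other agent's bundle to stay at utility exactly $6$, hence to retain $p_i$ and one adjacent $q$), plus utilities of $3$ on $p_i$ and on the $q_k$ with $\{\widetilde p_i,\widetilde q_k\}\in\widetilde E$. The inductive invariant that every intermediate allocation keeps this form is then a short check. A reduction from NCL or \textsc{Sliding Tokens}, by contrast, would require simulating AND/OR vertices or token adjacency constraints with additive utilities and a single swap operation, and it is far from clear that the ``lock'' gadgets you describe can be made robust against all spurious exchanges; asserting that ``care is needed'' does not discharge that obligation. As it stands, the proposal establishes membership but not hardness.
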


\begin{proof}
First, we show membership in PSPACE---recall that PSPACE is the set of all decision problems that can be solved by a deterministic polynomial-space Turing machine.
We can solve the problem non-deterministically by simply guessing an EF1 exchange path between the two EF1 allocations.
Since the total number of allocations is at most $n^m$, if there exists an EF1 exchange path between the two allocations, then there exists one with length at most $n^m$; such a path can be verified in polynomial space (i.e., using a polynomial number of bits).
This shows that the problem is in NPSPACE, the set of all decision problems that can be solved by a non-deterministic polynomial-space Turing machine.
By Savitch's Theorem, $\text{NPSPACE} \subseteq \text{PSPACE}$ \citep{Savitch70}, which implies that this problem is in PSPACE.

To prove that our problem is PSPACE-hard, we shall reduce the \textsc{Perfect Matching Reconfiguration} problem for a balanced (undirected) bipartite graph to our problem.
Recall that the \textsc{Perfect Matching Reconfiguration} problem is the task of deciding if two perfect matchings of a balanced bipartite graph can be reached from each other via a sequence of \emph{flips}, i.e., given perfect matchings $\widetilde{W}_0$ and $\widetilde{W}$ of a balanced bipartite graph $\widetilde{G} = (\widetilde{V}, \widetilde{E})$, whether there exists a sequence of perfect matchings $\widetilde{W}_0, \widetilde{W}_1, \ldots, \widetilde{W}_t$ such that
\begin{itemize}
    \item $\widetilde{W}_t = \widetilde{W}$, and
    \item for each $z \in \{1, \ldots, t\}$, there exist edges $\widetilde{e}_z^1, \widetilde{e}_z^2, \widetilde{e}_z^3, \widetilde{e}_z^4$ of $\widetilde{G}$ such that $\widetilde{W}_{z-1} \setminus \widetilde{W}_z = \{\widetilde{e}_z^1, \widetilde{e}_z^3\}$, $\widetilde{W}_z \setminus \widetilde{W}_{z-1} = \{\widetilde{e}_z^2, \widetilde{e}_z^4\}$, and $\widetilde{e}_z^1 \widetilde{e}_z^2 \widetilde{e}_z^3 \widetilde{e}_z^4$ forms a cycle.
\end{itemize}
The operation from $\widetilde{W}_{z-1}$ to $\widetilde{W}_z$ is called a \emph{flip}, and we say that $\widetilde{W}_{z-1}$ and $\widetilde{W}_z$ are \emph{adjacent} to each other. 
\textsc{Perfect Matching Reconfiguration} is known to be PSPACE-hard \citep{BonamyBoHe19}. 
Let $|\widetilde{V}| = 2v$, and let the two independent sets of~$\widetilde{G}$ be $\widetilde{P} = \{\widetilde{p}_1, \ldots, \widetilde{p}_v\}$ and $\widetilde{Q} = \{\widetilde{q}_1, \ldots, \widetilde{q}_v\}$.
For each $i \in \{1, \ldots, v\}$, let $\widetilde{q}_{k_i} \in \widetilde{Q}$ be the vertex adjacent to $\widetilde{p}_i$ in $\widetilde{W}_0$, and let $\widetilde{q}_{\ell_i} \in \widetilde{Q}$ be the vertex adjacent to $\widetilde{p}_i$ in $\widetilde{W}$. 
We shall show that this problem instance can be reduced to an instance of deciding the existence of an EF1 exchange path between two EF1 allocations.

Define an instance of the EF1 exchange path problem as follows: let $N = \{0, 1, \ldots, v\}$ be the set of agents, $M = \{p_1, \ldots, p_v, q_1, \ldots, q_v, r_1, r_2, r_3, r_4\}$ be the set of goods, and the utility function of each agent be 
\begin{itemize}
    \item $u_0(g) = 0$ for all $g \in M$, and
    \item for $i \in \{1, \ldots, v\}$, 
    \begin{align*}
        u_i(g) = \left\{ \begin{array}{ll}
            3 & \text{if } g \in \{p_i\} \cup \{q_k \; | \; \{\widetilde{p}_i, \widetilde{q}_k \} \in \widetilde{E} \}; \\
            2 & \text{if } g \in \{r_1, r_2, r_3, r_4\}; \\
            0 & \text{otherwise.}
        \end{array} \right.
    \end{align*}
\end{itemize}
In the initial allocation $\mathcal{A}_0$, agent $0$ has the bundle $\{r_1, r_2, r_3, r_4\}$ and agent $i$ has the bundle $\{p_i, q_{k_i}\}$ for each $i \in \{1, \ldots, v\}$. 
In the target allocation $\mathcal{A}$, agent $0$ again has the bundle $\{r_1, r_2, r_3, r_4\}$ and agent $i$ has the bundle $\{p_i, q_{\ell_i}\}$ for each $i \in \{1, \ldots, v\}$. 
Observe that both allocations are EF1---agent $0$ assigns zero utility to every bundle, while each agent $i \in \{1, \ldots, v\}$ assigns a utility of $6$ to her own bundle, a utility of at most $6$ to the bundle of every agent in $\{1, \ldots, v\} \setminus \{i\}$, and a utility of $6 + 2$ to agent $0$'s bundle. 
Clearly, this instance can be constructed in polynomial time.

Suppose first that there exists a sequence of adjacent perfect matchings from $\widetilde{W}_0$ to~$\widetilde{W}$.
Then each flip from $\widetilde{W}_{z-1}$ to $\widetilde{W}_z$ corresponds to an exchange in the EF1 exchange path problem: if $\widetilde{W}_{z-1} \setminus \widetilde{W}_z = \{\{\widetilde{p}_i, \widetilde{q}_k\}, \{\widetilde{p}_j, \widetilde{q}_\ell\}\}$ and $\widetilde{W}_z \setminus \widetilde{W}_{z-1} = \{\{\widetilde{p}_i, \widetilde{q}_\ell\}, \{\widetilde{p}_j, \widetilde{q}_k\}\}$, then exchange $q_k$ in agent $i$'s bundle with $q_\ell$ in agent $j$'s bundle. 
The new allocation is also EF1---as before, agent~$0$ assigns zero utility to every bundle, while each agent $i' \in \{1, \ldots, v\}$ assigns a utility of $6$ to her own bundle, a utility of at most $6$ to the bundle of every agent in $\{1, \ldots, v\} \setminus \{i'\}$, and a utility of $6 + 2$ to agent $0$'s bundle. 
By performing the exchanges according to the flips in sequence, we reach the target allocation. Therefore, an EF1 exchange path exists.

Conversely, assume that an EF1 exchange path exists between the initial allocation $\mathcal{A}_0$ and the target allocation $\mathcal{A}$.
Consider the sequence of EF1 allocations $\mathcal{A}_0, \mathcal{A}_1, \ldots, \mathcal{A}_t = \mathcal{A}$. 
We show by induction that for every intermediate allocation $\mathcal{A}_z$, every agent $i \in \{1, \ldots, v\}$ assigns a utility of $6$ to her own bundle (consisting of $p_i$ and $q_k$ for some $k$), and agent $0$ retains $\{r_1, r_2, r_3, r_4\}$. 
The base case $z = 0$ is trivial. 
For the inductive case, suppose that the statement is true for $z-1$. 
If some agent $i \in \{1, \ldots, v\}$ attempts to exchange one of her goods with a good from agent $0$, then agent $i$'s new bundle has utility $5$ but agent~$0$'s new bundle has utility $6 + 3$ for agent~$i$, which violates EF1. 
Therefore, agent $i$ must exchange goods with agent $j$ for some $j \in \{1, \ldots, v\}$. 
Note that agent $0$'s bundle is worth $6 + 2$ to agent $i$ and $j$, so agent $i$'s and $j$'s own bundles must be worth at least $6$ to $i$ and $j$ respectively. 
If agent $i$ gives $p_i$ to agent $j$, then agent~$j$'s new bundle consists of $p_i$ (worth zero to her) and some $q_\ell$, which is worth at most $3$ to her---this violates EF1. 
By the same reasoning, agent $j$ cannot give $p_j$ to agent $i$. 
Therefore, they must exchange $q_k$ in agent $i$'s bundle with $q_\ell$ in agent $j$'s bundle. As agent $i$ and $j$ must have bundles worth at least $6$ to each of them, $q_\ell$ must be worth $3$ to agent $i$ and $q_k$ must be worth $3$ to agent $j$. 
This completes the induction. 

As a result, the perfect matchings $\widetilde{W}_{z-1}$ (corresponding to $\mathcal{A}_{z-1}$) and $\widetilde{W}_z$ (corresponding to $\mathcal{A}_z$) must be adjacent to each other for all $z$, where $\widetilde{W}_{z-1} \setminus \widetilde{W}_z = \{\{\widetilde{p}_i, \widetilde{q}_k\}, \{\widetilde{p}_j, \widetilde{q}_\ell\}\}$ and $\widetilde{W}_z \setminus \widetilde{W}_{z-1} = \{\{\widetilde{p}_i, \widetilde{q}_\ell\}, \{\widetilde{p}_j, \widetilde{q}_k\}\}$.
It follows that a sequence of adjacent perfect matchings $\widetilde{W}_0, \widetilde{W}_1, \ldots, \widetilde{W}_t = \widetilde{W}$ indeed exists.
This completes the proof.
\end{proof}

Regarding the existence of \emph{optimal} EF1 exchange paths, we shall show later in \Cref{thm:iden_4_optimal_nphard} that the corresponding decision problem is NP-hard even for four agents with identical utilities.

\subsection{Identical Binary Utilities}
We now consider the most restrictive class of utility functions in our paper: those that are identical \emph{and} binary.
We show that the EF1 exchange graph is connected for any number of agents with such utility functions.

\begin{theorem} \label{thm:idenbin_2_connected}
Let an instance with $n \geq 2$ agents and identical binary utility functions be given. Then, the EF1 exchange graph is connected. Moreover, an EF1 exchange path between any two allocations can be found in polynomial time.
\end{theorem}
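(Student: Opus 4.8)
The plan is to exploit the simple structure of EF1 under identical binary utilities. Write $u$ for the common utility function, let $M_1 = \{g \in M : u(g) = 1\}$ be the set of valued goods, and for an allocation $\mathcal{A}$ let $v_i(\mathcal{A}) = |A_i \cap M_1|$ denote the number of valued goods held by agent $i$, so that $u(A_i) = v_i(\mathcal{A})$. I would first establish the key characterization: since all valued goods are worth $1$ and all others $0$, the EF1 requirement $u(A_i) \ge u(A_j \setminus \{g\})$ for a suitable $g \in A_j$ collapses to $v_i(\mathcal{A}) \ge v_j(\mathcal{A}) - 1$ for every ordered pair, so $\mathcal{A}$ is EF1 if and only if $\max_i v_i(\mathcal{A}) - \min_i v_i(\mathcal{A}) \le 1$. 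Because $\sum_i v_i(\mathcal{A}) = |M_1| =: V$ is fixed, this forces every $v_i(\mathcal{A}) \in \{k, k+1\}$ with $k = \lfloor V/n \rfloor$, and the number of \emph{high} agents (those with $k+1$ valued goods) equals $V - nk$ for \emph{every} EF1 allocation.

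Given two EF1 allocations $\mathcal{A}$ and $\mathcal{B}$, I would connect them in two phases. In Phase~1 I match the valued-count vectors. Let $S_{\mathcal{A}}$ and $S_{\mathcal{B}}$ be the sets of high agents, both of size $V - nk$ by the characterization. While the current high set $S$ differs from $S_{\mathcal{B}}$, pick $i \in S \setminus S_{\mathcal{B}}$ and $j \in S_{\mathcal{B}} \setminus S$ (both nonempty since the two sets have equal size), and exchange a valued good of agent $i$ for an unvalued good of agent $j$. After this exchange $v_i$ drops to $k$ and $v_j$ rises to $k+1$, so all counts remain in $\{k, k+1\}$ and the allocation stays EF1, while $|S \triangle S_{\mathcal{B}}|$ decreases by $2$. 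After at most $n$ such exchanges I reach an allocation $\mathcal{A}'$ with $v_i(\mathcal{A}') = v_i(\mathcal{B})$ for all $i$.

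In Phase~2 I transform $\mathcal{A}'$ into $\mathcal{B}$ using only valued-for-valued and unvalued-for-unvalued exchanges, each of which leaves every $v_i$ unchanged and hence keeps the allocation EF1. To route the goods, I form a directed graph on the agents with an edge (current owner) $\to$ (owner in $\mathcal{B}$) for each misplaced valued good; since valued counts agree in $\mathcal{A}'$ and $\mathcal{B}$, indegrees equal outdegrees, so a directed cycle exists, and a single swap along it places one good correctly while shortening the cycle. Iterating this, and doing the same independently for unvalued goods, reaches $\mathcal{B}$ in at most $m$ further exchanges. Both phases are constructive and run in polynomial time, which yields the ``found in polynomial time'' claim, and their concatenation is the desired EF1 exchange path; in particular the EF1 exchange graph is connected.

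The main obstacle is Phase~1, and specifically verifying that the chosen exchange is always \emph{feasible} while preserving EF1: one must argue that the low-but-should-be-high agent $j$ actually owns an unvalued good to hand over. This is where I would use the size bound $s_j \ge k+1$, which holds because $j$ is high in the EF1 target $\mathcal{B}$ and therefore holds $k+1$ valued goods there, so while currently low agent $j$ has $s_j - k \ge 1$ unvalued goods; the high agent $i$ trivially owns a valued good since $v_i = k+1 \ge 1$. Phase~2 is routine given the cycle-routing idea and the observation that count-preserving swaps never break EF1.
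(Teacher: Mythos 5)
Your proposal is correct and follows essentially the same two-phase strategy as the paper's proof: first equalize the valued-good counts (which EF1 forces to lie in $\{\lfloor |M_1|/n\rfloor, \lfloor |M_1|/n\rfloor+1\}$) by exchanging a valued good of an agent who is high now but low in the target with an unvalued good of an agent who is low now but high in the target, using the same bundle-size argument to guarantee the unvalued good exists; then route the remaining misplaced goods via count-preserving swaps along directed cycles of the item graph. Both the characterization and the two phases match the paper's argument, so no further comparison is needed.
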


\begin{proof}
Let $\mathcal{A}$ and $\mathcal{B}$ be two EF1 allocations. 
Since every good is worth either $1$ or $0$ to every agent, every agent's bundle in $\mathcal{A}$ and $\mathcal{B}$ must have a utility of either $\lfloor u(M)/n \rfloor$ or $\lfloor u(M)/n \rfloor + 1$ (otherwise, the gap between the utilities of some two agents' bundles is at least $2$, and the corresponding allocation is not EF1). 
Let $N'$ be the set of agents whose bundles in $\mathcal{A}$ and $\mathcal{B}$ have different utilities. 
Note that half of the agents in $N'$ have bundles worth $\lfloor u(M)/n \rfloor$ in $\mathcal{A}$ and $\lfloor u(M)/n \rfloor + 1$ in $\mathcal{B}$; the other half have bundles worth $\lfloor u(M)/n \rfloor + 1$ in $\mathcal{A}$ and $\lfloor u(M)/n \rfloor$ in $\mathcal{B}$. 
If $N'\ne\emptyset$, let agent $i \in N'$ be an agent with a bundle worth $\lfloor u(M)/n \rfloor$ in $\mathcal{A}$, and let $g_i$ be a good with utility $0$ in $A_i$---this good exists because agent $i$ has at least $\lfloor u(M)/n \rfloor + 1$ goods in her bundle (due to $B_i$'s utility of $\lfloor u(M)/n \rfloor + 1$) but only has utility $\lfloor u(M)/n \rfloor$ in $A_i$. 
Let agent $j \in N'$ be an agent with a bundle worth $\lfloor u(M)/n \rfloor + 1$ in $\mathcal{A}$, and let $g_j$ be a good with utility $1$ in $A_j$---this good exists because $A_j$ has utility at least $1$. 
Exchange $g_i$ with $g_j$; it can be verified that the resulting allocation is EF1. 
As this exchange reduces the size of the set $N'$ by two, we can repeatedly make such exchanges between two agents in $N'$ until $N' = \emptyset$. 
Note that such exchanges can be performed in polynomial time.

At this point, we have shown that there exists an EF1 allocation $\mathcal{A}'$ such that an EF1 exchange path exists between $\mathcal{A}$ and $\mathcal{A}'$, and for every agent $i$, her bundles in $\mathcal{A}'$ and~$\mathcal{B}$ have the same utility. 
Define the item graph $G_{\mathcal{A}', \mathcal{B}}$ as in the beginning of \Cref{sec:three_agents}, and consider its subgraph with only the edges representing the goods with utility $1$. 
For each agent, the indegree and the outdegree of the corresponding vertex in this subgraph are equal, so we can perform exchanges to `resolve' these edges.
Specifically, suppose there is an edge $e_x = (i, j)$ corresponding to a good $g_x$, where $i \ne j$.
By the degree condition, there must exist another edge $e_y = (j, k)$ corresponding to a good $g_y$, where $j\ne k$ but possibly $k = i$.
We let agents $i$ and $j$ exchange $g_x$ and $g_y$, so $g_x$ is now with its correct owner, agent~$j$.
Hence, at least one more good goes to the correct agent after the exchange.
This exchange process can be performed in polynomial time, and no agent's utility changes during the process, which means that the intermediate allocations are all EF1.
Similarly, if we consider the subgraph with only the edges representing the goods with utility $0$, we can perform exchanges to resolve these edges as well. 
Therefore, there exists an EF1 exchange path from $\mathcal{A}'$ to $\mathcal{B}$, and thus an EF1 exchange path from $\mathcal{A}$ to~$\mathcal{B}$, and this path can be found in polynomial time.
\end{proof}

In spite of this positive result, the polynomial-time algorithm described in \Cref{thm:idenbin_2_connected} does not necessarily find an optimal EF1 exchange path between the two allocations.
In fact, even for the special case where the EF1 exchange graph $H$ and the exchange graph~$G$ coincide (e.g., when every agent assigns zero utility to every good, so every allocation is EF1), it is NP-hard to compute an optimal EF1 exchange path by \Cref{thm:dist_nphard}, regardless of whether optimality refers to the length of the shortest path in $G$ or in~$H$.
Hence, a polynomial-time algorithm for this task does not exist unless $\text{P} = \text{NP}$.
Moreover, we show next that, somewhat surprisingly, an optimal EF1 exchange path (with respect to $G$) is not guaranteed to exist even for identical binary utilities.

\begin{theorem} \label{thm:idenbin_3_optimal}
For each $n \geq 3$, there exists an instance with $n$ agents with identical binary utility functions satisfying the following properties: the EF1 exchange graph is connected, but for some pair of EF1 allocations, no optimal EF1 exchange path exists between them.
\end{theorem}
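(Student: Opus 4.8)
The plan is to exploit the fact that connectivity comes for free here: by \Cref{thm:idenbin_2_connected}, the EF1 exchange graph is connected for any number of agents with identical binary utilities, so the only real task is to exhibit a pair of EF1 allocations whose distance in $G$ cannot be matched by any EF1 path. I would construct an instance in which the good-transfer multigraph $G_{\mathcal{C},\mathcal{D}}$ of \Cref{sec:three_agents} is forced to decompose into ``mixed'' $2$-cycles (each swapping a value-$1$ good for a value-$0$ good) arranged cyclically around the agents, so that the three unit transfers they induce form a directed triangle on a perfectly balanced profile.

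Concretely, for $n=3$ I would take value-$1$ goods $x_1,x_2,x_3$ and value-$0$ goods $y_1,y_2,y_3$, and set $\mathcal{C} = (\{x_1,y_3\},\{x_2,y_1\},\{x_3,y_2\})$ and $\mathcal{D} = (\{x_3,y_1\},\{x_1,y_2\},\{x_2,y_3\})$. Both allocations give every agent utility exactly $1$, hence are EF1. In $G_{\mathcal{C},\mathcal{D}}$ the value-$1$ goods rotate forward ($x_1\colon 1\to2$, $x_2\colon 2\to3$, $x_3\colon 3\to1$) and the value-$0$ goods rotate backward, so the edges pair up into the three $2$-cycles $\{x_1,y_1\}$, $\{x_2,y_2\}$, $\{x_3,y_3\}$ between the pairs $\{1,2\}$, $\{2,3\}$, $\{3,1\}$. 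Since there are $6$ edges and no self-loops, any cycle partition has at most $3$ cycles and this decomposition is the \emph{unique} maximum one; by \Cref{lem:distance_between_allocations} the distance is $6-3=3$. For $n>3$ I would simply add agents $4,\dots,n$, each holding a single value-$1$ good that stays put in both $\mathcal{C}$ and $\mathcal{D}$; these contribute only self-loops, keep the profile flat at $(1,\dots,1)$, and leave both the distance and the argument below unchanged.

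The heart of the argument is that every \emph{distance-reducing} first exchange breaks EF1. Using the case analysis in the proof of \Cref{lem:distance_between_allocations}, I would argue that an exchange decreases the distance from $\mathcal{C}$ only if its two goods lie on a common cycle of the (unique) maximum decomposition and the exchange splits that cycle; for a mixed $2$-cycle the only such exchange swaps its two goods, realizing the corresponding unit transfer between two adjacent agents, whereas an exchange mixing goods from different cycles merely merges cycles and does not reduce the distance. Hence the first step of any \emph{optimal} EF1 path must perform one of the three unit transfers $1\to2$, $2\to3$, or $3\to1$, each of which turns the flat profile $(1,1,1)$ into a profile with a gap of $2$ (e.g.\ $(0,2,1)$), violating EF1. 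Thus no optimal EF1 path exists, while \Cref{thm:idenbin_2_connected} guarantees that some (necessarily longer) EF1 path does.

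I expect the main obstacle to be the rigorous justification that the maximum cycle partition is unique and that no distance-reducing exchange can be profile-preserving; this is where I must lean on \Cref{lem:distance_between_allocations} rather than merely checking the three obvious swaps, since a priori a clever same-value (hence profile-preserving, and therefore automatically EF1) exchange might also make progress toward $\mathcal{D}$. Verifying that every such same-value exchange strictly fails to shorten the $G$-distance---equivalently, that it does not increase $c^*_{\mathcal{C},\mathcal{D}}$---is the crux that rules out an optimal EF1 detour.
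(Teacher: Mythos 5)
Your construction is, up to relabelling, exactly the paper's instance: three value-$1$ goods cycling forward and three value-$0$ goods cycling backward among three agents, giving two EF1 allocations at distance $3$ whose unique maximum cycle decomposition consists of three mixed $2$-cycles (the paper uses $A_1=\{g_2,g_6\}$, $A_2=\{g_3,g_4\}$, $A_3=\{g_1,g_5\}$ versus $B_i=\{g_i,g_{i+3}\}$, and pads with empty-bundled dummies rather than single-good agents for $n>3$; both paddings work). You also identify the same key dichotomy: any exchange of a value-$1$ good for a value-$0$ good produces the profile $(0,2,1)$ and breaks EF1, so only profile-preserving (hence automatically EF1) exchanges are available. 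Where you diverge is in closing the argument that such exchanges make insufficient progress. You route this through \Cref{lem:distance_between_allocations}, asserting that an exchange shortens the $G$-distance only if it splits a cycle of the unique maximum decomposition---but, as you yourself flag, this ``only if'' is not an automatic consequence of uniqueness (the maximum decomposition of $G_{\mathcal{C}',\mathcal{D}}$ need not be obtainable by locally modifying that of $G_{\mathcal{C},\mathcal{D}}$), so you would still have to compute $c^*_{\mathcal{C}',\mathcal{D}}$ for each same-value exchange; by symmetry there are essentially two cases, and the computation does give $c^*=3$ in each, so your route can be completed. The paper sidesteps the $c^*$ machinery entirely with a more elementary count: a same-value exchange sends at most one good to its correct bundle (two value-$1$ goods never form a $2$-cycle in $G_{\mathcal{A},\mathcal{B}}$), leaving at least five misplaced goods, and since each exchange fixes at most two goods, at least three further exchanges are needed, for a total of at least $4>3$. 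You should either carry out the $c^*$ verification explicitly or adopt the misplaced-goods count, which requires no appeal to \Cref{lem:distance_between_allocations} at all.
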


\begin{proof}
For $n = 3$ agents, consider $\vec{s} = (2, 2, 2)$ and the utility of the goods as follows:
\begin{center}
\begin{tabular}{c|cccccc}
$g$    & $g_1$ & $g_2$ & $g_3$ & $g_4$ & $g_5$ & $g_6$ \\ \hline
$u(g)$ & $1$   & $1$   & $1$   & $0$   & $0$   & $0$   \\ 
\end{tabular}
\end{center}
Note that the EF1 exchange graph is connected by \Cref{thm:idenbin_2_connected}. 
We prove that an optimal EF1 exchange path between $\mathcal{A}$ and $\mathcal{B}$ does not exist, where $A_1 = \{g_2, g_6\}$, $A_2 = \{g_3, g_4\}$, $A_3 = \{g_1, g_5\}$, and $B_i = \{g_i, g_{i+3}\}$ for $i \in \{1, 2, 3\}$---it can be verified that both $\mathcal{A}$ and $\mathcal{B}$ are EF1, and the distance between $\mathcal{A}$ and $\mathcal{B}$ is $3$ (through exchanging $g_1 \leftrightarrow g_6$, $g_2 \leftrightarrow g_4$, and $g_3 \leftrightarrow g_5$). 
Consider any EF1 exchange path between $\mathcal{A}$ and $\mathcal{B}$, and let $\mathcal{A}'$ be the EF1 allocation adjacent to $\mathcal{A}$ on the exchange path. 
If a valuable good ($g_1$, $g_2$, or $g_3$) is exchanged with a non-valuable good ($g_4$, $g_5$, or $g_6$) from $\mathcal{A}$ to reach~$\mathcal{A}'$, then one agent has utility $0$ and another agent has utility $2$, which means that $\mathcal{A}'$ is not EF1. 
Therefore, the only exchanges possible from $\mathcal{A}$ are between valuable goods or between non-valuable goods. 
However, any of these exchanges causes at most one good to go to the correct bundle according to $\mathcal{B}$, so there are at least five goods in $\mathcal{A}'$ in the wrong bundle according to $\mathcal{B}$. 
As any exchange of goods reduces the number of goods in the wrong bundle by at most two, the distance between $\mathcal{A}'$ and $\mathcal{B}$ is at least $3$. 
This means that the distance between $\mathcal{A}$ and $\mathcal{B}$ is at least $4$.
It follows that no optimal EF1 exchange path exists between $\mathcal{A}$ and $\mathcal{B}$.

For $n > 3$ agents, simply add $n-3$ dummy agents who have the same utility function as the three original agents and have empty bundles.
\end{proof}

\subsection{Binary Utilities}
We saw in \Cref{thm:idenbin_2_connected} that the EF1 exchange graph is always connected for any number of agents with identical binary utilities.
Now, we consider the case where the agents have binary utilities which may differ between agents.
It turns out that the EF1 exchange graph is not necessarily connected in this case, even when there are three agents.
This also provides a contrast to the case of two agents (\Cref{thm:binary_2_optimal}).

\begin{theorem} \label{thm:binary_3_connected}
For each $n \geq 3$, there exists an instance with $n$ agents with binary utility functions such that the EF1 exchange graph is disconnected.
\end{theorem}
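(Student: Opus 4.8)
The plan is to exhibit an explicit instance with three agents and binary (but non-identical) valuations, together with two EF1 allocations $\mathcal{A}$ and $\mathcal{B}$ lying in different connected components of the EF1 exchange graph, and then to lift this to every $n \geq 3$ by padding with inert dummy agents. Since \Cref{thm:idenbin_2_connected} and \Cref{thm:binary_2_optimal} already guarantee connectivity for identical binary utilities and for two binary agents respectively, the construction must genuinely exploit \emph{both} the presence of a third agent and the disagreement between the agents' valuations. I would first record the convenient reformulation of EF1 under binary utilities: an allocation is EF1 if and only if $u_i(A_i) \geq u_i(A_j) - 1$ for all $i, j$, i.e.\ EF1 fails exactly when some agent with $u_i(A_i) = 0$ faces a bundle worth at least $2$ to her. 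The second basic observation is that a single exchange changes the own-utility of each of the two participating agents by at most one, so EF1 is destroyed precisely when such a unit change drops a participant to utility $0$ while a bundle worth $2$ to her is present.

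For the gadget I would keep all bundle sizes equal (e.g.\ $\vec{s} = (2,2,2)$ on six goods) and choose $\mathcal{A}$ and $\mathcal{B}$ to differ only by a cyclic reassignment of a few \emph{prize} goods among the three agents, so that their distance in the exchange graph is small. The valuations would be tuned so that every agent is simultaneously \emph{tight} (its own bundle only barely dominates a rival bundle) and so that the three tightness constraints point around the cycle; the intended effect mirrors the bottleneck of \Cref{thm:gen_2_connected} while pushing the rigidity one step past \Cref{thm:idenbin_3_optimal}, where the analogous rotation is still connected but merely fails to be optimal. After fixing the numbers I would verify directly, via the reformulated EF1 condition above, that both $\mathcal{A}$ and $\mathcal{B}$ are EF1.

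The heart of the argument---and the step I expect to be the main obstacle---is proving non-reachability: that no EF1 exchange path joins $\mathcal{A}$ to $\mathcal{B}$. Unlike the two-agent setting, where distance is governed by a simple counting/matching argument, here both EF1 components may be nontrivial, so a local ``every neighbor of $\mathcal{A}$ is bad'' check need not suffice. I would attack this by identifying a quantity $\Phi$ on EF1 allocations that is invariant under every EF1-preserving exchange yet differs on $\mathcal{A}$ and $\mathcal{B}$---for instance the owner (or a parity thereof) of a distinguished prize good that, in every EF1 allocation of the gadget, can never be handed off without momentarily creating a deprived agent facing a worth-$2$ bundle. Showing that such a good is genuinely \emph{frozen} throughout the entire EF1 component of $\mathcal{A}$ is exactly where the disagreement between valuations and the tightness of all three agents must be combined; should a clean invariant prove elusive, the fallback is a finite case analysis demonstrating that every exchange available at any EF1 allocation reachable from $\mathcal{A}$ preserves $\Phi$.

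Finally, to obtain the statement for every $n \geq 3$, I would add $n - 3$ dummy agents, each assigned the empty bundle and valuing every good at $0$. Such an agent can never give away a good, so it participates in no exchange; it also never envies (every bundle is worth $0$ to it) and is never envied beyond one good (its bundle is empty). Hence the EF1 exchange graph of the enlarged instance is isomorphic to that of the three-agent instance and remains disconnected, completing the proof.
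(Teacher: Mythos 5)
There is a genuine gap: your proposal never produces the object the theorem requires. You describe the \emph{shape} of a gadget ($\vec{s}=(2,2,2)$ on six goods, a cyclic rotation of ``prize'' goods, valuations ``tuned'' to make all three agents tight), but you do not write down any utilities, do not verify that the two allocations are EF1, and---most importantly---explicitly leave the non-reachability argument unresolved (``the step I expect to be the main obstacle,'' with an unspecified invariant $\Phi$ and a ``fallback'' case analysis as placeholders). A reduction of the problem to ``find an invariant that is preserved by all EF1 exchanges but separates $\mathcal{A}$ from $\mathcal{B}$'' is just a restatement of disconnectedness, not a proof of it. As written, only the padding step for $n>3$ (dummy agents with empty bundles and zero valuations, which can neither trade, envy, nor be envied) is complete, and that step matches the paper.

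You also dismiss too quickly the one strategy that actually works here. You write that a local ``every neighbor of $\mathcal{A}$ is bad'' check ``need not suffice,'' but when it \emph{does} hold it certainly suffices: $\mathcal{A}$ is then isolated in the EF1 exchange graph while $\mathcal{B}\neq\mathcal{A}$ is another vertex, so the graph is disconnected. The paper's construction is engineered precisely so that this local check succeeds: three agents, four goods with $u_1=u_2=(1,0,1,0)$ and $u_3=(0,1,1,0)$, $A_1=\{g_1,g_2\}$, $A_2=\{g_3,g_4\}$, $A_3=\emptyset$, and $\mathcal{B}$ with the bundles of agents $1$ and $2$ swapped. Agent $3$'s empty bundle forces every other bundle to contain at most one good from $\{g_2,g_3\}$, and each of the four possible exchanges either merges $g_2$ with $g_3$ (making agent $3$ envy by two) or strips a participating agent of all her value while the other holds two goods she values; hence $\mathcal{A}$ has no EF1 neighbor. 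Note also that your claimed characterization of when an exchange destroys EF1 (``precisely when a \emph{participant} drops to utility $0$ facing a worth-$2$ bundle'') is incorrect: a non-participating agent's envy can be pushed past the EF1 threshold when goods she values become concentrated in one bundle, and this is exactly the mechanism the correct construction exploits. To repair your proof you would need to supply concrete valuations and either carry out the isolated-vertex check or actually exhibit and verify the invariant you allude to.
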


\begin{proof}
For $n = 3$ agents, consider the utility of the goods as follows:
\begin{center}
\begin{tabular}{c|cccccc}
$g$      & $g_1$ & $g_2$ & $g_3$ & $g_4$ \\ \hline
$u_1(g)$ & $1$   & $0$   & $1$   & $0$   \\
$u_2(g)$ & $1$   & $0$   & $1$   & $0$   \\
$u_3(g)$ & $0$   & $1$   & $1$   & $0$
\end{tabular}
\end{center}
Let $\mathcal{A}$ and $\mathcal{B}$ be given such that $A_1 = \{g_1, g_2\}$, $A_2 = \{g_3, g_4\}$, $B_1 = \{g_3, g_4\}$, $B_2 = \{g_1, g_2\}$, and $A_3 = B_3 = \emptyset$---it can be verified that both $\mathcal{A}$ and $\mathcal{B}$ are EF1.
Consider any EF1 exchange path between $\mathcal{A}$ and $\mathcal{B}$, and let $\mathcal{A}'$ be the EF1 allocation adjacent to~$\mathcal{A}$ on the exchange path. 
We claim that $\mathcal{A}'$ cannot exist. 
The only possible exchanges from~$\mathcal{A}$ are $g_i \leftrightarrow g_{5-i}$ or $g_i \leftrightarrow g_{i+2}$ for some $i \in \{1, 2\}$. 
If $g_i$ is exchanged with $g_{5-i}$, then agent $i$'s bundle has zero utility from agent $i$'s perspective while agent $(3-i)$'s bundle has utility~$2$ from agent $i$'s perspective, so agent $i$ envies agent $(3-i)$ by more than one good. 
On the other hand, if $g_i$ is exchanged with $g_{i+2}$, then agent $i$'s bundle has utility $2$ from agent~$3$'s perspective, and since agent $3$ has an empty bundle, agent $3$ envies agent $i$ by more than one good. 
Therefore, $\mathcal{A}'$ does not exist, which contradicts the assumption that the path is an EF1 exchange path. 
It follows that no EF1 exchange path exists between $\mathcal{A}$ and $\mathcal{B}$.

For $n > 3$ agents, simply add dummy agents who assign zero value to every good and have empty bundles.
\end{proof}

\subsection{Identical Utilities} \label{sec:iden_util}
Let us now consider the case where the utilities are identical across agents, though they need not be binary. 
As with the case of binary utilities, there are instances in which the EF1 exchange graph is not connected even for three agents.

\begin{theorem} \label{thm:iden_3_connected}
For each $n \geq 3$, there exists an instance with $n$ agents with identical utility functions such that the EF1 exchange graph is disconnected.
\end{theorem}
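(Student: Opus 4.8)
The plan is to exhibit a single explicit instance with three identical agents and then pad it for $n > 3$. The key idea is to introduce a third agent whose bundle is deliberately ``heavy but flat,'' so that it is effectively frozen along every EF1 exchange path and, through its tight envy constraint, pins the bundle values of the other two agents. The obstruction then comes from a constrained two-agent reconfiguration that the connectivity result for two identical agents (\Cref{thm:iden_2_optimal}) no longer governs.

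Concretely, I would take four goods of distinct values $4,3,2,1$ (to be shared by agents $1$ and $2$) together with six goods of value $1$ held by agent $3$, under identical additive utilities $u$, with size vector $(2,2,6)$. Let $\mathcal{A}$ place $\{4,1\}$ with agent $1$ and $\{3,2\}$ with agent $2$, and let $\mathcal{B}$ be the same allocation with the roles of agents $1$ and $2$ interchanged; agent $3$ keeps its six unit-value goods in both. A direct check shows both allocations are EF1: in fact the constraints ``agent~$1$ (resp.\ $2$) does not envy agent~$3$'' are tight, since $u(A_3)=6$ while its top good has value $1$, giving threshold $6-1=5=u(A_1)=u(A_2)$.

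The heart of the argument is an invariant. First I would show that any exchange involving agent~$3$ is either a neutral swap of two unit-value goods (which merely relabels goods and never moves the value-$4$ good between agents $1$ and $2$) or strictly breaks EF1: moving any good of value $2$, $3$, or $4$ onto agent~$3$ drops the giver's bundle value below $5$, so the giver envies agent~$3$ by more than one good. Hence along every EF1 path agent~$3$'s bundle keeps value $6$ with top good $1$, which forces $u(A_1),u(A_2)\ge 5$ and therefore, since $u(A_1)+u(A_2)=10$, exactly $u(A_1)=u(A_2)=5$ at every reachable EF1 allocation. Because the values $4,3,2,1$ admit the \emph{unique} balanced split $\{4,1\}\mid\{3,2\}$, the only EF1 allocations (up to relabeling unit-value goods) are $\mathcal{A}$ and $\mathcal{B}$. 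Finally, no single EF1 exchange converts one into the other: an exchange within agents $1$ and $2$ swaps distinct-valued goods and thus pushes some bundle off value $5$, breaking EF1, while any exchange with agent~$3$ is merely a relabeling that leaves the value-$4$ good in place. Hence $\mathcal{A}$ and $\mathcal{B}$ lie in different components.

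The step I expect to be the main obstacle is establishing this \emph{global} invariant rather than a one-step blockage: since EF1 under identical utilities is lenient (one always subtracts the top good), naive constructions admit a ``safe'' first move, and indeed two identical agents are connected. The instance must therefore be tuned so that the threshold $u(A_3)-\max_{g\in A_3}u(g)$ equals exactly the balanced half-value $5$, simultaneously pinning the other two bundle values and ruling out every detour (including routing a good temporarily through agent~$3$). For $n>3$, I would append dummy agents, each holding a private copy of agent~$3$'s bundle (six fresh unit-value goods); the same freezing argument applies verbatim to every added agent, so $\mathcal{A}$ and $\mathcal{B}$ remain EF1 and mutually unreachable.
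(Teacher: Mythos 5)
Your proof is correct and follows essentially the same strategy as the paper's: an explicit three-agent counterexample in which a third agent's tight EF1 constraint freezes the goods that would need to move between agents $1$ and $2$, followed by padding with dummy agents for $n>3$. The only difference is cosmetic: the paper pins agents $1$ and $2$ from above via agent $3$'s envy toward them (agent $3$ holds a single valuable good), whereas you pin them from below via their envy toward agent $3$'s heavy flat bundle, which additionally requires your (correctly handled) invariant argument to dispose of the neutral unit-good swaps.
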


\begin{proof}
For $n = 3$ agents, consider the utility of the goods as follows:
\begin{center}
\begin{tabular}{c|ccccccc}
$g$    & $g_1$ & $g_2$ & $g_3$ & $g_4$ & $g_5$ & $g_6$ & $g_7$ \\ \hline
$u(g)$ & $4$   & $3$   & $1$   & $4$   & $2$   & $2$   & $4$  
\end{tabular}
\end{center}
Let $\mathcal{A}$ and $\mathcal{B}$ be given such that $A_1 = \{g_1, g_2, g_3\}$, $A_2 = \{g_4, g_5, g_6\}$, $B_1 = \{g_1, g_5, g_6\}$, $B_2 = \{g_2, g_3, g_4\}$, and $A_3 = B_3 = \{g_7\}$.
It can be verified that both $\mathcal{A}$ and $\mathcal{B}$ are EF1.
Consider any exchange path from $\mathcal{A}$ to $\mathcal{B}$.
At some point, a good $g \in \{g_2, g_3, g_5, g_6\}$ has to be exchanged, but this will inevitably cause agent $3$ to envy agent $1$ or agent $2$ by more than one good, so the exchange path cannot be EF1. 
Therefore, no EF1 exchange path exists between $\mathcal{A}$ and $\mathcal{B}$. 

For $n > 3$ agents, simply add $n - 3$ agents who have the same utility function as the three original agents and $n - 3$ goods with value~$4$ each, and allocate each of these goods to one of these $n - 3$ agents in both $\mathcal{A}$ and $\mathcal{B}$.
\end{proof}

We end this section with a result that determining whether an optimal EF1 exchange path exists between two allocations is NP-hard even for four agents with identical valuations. 
This can be shown via a reduction from the NP-hard problem \textsc{Partition}.

\begin{theorem} \label{thm:iden_4_optimal_nphard}
Deciding the existence of an optimal EF1 exchange path between two EF1 allocations is NP-hard, even for $n = 4$ agents with identical utility functions.
\end{theorem}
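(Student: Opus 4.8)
The plan is to reduce from \textsc{Partition}: given positive integers $a_1, \ldots, a_p$ with total sum $2S$, decide whether there is a subset summing to exactly $S$. I would construct an instance with $n=4$ agents and identical utilities, arranging the goods so that the ``bottleneck'' of the EF1 constraint forces a particular kind of balanced exchange. The key design idea is to encode the partition numbers as utilities of goods and to set up an initial allocation $\mathcal{A}$ and target allocation $\mathcal{B}$ whose (EF1-free) distance is some value $d$, computable via \Cref{lem:distance_between_allocations} from the cycle structure of $G_{\mathcal{A},\mathcal{B}}$. The goal is to make an optimal (length-$d$) EF1 exchange path exist \emph{if and only if} a valid partition exists, so that the NP-hardness of \textsc{Partition} transfers to our decision problem.

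The crucial engineering step is to choose utilities and bundles so that, along \emph{any} shortest exchange path, maintaining EF1 at the very first (or some designated) step requires splitting the partition numbers into two groups of equal sum. Concretely, I would place the partition-encoding goods so that they must be distributed between two agents in a single exchange round, where EF1 is preserved precisely when each of those two agents ends up with utility $S$ (any imbalance would make one agent's bundle exceed another's by more than the largest removable good, breaking EF1). Dummy goods of carefully chosen values (and possibly a third/fourth agent holding fixed high-value ``anchor'' goods, analogous to the role of agent~$0$ in \Cref{thm:gen_2_connected_pspace} or the anchor good $g_7$ in \Cref{thm:iden_3_connected}) would be used to tighten the EF1 margins so that only the balanced split survives, while still guaranteeing that \emph{some} (possibly longer) EF1 path always exists so the graph stays connected.

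The main obstacle I anticipate is the interaction between the two competing requirements: the path must be \emph{optimal} in $G$ (length exactly equal to the distance), yet every optimal path must be forced through a partition-encoding bottleneck. Unlike the connectivity results, here I cannot simply take detours; I must argue that \emph{every} shortest path in $G$ between $\mathcal{A}$ and $\mathcal{B}$ has a prescribed structure (e.g., which pairs of goods get swapped, as dictated by the maximum cycle decomposition of $G_{\mathcal{A},\mathcal{B}}$), and then show EF1 can be maintained along some such shortest path exactly when the partition is solvable. Controlling the shortest-path structure is delicate because \Cref{lem:distance_between_allocations} only pins down the \emph{length} via $c^*_{\mathcal{A},\mathcal{B}}$, not the individual swaps; I would need the cycle structure of $G_{\mathcal{A},\mathcal{B}}$ to be essentially unique (e.g., a single long cycle or a controlled family of cycles) so that the set of shortest paths is understood well enough to pin down where the EF1 constraint bites.

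Assembling the reduction, I would (i) define the \textsc{Partition} gadget goods and the anchor/dummy goods with explicit utilities; (ii) specify $\mathcal{A}$ and $\mathcal{B}$ and verify both are EF1; (iii) compute the distance $d = m - c^*_{\mathcal{A},\mathcal{B}}$ and characterize the shortest exchange paths via the cycle decomposition; (iv) show the forward direction, that a partition with subset-sum $S$ yields an EF1 path of length $d$ by ordering the swaps so the bottleneck step produces the balanced split; and (v) show the converse, that any length-$d$ EF1 path must pass through a balanced split and hence yields a valid partition. Since the construction is polynomial in $p$ and $S$ (with unary or suitably bounded encodings to keep things polynomial), and \textsc{Partition} is NP-hard, this establishes NP-hardness of deciding the existence of an optimal EF1 exchange path for $n=4$ identical utilities.
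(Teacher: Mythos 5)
Your plan coincides with the paper's proof: it also reduces from \textsc{Partition} with $n=4$ identical-utility agents, encoding the numbers $t_i$ as utilities of goods held by agent~$1$, pairing them with zero-utility goods held by agent~$2$, and adding two anchor agents whose bundles $\{c_1,c_2\}$ and $\{d_1,d_2\}$ must be swapped so that the first $c\leftrightarrow d$ exchange leaves one anchor agent with utility only $S$, which forces the $a_i$'s to be split into two sets of sum exactly $S$ at that moment. The obstacle you anticipate about controlling all shortest paths is resolved in the paper exactly as you suggest: the cycle decomposition of $G_{\mathcal{A},\mathcal{B}}$ consists only of $2$-cycles, so every optimal path is a sequence of forced pairwise swaps $a_i\leftrightarrow b_j$ and $c_i\leftrightarrow d_j$.
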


\begin{proof}
We shall reduce the NP-hard problem \textsc{Partition} to this problem. 
Recall that the \textsc{Partition} problem is the task of deciding whether a multiset $T = \{t_1, \ldots, t_k\}$ of positive integers can be partitioned into two subsets such that the sum of the integers in one subset is equal to that in the other subset. 
Let the sum of all the integers in $T$ be $2S$.

Define an instance of the EF1 exchange path problem with $n = 4$ agents and the set of goods $M = \{a_0, a_1, \ldots, a_k, b_0, b_1, \ldots, b_k, c_1, c_2, d_1, d_2\}$. 
The utility of each good is defined as follows:
\begin{itemize}
    \item $u(a_0) = u(b_0) = u(c_1) = 2S$,
    \item $u(d_1) = u(d_2) = S$,
    \item $u(a_i) = t_i$ for all $i \in \{1, \ldots, k\}$,
    \item $u(b_i) = u(c_2) = 0$ for all $i \in \{1, \ldots, k\}$.
\end{itemize}

The initial allocation $\mathcal{A} = (A_1, A_2, A_3, A_4)$ and the target allocation $\mathcal{B} = (B_1, B_2, B_3, B_4)$ are given by $A_1 = \{a_0, a_1, \ldots, a_k\}$, $A_2 = \{b_0, b_1, \ldots, b_k\}$, $B_1 = \{a_0, b_1, \ldots, b_k\}$, $B_2 = \{b_0, a_1, \ldots, a_k\}$, $A_3 = B_4 = \{c_1, c_2\}$, and $A_4 = B_3 = \{d_1, d_2\}$---it can be verified that both $\mathcal{A}$ and $\mathcal{B}$ are EF1. 
Note that this instance can be constructed in polynomial time, and the distance between $\mathcal{A}$ and $\mathcal{B}$ is $k + 2$.

First, suppose that $T$ can be partitioned into two subsets of sum $S$ each, say $\{t_{i_1}, \ldots, t_{i_\ell}\}$ has sum $S$.
We perform the following exchanges starting from $\mathcal{A}$:
\begin{itemize}
    \item First, exchange $\{a_{i_1}, \ldots, a_{i_\ell}\}$ with $\{b_{i_1}, \ldots, b_{i_\ell}\}$ pair-by-pair. 
    At this point, agents $1$ and $2$ have bundles worth $3S$ each, and agents $3$ and $4$ have bundles worth $2S$ each.
    \item Next, exchange $c_1$ with $d_1$, and exchange $c_2$ with $d_2$. 
    \item Finally, exchange $\{a_1, \ldots, a_k\} \setminus \{a_{i_1}, \ldots, a_{i_\ell}\}$ with $\{b_1, \ldots, b_k\} \setminus \{b_{i_1}, \ldots, b_{i_\ell}\}$ pair-by-pair.
\end{itemize}
The allocation resulting from this sequence of exchanges is $\mathcal{B}$. 
It can be verified that this exchange path has length $k + 2$ and every intermediate allocation is EF1.
Hence, there exists an optimal EF1 exchange path between $\mathcal{A}$ and $\mathcal{B}$.

Conversely, suppose that there exists an optimal EF1 exchange path between $\mathcal{A}$ and~$\mathcal{B}$.
The only exchanges possible are $a_i \leftrightarrow b_j$ for some $i, j \in \{1, \ldots, k\}$ and $c_i \leftrightarrow d_j$ for some $i, j \in \{1, 2\}$. 
In particular, at some point, $c_i$ must be exchanged with $d_j$ for the first time for some $i, j \in \{1, 2\}$. 
Consider the allocation following this exchange.
One of agents $3$ and $4$ now has utility only $S$.
By assumption, this allocation is EF1, so this agent does not envy agent $1$ and agent $2$.
Removing the highest-utility good from agent~$1$'s and agent $2$'s bundle (i.e., $a_0$ and $b_0$), the utility of each of the remaining bundles must be at most $S$. 
The only way this is possible is that $\{a_1, \ldots, a_k\}$ can be partitioned into two subsets such that the utility of each subset is exactly $S$, and each of agents $1$ and $2$ receives exactly one of those subsets. 
Correspondingly, this shows that $T$ can be partitioned into two subsets of sum $S$ each.
\end{proof}

\section{Conclusion and Future Work}

In this paper, we have initiated the study of reachability problems in fair division by investigating the connectivity of the EF1 exchange graph and the optimality of EF1 exchange paths. 
We showed that even for two agents, an EF1 exchange path between two given EF1 allocations does not necessarily exist.
On the positive side, such a path always exists if both agents have identical or binary utility functions---in these cases, we can also ensure an optimal path regardless of EF1 considerations, and the path can be found in polynomial time.
For three or more agents, however, the problem becomes much less tractable, both in terms of existence and computation.
In particular, we proved that finding the smallest number of exchanges between two allocations is NP-hard even if we were to ignore the EF1 constraints, and deciding whether an EF1 exchange path between two allocations exists is PSPACE-complete.
Moreover, the existence of an EF1 exchange path cannot be guaranteed even if the utilities are identical \emph{or} binary, although such a guarantee is possible if the utilities are identical \emph{and} binary.

Our work leaves several questions and directions for future research.
Firstly, while determining the existence of an EF1 exchange path between two given allocations is PSPACE-complete in general, an intriguing question is whether this can be done in polynomial time for two agents. 
In addition, for the negative results obtained in our paper, one could ask whether an (optimal or otherwise) exchange path between EF1 allocations exists if we allow the intermediate allocations to be \emph{envy-free up to $k$ goods (EF$k$)} for some small $k > 1$.
Extending our results to fairness notions other than EF1---for example, the share-based notion \emph{maximin share fairness (MMS)} \citep{Budish11,KurokawaPrWa18}---is also a meaningful direction.
Finally, in addition to (or instead of) exchanging goods between agents, one may also consider the setting where an agent \emph{transfers} one good to another agent in each operation---in this case, the size of the allocation does not need to be fixed.
In \Cref{app:transfer}, we present some results concerning this setting, which differs from the exchange-only setting in interesting ways.

\section*{Acknowledgments}

This work was partially supported by the Singapore Ministry of Education under grant
number MOE-T2EP20221-0001, by JST PRESTO under grant number JPMJPR20C1, by JSPS KAKENHI under grant number JP20H05795, by JST ERATO under grant number JPMJER2301, and by an NUS Start-up Grant.
We thank the anonymous AAAI 2024 and Algorithmica reviewers for their valuable comments.

\bibliographystyle{plainnat}
\bibliography{refs}

\appendix

\section{Transfers}
\label{app:transfer}

Thus far, we have considered the operation where two agents exchange a pair of goods with each other. 
In this appendix, we examine the scenario where in each operation, an agent is allowed to \emph{transfer} a good from her bundle to another agent's bundle without receiving a good from the latter agent in return. 
This allows more flexibility in the allocation size vector, as it is now possible for this vector to change after each (transfer) operation.

First, we consider the setting where the agents are allowed to transfer goods one at a time, but \emph{not} allowed to exchange goods. We define \emph{(EF1) transfer graph} and \emph{(EF1) transfer path} analogously to the corresponding exchange definitions in \Cref{sec:prelim}. Note that for transfer graphs, the set of vertices consists of \emph{all} allocations, rather than only those with a certain size vector. It turns out that this setting is even more restricted than the exchange-only setting.

\begin{theorem} 
For each $n \geq 2$, there exists an instance with $n$ agents with identical binary utility functions such that the EF1 transfer graph is disconnected.
\end{theorem}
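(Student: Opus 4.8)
The plan is to reuse the structural fact underlying \Cref{thm:idenbin_2_connected}: under identical binary utilities an allocation is EF1 if and only if the utilities of all bundles lie within $1$ of one another. I would then choose the instance that makes this constraint maximally rigid, namely $m = n$ goods each of value $1$ under the common utility $u$, so that $u(M) = n$. Spreading a total utility of $n$ across $n$ agents within a window of width $1$ forces \emph{every} bundle to have value exactly $1$: if some bundle had value $0$ then, since the utilities sum to $n$, a pigeonhole argument gives another bundle of value at least $2$, breaking the width-$1$ condition. Since each good is worth $1$, a value of exactly $1$ means each agent holds exactly one good. Hence the EF1 allocations are precisely the $n!$ bijective assignments of the $n$ goods to the $n$ agents, and these are exactly the vertices of the EF1 transfer graph.

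Next I would show that each of these vertices is isolated. A transfer moves a single good from one agent to another. Because every good here is worth $1$, after any transfer the donor's bundle drops to value $0$ while the recipient's rises to value $2$, a gap of $2$. The resulting allocation is therefore not EF1, so the corresponding transfer edge is absent from the EF1 transfer graph. Consequently every EF1 allocation has degree $0$ in the EF1 transfer graph.

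To conclude, for $n \geq 2$ there are $n! \geq 2$ distinct bijective assignments (for instance, two that differ only by swapping which good two fixed agents hold), and since every vertex is isolated no transfer path connects any two of them; the EF1 transfer graph is therefore disconnected. I do not expect a genuine obstacle here—the one step needing care is the characterization that no EF1 allocation other than the bijective ones exists, which is exactly what rules out spurious vertices that could otherwise carry edges. It is worth noting the contrast with the exchange setting driving \Cref{thm:idenbin_2_connected}: an exchange preserves all bundle sizes and here would simply swap two value-$1$ goods without changing any utility, whereas a transfer necessarily unbalances the values, which is precisely what makes the transfer graph strictly more fragile.
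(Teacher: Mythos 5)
Your proof is correct and takes essentially the same approach as the paper: all goods are worth $1$, so any single transfer creates a utility gap of $2$ between the donor and the recipient, isolating every EF1 allocation in the transfer graph. The paper uses $2n$ goods (two per agent) and only argues that the specific initial allocation is isolated, whereas you use $n$ goods and fully characterize the EF1 allocations as the bijective assignments, but the core mechanism is identical.
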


\begin{proof}
For $n$ agents, consider $2n$ goods, each with utility $1$ to every agent. 
In the initial allocation each agent has two goods each, while in the target allocation agent $1$'s and~$2$'s bundles are exchanged from the initial allocation. 
As the only operation available is a transfer operation, some agent will have three goods and another agent will have only one good after a transfer, so the allocation cannot be EF1.
\end{proof}

In light of the above negative result, it is more interesting to allow exchange operations on top of transfer operations. We define \emph{(EF1) exchange-and-transfer graph} and \emph{(EF1) exchange-and-transfer path} analogously to the transfer versions. Likewise, the set of vertices consists of \emph{all} allocations, rather than only those with a certain size vector.

Having the additional transfer operation on top of the exchange operation allows more flexibility in reaching other allocations. 
In general, a transfer in the exchange-and-transfer setting is equivalent to supplementing every agent with sufficiently many dummy goods (with zero utility) in the exchange-only setting and exchanging a normal good with a dummy good.
This allows us to achieve the same positive results for connectivity as in the exchange-only setting for two agents with identical or binary utilities, and for any number of agents with identical binary utilities (Theorems \ref{thm:iden_2_optimal}, \ref{thm:binary_2_optimal}, and \ref{thm:idenbin_2_connected}).

\begin{theorem} 
Let an instance with $n = 2$ agents and identical utilities be given. 
Then, the EF1 exchange-and-transfer graph is connected.
Moreover, an EF1 exchange-and-transfer path between any two allocations can be found in polynomial time.
\end{theorem}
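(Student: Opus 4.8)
The plan is to reduce this exchange-and-transfer problem to the pure exchange setting of \Cref{thm:iden_2_optimal} via a dummy-good padding, exactly along the lines sketched before the statement. Let $\mathcal{A} = (A_1, A_2)$ and $\mathcal{B} = (B_1, B_2)$ be two arbitrary EF1 allocations of the $m$ real goods; note that their size vectors may differ, which is precisely what transfers---but not exchanges---are able to reconcile. I would introduce $m$ fresh \emph{dummy} goods, each of zero utility to both agents, and consider the padded instance on $2m$ goods with the fixed size vector $(m, m)$. Since $|A_1| + |A_2| = m$, I can pad $\mathcal{A}$ to an allocation $\hat{\mathcal{A}}$ of size vector $(m,m)$ by giving agent~$1$ an arbitrary set of $m - |A_1|$ dummies and agent~$2$ the remaining $|A_1|$ dummies; I would pad $\mathcal{B}$ to $\hat{\mathcal{B}}$ analogously, using the same pool of $m$ dummies. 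Because dummies contribute zero utility, the utility each agent assigns to a padded bundle equals the utility of its real part, so $\hat{\mathcal{A}}$ and $\hat{\mathcal{B}}$ are EF1 in the padded instance, which still consists of two agents with identical utilities.

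First, I would invoke \Cref{thm:iden_2_optimal} on the padded instance: since $\hat{\mathcal{A}}$ and $\hat{\mathcal{B}}$ share the size vector $(m,m)$, they are vertices of the same exchange graph, and the theorem yields an EF1 exchange path $\hat{\mathcal{A}} = \widehat{\mathcal{C}}_0, \widehat{\mathcal{C}}_1, \ldots, \widehat{\mathcal{C}}_r = \hat{\mathcal{B}}$ computable in polynomial time (the padded instance has size polynomial in the original one). Each step swaps one pair of goods and falls into exactly one of three types: a swap of two real goods, a swap of a real good for a dummy, or a swap of two dummies.

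Next, I would \emph{strip} the dummies from each $\widehat{\mathcal{C}}_z$ to obtain an allocation $\mathcal{C}_z$ of the real goods only, and argue that the sequence $\mathcal{C}_0, \ldots, \mathcal{C}_r$ (after deleting consecutive duplicates) is an EF1 exchange-and-transfer path from $\mathcal{A}$ to $\mathcal{B}$. The key correspondence is that stripping turns a real-for-dummy swap into a \emph{transfer} of that real good between the two agents, a real-for-real swap into an ordinary \emph{exchange}, and a dummy-for-dummy swap into a no-op (which I simply drop). EF1 of each $\mathcal{C}_z$ follows because deleting zero-value dummies leaves every agent's utility for every bundle unchanged, so the EF1 property of $\widehat{\mathcal{C}}_z$ descends to $\mathcal{C}_z$; and the whole translation is linear in the path length, preserving the polynomial-time guarantee.

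The main obstacle---and really the only conceptual content beyond \Cref{thm:iden_2_optimal}---is the reconciliation of differing size vectors: exchanges alone preserve bundle sizes, whereas $\mathcal{A}$ and $\mathcal{B}$ need not. The dummy padding is what lets a single fixed-size exchange instance simulate size changes, so the delicate points to verify are (i) that exactly $m$ dummies suffice to pad both allocations to the common vector $(m,m)$, and (ii) that the translation back to the original instance genuinely produces valid exchange or transfer operations with EF1 preserved at every step. Once these are checked, both connectivity and the polynomial-time claim follow immediately.
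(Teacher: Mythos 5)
Your proposal is correct and matches the paper's own argument: the paper likewise pads the instance with $m$ zero-utility dummy goods to reach the fixed size vector $(m,m)$, invokes the exchange-only result (\Cref{thm:iden_2_optimal}), and translates real-for-dummy swaps into transfers, real-for-real swaps into exchanges, and dummy-for-dummy swaps into no-ops. No gaps.
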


\begin{theorem}
Let an instance with $n = 2$ agents and binary utilities be given. 
Then, the EF1 exchange-and-transfer graph is connected.
Moreover, an EF1 exchange-and-transfer path between any two allocations can be found in polynomial time.
\end{theorem}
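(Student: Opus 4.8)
The plan is to reduce the exchange-and-transfer problem to the exchange-only problem of \Cref{thm:binary_2_optimal} by introducing dummy goods, exactly along the lines sketched in the paragraph preceding this theorem. Let $\mathcal{A}=(A_1,A_2)$ and $\mathcal{B}=(B_1,B_2)$ be two EF1 allocations of $M$; note they may have different size vectors. First I would form an augmented instance on the good set $M' = M \cup D$, where $D$ is a set of $m$ fresh \emph{dummy} goods, each assigned utility $0$ by both agents (so binarity is preserved). I would then lift $\mathcal{A}$ and $\mathcal{B}$ to allocations $\mathcal{A}'$ and $\mathcal{B}'$ of $M'$ sharing the common size vector $(m,m)$: place $|A_2|$ dummies with agent $1$ and $|A_1|$ dummies with agent $2$ in $\mathcal{A}'$, and symmetrically $|B_2|$ and $|B_1|$ dummies in $\mathcal{B}'$. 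Since $|A_1|+|A_2| = |B_1|+|B_2| = m = |D|$, both lifts use exactly the $m$ dummies and yield size vector $(m,m)$, so $\mathcal{A}'$ and $\mathcal{B}'$ are vertices of one and the same exchange graph.

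Next I would verify that the lifts remain EF1 and that \Cref{thm:binary_2_optimal} applies. Because dummy goods contribute zero utility, for every agent the utility of any lifted bundle equals that of its real part, so the EF1 witness good for $\mathcal{A}$ (resp.\ $\mathcal{B}$) remains a valid witness for $\mathcal{A}'$ (resp.\ $\mathcal{B}'$); hence both lifts are EF1. The augmented instance has two agents and binary utilities, so by \Cref{thm:binary_2_optimal} there is an EF1 exchange path $\mathcal{A}' = \mathcal{C}'_0, \mathcal{C}'_1, \ldots, \mathcal{C}'_q = \mathcal{B}'$ in the augmented exchange graph, computable in polynomial time (the augmented instance has size polynomial in the original).

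Finally I would project this path back to the original instance. For each $\mathcal{C}'_z$ let $\mathcal{C}_z$ be its restriction to the real goods $M$. Each augmented exchange is one of three types: a real--real exchange, which projects to an exchange; a real--dummy exchange, which projects to a \emph{transfer} of the real good; or a dummy--dummy exchange, which leaves the projection unchanged. Deleting the stationary steps yields a sequence in which consecutive allocations differ by a single exchange or transfer, i.e.\ an exchange-and-transfer path from $\mathcal{A}$ to $\mathcal{B}$. Since dummy utilities vanish, each projected bundle has the same per-agent utility as its augmented counterpart and, for binary utilities, the same maximum single-good value; hence each $\mathcal{C}_z$ inherits EF1 from $\mathcal{C}'_z$, making the projected path an EF1 exchange-and-transfer path. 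The whole construction runs in polynomial time.

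The main obstacle I anticipate is bookkeeping rather than a deep difficulty: I must choose $|D|$ and the dummy placement so that the two lifts genuinely share one size vector (handled above by the count $|A_1|+|A_2| = m$), and I must argue carefully that EF1 transfers cleanly across the projection, including the corner case in which an agent's real bundle is empty while its augmented bundle holds only dummies---there the EF1 constraint against that agent becomes vacuous after projection, so it causes no trouble.
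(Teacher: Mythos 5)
Your proposal is correct and follows essentially the same route as the paper: the paper proves the analogous identical-binary statement (Theorem~\ref{thm:idenbin_2_exchangetransfer}) by augmenting with $(n-1)m$ zero-utility dummy goods to reach the uniform size vector, invoking the exchange-only result, and reinterpreting real--dummy exchanges as transfers, and it notes that the present theorem is obtained in the same manner from Theorem~\ref{thm:binary_2_optimal}. Your dummy count, size-vector bookkeeping, EF1-preservation argument, and projection of the three exchange types all match the paper's construction.
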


\begin{theorem} \label{thm:idenbin_2_exchangetransfer}
Let an instance with $n \geq 2$ agents and identical binary utilities be given.
Then, the EF1 exchange-and-transfer graph is connected. 
Moreover, an EF1 exchange-and-transfer path between any two allocations can be found in polynomial time.
\end{theorem}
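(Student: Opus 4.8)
The plan is to reduce the claim to the exchange-only connectivity result \Cref{thm:idenbin_2_connected} by padding the instance with zero-value ``dummy'' goods, realizing each transfer as an exchange against a dummy (as anticipated in the remark preceding the statement).

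First I would record the only structural fact about EF1 needed here: with identical binary utilities, writing $v_i$ for the number of value-$1$ goods in $A_i$, an allocation is EF1 if and only if $\max_i v_i - \min_i v_i \le 1$; in particular, EF1 is completely insensitive to how the value-$0$ goods are distributed. Consequently, introducing additional value-$0$ goods keeps the utilities identical and binary, and an allocation of the enlarged good set is EF1 if and only if the allocation of the original goods obtained by deleting the added goods is EF1.

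Next, given two EF1 allocations $\mathcal{A}$ and $\mathcal{B}$ of the original $m$ goods, I would form a padded instance by adjoining $D = (n-1)m$ fresh dummy goods of value $0$ and fixing the common size vector $\vec{s}^{\,+} = (m, \dots, m)$. Since $|A_i|, |B_i| \le m$ for every $i$, each of $\mathcal{A}$ and $\mathcal{B}$ can be completed to an allocation $\mathcal{A}^+$, respectively $\mathcal{B}^+$, of size vector $\vec{s}^{\,+}$ by distributing the dummies (the total deficit $\sum_i (m - |A_i|) = (n-1)m = D$ matches exactly, and likewise for $\mathcal{B}$). By the observation above, both $\mathcal{A}^+$ and $\mathcal{B}^+$ are EF1 in the padded instance, which still has identical binary utilities, so \Cref{thm:idenbin_2_connected} supplies an EF1 exchange path $\mathcal{A}^+ = \mathcal{C}_0, \mathcal{C}_1, \dots, \mathcal{C}_T = \mathcal{B}^+$ in the padded instance, computable in polynomial time; note that $D$ is polynomial in the input size, so the padded instance has polynomial size.

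Finally, I would project this path back to the original instance by deleting the dummy goods from every $\mathcal{C}_z$. Each step $\mathcal{C}_{z-1} \to \mathcal{C}_z$ exchanges exactly one pair of goods, and I would split into three cases: if both goods are original, the step projects to a genuine exchange; if one is original and the other a dummy, the original good simply moves from one agent's bundle to another's, i.e., a transfer; and if both are dummies, the projection is unchanged and the step is discarded. Because every $\mathcal{C}_z$ is EF1 and EF1 ignores the dummies, each projected allocation is EF1, and consecutive projected allocations are either equal or adjacent in the EF1 exchange-and-transfer graph. Discarding the repetitions yields an EF1 exchange-and-transfer path from $\mathcal{A}$ to $\mathcal{B}$, and the entire construction runs in polynomial time. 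The step needing the most care is this case analysis of the projection---in particular verifying that an exchange against a dummy is exactly one transfer and that deleting dummies preserves EF1---together with checking that the single padded size vector $\vec{s}^{\,+}$ simultaneously accommodates both $\mathcal{A}$ and $\mathcal{B}$ so that \Cref{thm:idenbin_2_connected} is applicable.
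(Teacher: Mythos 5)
Your proposal is correct and follows essentially the same route as the paper's proof: pad with $(n-1)m$ zero-value dummy goods and size vector $(m,\dots,m)$, invoke \Cref{thm:idenbin_2_connected} on the padded instance, and project each exchange back as an exchange, a transfer, or a no-op depending on how many dummies it involves. The additional details you supply (the characterization of EF1 under identical binary utilities and the check that the deficits sum to exactly $(n-1)m$) are accurate and only make the argument more explicit.
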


We shall prove \Cref{thm:idenbin_2_exchangetransfer}; the other two results can be shown in a similar manner.

\begin{proof}[Proof of \Cref{thm:idenbin_2_exchangetransfer}]
Let an instance in the exchange-and-transfer setting be given and let $\mathcal{A}$ and $\mathcal{B}$ be the initial and target EF1 allocations, respectively.
Consider the following instance in the exchange-only setting: augment the set of goods with $(n-1)m$ dummy goods yielding zero utility to all agents, and let the size vector be $(s_1', \ldots, s_n') = (m, \ldots, m)$, where $m$ denotes the number of goods in the \emph{original} instance. 
For the initial and target allocations $\mathcal{A}'$ and $\mathcal{B}'$ in the exchange-only setting, each agent receives the same set of goods as she had in $\mathcal{A}$ and~$\mathcal{B}$, respectively, together with a number of dummy goods so that her bundle has exactly $m$ goods.
Note that the utility functions remain identical and binary in the new instance.

By \Cref{thm:idenbin_2_connected}, there exists an EF1 exchange path between $\mathcal{A}'$ and $\mathcal{B}'$. 
Take one such EF1 exchange path, and consider each operation. 
If two non-dummy goods are exchanged, we exchange the same two goods in the original exchange-and-transfer instance. 
If one non-dummy good and one dummy good are exchanged, we transfer  the non-dummy good in the original instance to the other agent involved in the exchange. 
Else, if two dummy goods are exchanged, we do nothing. 
Since the dummy goods do not play any role in determining whether an allocation is EF1 in the new instance, there exists an EF1 exchange-and-transfer path between $\mathcal{A}$ and $\mathcal{B}$ in the original instance. 
The polynomial time claim follows because, by \Cref{thm:idenbin_2_connected}, an EF1 exchange path in the new instance can be found in polynomial time.
\end{proof}

In addition to admitting analogous results, the exchange-and-transfer setting sometimes allows us to derive positive results which cannot be attained in the exchange-only setting.
For instance, we show that the counterexample in \Cref{thm:iden_3_connected} is no longer applicable in the exchange-and-transfer context. 
Recall that the counterexample is for three agents with identical utilities having goods with the following utilities:
\begin{center}
\begin{tabular}{llll}
Agent $1$: & $4$, & $3$, & $1$ \\
Agent $2$: & $4$, & $2$, & $2$ \\
Agent $3$: & $4$  &      &
\end{tabular}
\end{center}
and the target allocation is to exchange $\{3, 1\}$ with $\{2, 2\}$, where we refer to a good by its utility. 
First, the good with utility $3$ can be \emph{transferred} from agent~$1$ to agent $3$. Next, the goods with utilities $1$ and $2$ can be exchanged. After that, the remaining good with utility $2$ in agent $2$'s bundle can be transferred to agent $1$. Finally, the good with utility $3$ in agent $3$'s bundle can be transferred to agent $2$.

In fact, we now state and prove a positive result which is more general than this observation.

\begin{theorem} \label{thm:exchange_and_transfer}
Suppose three agents with identical utilities have goods with the following utilities:
\begin{center}
\begin{tabular}{llllll}
Agent $1$: & $a_0$, & $a_1$, & $a_2$, & $\ldots$, & $a_k$ \\
Agent $2$: & $b_0$, & $b_1$, & $b_2$, & $\ldots$, & $b_k$ \\
Agent $3$: & $c_0$  &        &        &           &  
\end{tabular}
\end{center}
for some positive integer $k$, where $\min \{ a_0, b_0, c_0 \} \geq \max \{ \sum_{\ell=1}^k a_\ell, \sum_{\ell=1}^k b_\ell \}$. Then, there exists an EF1 exchange-and-transfer path of length $k + 2$ between this allocation and the allocation with $\{a_1, \ldots, a_k\}$ and $\{b_1, \ldots, b_k\}$ exchanged.
\end{theorem}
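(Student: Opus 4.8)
The plan is to exhibit an explicit exchange-and-transfer path of length exactly $k+2$ and then verify that every intermediate allocation is EF1. Write $S_a = \sum_{\ell=1}^k a_\ell$ and $S_b = \sum_{\ell=1}^k b_\ell$, and set $M = \max\{S_a, S_b\}$, so the hypothesis reads $\min\{a_0, b_0, c_0\} \ge M$. I call $a_0, b_0, c_0$ the \emph{big} goods and the remaining $2k$ goods the \emph{small} goods. The key structural feature is that throughout my path each agent always keeps exactly one big good (agent $1$ keeps $a_0$, agent $2$ keeps $b_0$, agent $3$ keeps $c_0$), and only small goods move. Since utilities are identical and each big good has value $\ge M$, which is at least the value of any single small good, the big good in a bundle is always its most valuable good; hence, writing $\sigma_i$ for the total value of the small goods held by agent $i$, the EF1 condition between $i$ and $j$ reduces to $\sigma_j - \sigma_i \le (\text{big good of } i)$, for which it suffices that $\max_i \sigma_i - \min_i \sigma_i \le M$. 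Thus the whole problem becomes: relocate the small goods to the target configuration while keeping the running totals $\sigma_1, \sigma_2, \sigma_3$ within $M$ of one another.

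The construction is as follows. Let $g^\ast$ be a small good of maximum value; by symmetry assume $g^\ast$ is one of agent $1$'s goods, say $a_p$ (the mirror construction, parking one of agent $2$'s goods, handles the other case). First, I transfer $a_p$ to agent $3$ (one transfer). Next, I perform $k-1$ exchanges between agents $1$ and $2$, moving agent $1$'s remaining small goods $\{a_\ell : \ell \ne p\}$ out and bringing in all of agent $2$'s small goods except one designated leftover $b_q$, which I take to be a smallest $b$. Finally, I deliver the two remaining misplaced goods by transfers: $b_q$ from agent $2$ to agent $1$, and the parked $a_p$ from agent $3$ to agent $2$. This yields the target allocation in $1 + (k-1) + 2 = k+2$ operations, matching the claimed length. (When $k=1$ the exchange phase is empty and the path is three transfers.)

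Verifying the three transfer steps is routine: in each, one computes the three totals $\sigma_i$ and checks $\max_i \sigma_i - \min_i \sigma_i \le M$ using $S_a, S_b \le M$ and $a_p, b_q \le M$. The heart of the argument, and the step I expect to be the main obstacle, is the exchange phase. There, agent $3$'s total is frozen at $a_p$, and agents $1$ and $2$ share a fixed total $S_a + S_b - a_p$; as I perform the swaps one at a time, $\sigma_1$ changes by the difference of the two swapped values. I must order and pair the $k-1$ swaps so that $\sigma_1$ (and hence $\sigma_2$) never leaves the admissible band dictated by EF1. The point of parking the \emph{largest} small good at agent $3$ is precisely that it lowers the floor of this band (through the constraint that agent $3$ not envy agent $2$) and guarantees that every good still being exchanged has value at most $g^\ast$, so each swap perturbs $\sigma_1$ by a controlled amount. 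Using the hypothesis $\min\{a_0,b_0,c_0\} \ge M$, I would first check that both endpoints, $\sigma_1 = S_a - a_p$ at the start and $\sigma_1 = S_b - b_q$ at the end, lie in the band, and then show that a greedy schedule that always picks a not-yet-performed swap keeping $\sigma_1$ inside the band can never get stuck before all $k-1$ swaps are done.

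The nontrivial case to rule out is a ``stuck'' configuration in which every remaining swap either overshoots the top or undershoots the bottom of the band. I would dispatch this by an averaging argument: if even the smallest available increment overshoots the top, then comparing it with the total remaining change $\sigma_1^{\text{end}} - \sigma_1^{\text{current}}$ (which itself lies in the band because $\sigma_1^{\text{end}}$ does) forces the number of remaining swaps to be less than one, a contradiction; the undershoot case and the mixed case are handled analogously, the latter relying on the band being wide enough relative to $g^\ast$, which is exactly what parking the maximum good secures. Once the schedule is shown to succeed, every intermediate allocation satisfies $\max_i \sigma_i - \min_i \sigma_i \le M$ and is therefore EF1, and the entire path is clearly computable in polynomial time.
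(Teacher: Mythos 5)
Your path --- park the largest small good $g^\ast$ at agent~$3$ by a transfer, shuffle the remaining small goods between agents~$1$ and~$2$ with $k-1$ exchanges, then deliver the two leftover goods with two more transfers --- is essentially the same path the paper constructs (the paper obtains it by giving agent~$3$ extra zero-utility goods and invoking \Cref{lem:result_3}, reinterpreting dummy-involving exchanges as transfers). Your reduction of EF1 to the condition $\max_i \sigma_i - \min_i \sigma_i \le M$ on the small-good totals is sound, as is the verification of the transfer steps and of the two endpoints of the exchange phase. The gap is exactly where you yourself locate the ``main obstacle'': the claim that a greedy schedule of the $k-1$ exchanges never gets stuck. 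Your all-overshoot and all-undershoot cases are fine, but your dismissal of the mixed case rests on ``the band being wide enough relative to $g^\ast$, which is exactly what parking the maximum good secures,'' and that premise is false as stated. When the binding constraints are the mutual envy between agents~$1$ and~$2$, the admissible band for $\sigma_1$ has width exactly $M$, while $g^\ast$ can be as large as $M$; so the band need not have width $2g^\ast$, which is what your bound ``every good still being exchanged has value at most $g^\ast$'' (i.e., $|\delta|\le g^\ast$ per swap) would require to rule out one increment overshooting the top while another undershoots the bottom.

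What actually closes the mixed case is a two-sided bound on the increments that you never state: because $g^\ast$ has been removed from agent~$1$'s side, every $a$-good still to be given away has value at most $S_a - g^\ast \le M - g^\ast$, so every increment $\delta = (\text{received } b) - (\text{given } a)$ lies in the interval $[\,g^\ast - M,\; g^\ast\,]$ of length $M$, matching the band width and making two increments that straddle the band impossible; a separate (easy) check is needed in the regime $T := S_a + S_b - g^\ast > M + 2g^\ast$, where agent~$3$'s envy constraints become binding but the band there has width at least $3g^\ast$ against increment gaps of at most $2g^\ast$. Without this ingredient the argument does not go through, and it is worth noting that this feasibility-of-the-exchange-phase step is precisely what the paper devotes \Cref{lem:result_1} to, ultimately falling back on the two-agent theorem (\Cref{thm:iden_2_optimal}); it cannot be waved through as routine. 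Either supply the increment-range argument above or route the exchange phase through \Cref{lem:result_1} as the paper does.
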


In order to prove the result, we shall make use of the following three lemmas in the \emph{exchange-only} context.

\begin{lemma} \label{lem:result_1}
Suppose three agents with identical utilities have goods with the following utilities:
\begin{center}
\begin{tabular}{llllll}
Agent $1$: & $1$, & $a_1$, & $a_2$, & $\ldots$, & $a_k$ \\
Agent $2$: & $1$, & $b_1$, & $b_2$, & $\ldots$, & $b_k$ \\
Agent $3$: & $1$, & $c$,   & $0$,   & $\ldots$, & $0$
\end{tabular}
\end{center}
for some positive integer $k$, where
\begin{itemize}
    \item $c \leq 1$,
    \item $c \geq a_1 \geq \cdots \geq a_k$,
    \item $c \geq b_1 \geq \cdots \geq b_k$,
    \item $\sum_{\ell=1}^k a_\ell \leq 1$, and
    \item $\sum_{\ell=1}^k b_\ell \leq 1$.
\end{itemize}
Then, there exists an EF1 exchange path of length $k$ between this allocation and the allocation with $\{a_1, \ldots, a_k\}$ and $\{b_1, \ldots, b_k\}$ exchanged (i.e., the exchange path is \emph{optimal}).
\end{lemma}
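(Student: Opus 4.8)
The plan is to reduce the statement to a one-dimensional reconfiguration question and then solve that greedily, moving agent~$1$'s value toward the midpoint of a feasibility window.

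First I would pin down the structure of an optimal path. Writing $\alpha = \sum_{\ell=1}^k a_\ell \le 1$ and $\beta = \sum_{\ell=1}^k b_\ell \le 1$, the multigraph $G_{\mathcal{A},\mathcal{B}}$ from \Cref{lem:distance_between_allocations} has $k$ edges from agent~$1$ to agent~$2$ (one per $a_\ell$), $k$ edges from agent~$2$ to agent~$1$ (one per $b_\ell$), and self-loops everywhere else; pairing each $a$-edge with a $b$-edge into a $2$-cycle shows $c^*_{\mathcal{A},\mathcal{B}} = m - k$, so the distance is exactly $k$. Consequently every one of the $k$ exchanges on an optimal path is \emph{productive}: since each of the $2k$ goods $a_\ell,b_\ell$ must change hands at least once and $k$ exchanges move exactly $2k$ goods in total, each such good moves exactly once, directly to its destination, and no other good---in particular none of agent~$3$'s---ever moves. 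Thus an optimal path is precisely a sequence of $k$ swaps, each exchanging one still-misplaced $a_\ell$ in agent~$1$'s bundle with one still-misplaced $b_\ell$ in agent~$2$'s bundle, while agent~$3$ keeps $\{1,c,0,\dots,0\}$ throughout.

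Next I would translate EF1 into a single scalar constraint. Let $p$ denote the total value of the $\{a,b\}$-goods held by agent~$1$; then agent~$2$ holds value $q = \alpha+\beta-p$ and agent~$3$ has value $1+c$. Because every $a_\ell,b_\ell \le c \le 1$, the most valuable good in each bundle is a value-$1$ good, so EF1 reduces to $|p-q| \le 1$ together with $\max\{p,q\} \le 1+c$ (the conditions that agents~$1,2$ do not envy agent~$3$ become $1+p \ge c$ and $1+q \ge c$, which hold automatically since $c \le 1$). Equivalently $p$ must lie in the window $[P_{\min},P_{\max}]$ with $P_{\min} = \max\{\tfrac{\alpha+\beta-1}{2},\,\alpha+\beta-1-c\}$ and $P_{\max} = \min\{\tfrac{\alpha+\beta+1}{2},\,1+c\}$, and a short check using $\alpha,\beta \le 1$ and $c \ge 0$ shows that both the start $p=\alpha$ and the target $p=\beta$ lie in this window. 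The lemma now becomes a one-dimensional statement: starting from $p=\alpha$, perform $k$ swaps, each replacing one remaining $a$-good by one remaining $b$-good (so $p$ changes by $b_j - a_i \in [-c,c]$), keeping $p \in [P_{\min},P_{\max}]$ at every step. Since each swap shrinks both pools and after $k$ swaps $p$ is \emph{forced} to equal $\beta$, it suffices to prove a \emph{step-existence} claim: from any EF1 intermediate state with goods still to move, some admissible swap lands back in the window.

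The strategy for step-existence is to move $p$ toward the midpoint $\mu=\tfrac{\alpha+\beta}{2}$ using a \emph{rank-matched} swap: sort the remaining $a$- and $b$-goods and, when $p>\mu$, swap out a high-ranked $a$ for the like-ranked $b$ (symmetrically when $p\le\mu$), instead of pairing the largest available $a$ with the smallest available $b$. The hard part will be showing that such a swap never overshoots the window: a naive extreme pairing can change $p$ by as much as $\max R_b-\min R_a$ and leave $[P_{\min},P_{\max}]$, whereas a rank-matched swap changes $p$ by a controlled amount. Proving that a rank-matched increment of the correct sign always exists and always stays inside the window is exactly where the two structural hypotheses are used essentially---$a_i,b_j \le c$ bounds each step by $c$, while $\alpha,\beta \le 1$ keeps the window suitably wide around $\mu$---and I expect it to require a brief case analysis according to whether the binding constraint is $|p-q|\le 1$ or $\max\{p,q\}\le 1+c$. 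Once step-existence is in hand, greedily applying admissible swaps necessarily terminates after exactly $k$ steps at the target allocation, producing the desired optimal EF1 exchange path.
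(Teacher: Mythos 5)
Your reduction is correct and, in its framing, arguably cleaner than the paper's: the distance-$k$ computation via \Cref{lem:distance_between_allocations}, the observation that an optimal path must consist of $k$ swaps of still-misplaced $a$-goods for still-misplaced $b$-goods with agent~$3$ untouched, and the translation of EF1 into the scalar window $p \in [P_{\min}, P_{\max}]$ are all sound (each bundle permanently retains its value-$1$ good, which is its most valuable item since every $a_\ell, b_\ell \le c \le 1$, so the EF1 checks do collapse to $|p-q|\le 1$ and $\max\{p,q\}\le 1+c$). The gap is that the step-existence claim --- the entire difficulty of the lemma --- is asserted rather than proved. The monotone case is indeed easy (if all remaining increments $b_j - a_i$ obtainable from rank-matching have one sign, partial sums stay between $p$ and $\beta$, both in the window), but in the mixed-sign case you give no argument that some single swap keeps $p$ in the window, and your proposed rank-matched rule is not verified: a rank-matched swap need not move $p$ toward $\mu$ at all (if every remaining $a$ is below every remaining $b$, all rank-matched increments are positive irrespective of where $p$ sits), and you concede that showing ``an increment of the correct sign always exists and always stays inside the window'' remains to be done.

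That missing piece is not a brief case analysis; it is essentially the inductive step of \Cref{thm:iden_2_optimal}, whose proof in the paper is a substantial argument involving the ordered differences $\Delta_k$ and the inequalities \eqref{eq:k_bound}--\eqref{eq:ineq_2}. The paper's proof of this lemma avoids redoing that work: it performs index-matched swaps $a_\ell \leftrightarrow b_\ell$, disposes of the monotone case directly, and in the mixed case shows that agent~$3$ can never be the obstruction (using $A+B\le 4$ to bound agent~$2$'s stripped bundle by $1+c$, and deducing $c>1/2$ to rule out agent~$3$ envying agent~$1$), so that a failed swap must fail through mutual envy between agents~$1$ and~$2$ --- which is then contradicted by quoting the two-agent step-existence guarantee from \Cref{thm:iden_2_optimal}. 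To complete your argument you would need either to reprove that two-agent guarantee in your one-dimensional language or to cite it and separately check, as the paper does, that the swap it supplies is not blocked by agent~$3$.
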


\begin{proof}
We shall refer to a good by its utility. 
Note that the initial and target allocations are EF1. 
It suffices to show that at each step, there exists some $\ell$ such that $a_\ell$ in agent~$1$'s bundle and $b_\ell$ in agent $2$'s bundle can be exchanged while keeping the allocation EF1. 
Then, the EF1 exchange path has length $k$ and is hence optimal.

Suppose that there exists a subset $L$ of $K = \{1, 2, \dots, k\}$ such that $\{a_\ell \; | \; \ell \in L\}$ and $\{b_\ell \; | \; \ell \in L\}$ have already been exchanged from the initial allocation. Then, the current EF1 allocation is
\begin{center}
\begin{tabular}{llll}
Agent $1$: & $1$, & $\{b_\ell \; | \; \ell \in L \}$, & $\{a_j \; | \; j \in K \setminus L\}$ \\
Agent $2$: & $1$, & $\{a_\ell \; | \; \ell \in L \}$, & $\{b_j \; | \; j \in K \setminus L\}$ \\
Agent $3$: & $1$, & $c, 0, \ldots, 0$.                &
\end{tabular}
\end{center}

We claim that there exists some $j \in K \setminus L$ such that exchanging $a_j$ and $b_j$ still maintains EF1. 
If $a_j \geq b_j$ for all $j \in K \setminus L$, then repeatedly exchanging $a_j$ with $b_j$ will cause the utility of agent $1$'s bundle to monotonically decrease and that of agent $2$'s bundle to monotonically increase, and since the target allocation is EF1, all intermediate allocations are also EF1. 
A similar argument holds if $a_j < b_j$ for all $j \in K \setminus L$. 
Therefore, assume that there exists a partition of $K \setminus L$ into non-empty sets $X$ and $Y$ such that $a_x \geq b_x$ for all $x \in X$ and $a_y < b_y$ for all $y \in Y$. 
Let $A$ and $B$ be the utilities of agent~$1$'s bundle and agent $2$'s bundle in the current allocation, respectively, and assume without loss of generality that $A \geq B$. We show that some $j \in X$ or $j \in Y$ works.

If exchanging $a_x$ and $b_x$ leads to an EF1 allocation for some $x \in X$, then we perform the exchange. 
Similarly, if exchanging $a_y$ and $b_y$ leads to an EF1 allocation for some $y \in Y$, then we perform the exchange.
Otherwise, assume that each of these exchanges does not lead to an EF1 allocation. Consider exchanging $a_x$ and $b_x$ for some $x \in X$. 
By assumption, the new allocation is not EF1. 
Note that each of agents $1$ and $2$ has a bundle with utility at least~$1$, so none of them envies agent $3$ by more than one good. 
Since agent $2$'s bundle is the only bundle to increase in value, some agent envies agent $2$ by more than one good. 
Now, $A + B = (1 + \sum_{\ell \in K} a_\ell) + (1 + \sum_{\ell \in K} b_\ell) \leq 4$ and $A \geq B$, which implies $B \leq 2$. 
Agent $2$'s new bundle without the most valuable good would have utility at most $(B - 1) + a_x - b_x \leq 1 + a_x - 0 \leq 1 + c$; note that $1 + c$ is the utility of agent $3$'s bundle. 
Hence, agent $3$ does not envy agent $2$ by more than one good, and so agent $1$ must be the only agent who envies agent $2$ by more than one good. 
This means that, for each $x \in X$, exchanging $a_x$ and $b_x$ leads to agent~$1$ envying agent $2$ by more than one good.

It follows that for every $x \in X$, we have $A - (a_x - b_x) < (B - 1) + (a_x - b_x)$, which implies $2(a_x - b_x) > A - B + 1$. 
Since $A \geq B$ and $b_x \geq 0$, we have $a_x > 0.5$. 
This means that $c \geq a_x > 0.5$, and agent $3$'s bundle has utility at least $1.5$.

Similarly, consider exchanging $a_y$ and $b_y$ for some $y \in Y$. 
Since agent $1$'s bundle is the only bundle to increase in value, some agent envies agent $1$ by more than one good. 
If agent~$3$ envies agent $1$ by more than one good, then agent $3$'s bundle having utility at least $1.5$ implies that agent $1$'s bundle, including the good with utility $1$, now has utility more than $2.5$, i.e., $A + (b_y - a_y) > 2.5$. 
Agent $2$'s bundle now has utility $(A + B) - (A + (b_y - a_y)) < 4 - 2.5 = 1.5$, which is less than the utility of agent $3$'s bundle, so agent $2$ envies agent $1$ by more than one good as well. 
We conclude that exchanging $a_y$ and $b_y$ leads to agent $2$ envying agent $1$ by more than one good for each $y \in Y$.

So far, we have shown that if for every $j\in K\setminus L$ exchanging $a_j$ and $b_j$ does not result in an EF1 allocation, then for each $j\in K\setminus L$, exchanging $a_j$ and $b_j$ will result in agent~$1$ envying agent~$2$ by more than one good or agent $2$ envying agent $1$ by more than one good. 
But this contradicts (the outline\footnote{See the paragraph after the statement of \Cref{thm:binary_2_optimal}.} of) the proof of \Cref{thm:iden_2_optimal} that there always exists $j \in K \setminus L$ such that exchanging $a_j$ and $b_j$ leads to neither agent~$1$ nor agent~$2$ envying each other by more than one good (ignoring agent~$3$).
Therefore, one of these pairs of goods can be exchanged to give an EF1 allocation.
\end{proof}

\begin{lemma} \label{lem:result_2}
Suppose three agents with identical utilities have goods with the following utilities:
\begin{center}
\begin{tabular}{llllll}
Agent $1$: & $1$, & $a_1$, & $a_2$, & $\ldots$, & $a_k$ \\
Agent $2$: & $1$, & $b_1$, & $b_2$, & $\ldots$, & $b_k$ \\
Agent $3$: & $1$, & $0$,   & $0$,   & $\ldots$, & $0$
\end{tabular}
\end{center}
for some positive integer $k$, where $\sum_{\ell=1}^k a_\ell \leq 1$ and $\sum_{\ell=1}^k b_\ell \leq 1$. Then, there exists an EF1 exchange path of length $k + 2$ between this allocation and the allocation with $\{a_1, \ldots, a_k\}$ and $\{b_1, \ldots, b_k\}$ exchanged. 
\end{lemma}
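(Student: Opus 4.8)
The plan is to exploit agent~$3$'s reservoir of zero-valued goods to temporarily \emph{park} the single largest good, thereby reducing the problem to the situation already handled by \Cref{lem:result_1}. Observe first that the global maximum $c := \max\{a_1,\dots,a_k,b_1,\dots,b_k\}$ satisfies $c \le 1$, since $c$ lies in one of the two groups whose total value is at most~$1$; moreover $c$ dominates every other $a_\ell$ and $b_\ell$ by construction. By the symmetry between agents~$1$ and~$2$, I would assume without loss of generality that this maximal good belongs to agent~$1$ and, after relabeling, that it is $a_1$, so that $c = a_1 \ge a_2 \ge \cdots \ge a_k$ and $c \ge b_\ell$ for all $\ell$.

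Concretely, I would realize the path in three phases totalling $k+2$ exchanges. First, exchange $a_1$ (in agent~$1$'s bundle) with one of agent~$3$'s zero-valued goods; such a good exists because agent~$3$ holds $k \ge 1$ of them. After this single exchange, agent~$1$ holds $\{1,0,a_2,\dots,a_k\}$, agent~$2$ is unchanged at $\{1,b_1,\dots,b_k\}$, and agent~$3$ holds $\{1,c,0,\dots,0\}$. Second, I would invoke \Cref{lem:result_1} to swap agent~$1$'s non-unit goods $\{0,a_2,\dots,a_k\}$ with agent~$2$'s goods $\{b_1,\dots,b_k\}$ along an EF1 path of length~$k$: its hypotheses hold because $c = a_1$ is a common upper bound for all these goods, $c \le 1$, and both $\sum_{\ell \ge 2} a_\ell$ and $\sum_{\ell=1}^k b_\ell$ are at most~$1$. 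This leaves agent~$1$ with its target bundle $\{1,b_1,\dots,b_k\}$ and agent~$2$ with $\{1,0,a_2,\dots,a_k\}$. Third, exchange the parked good $c = a_1$ in agent~$3$'s bundle with the lone zero-valued good now in agent~$2$'s bundle, restoring agent~$3$ to $\{1,0,\dots,0\}$ and giving agent~$2$ its target bundle $\{1,a_1,\dots,a_k\}$. The total length is $1 + k + 1 = k+2$, as claimed.

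The bulk of the work sits in the middle phase, but that is entirely delegated to \Cref{lem:result_1}; the only thing I would verify by hand is that the two bracketing exchanges preserve EF1. This reduces to a single uniform observation: throughout, every bundle has value at most~$2$, each of agents~$1$ and~$2$ always retains its unit good, and agent~$3$'s value is $1$ or $1+c \le 2$, so each agent's own bundle is worth at least~$1$ while removing one highest-valued good (always of value~$1$, since $c \le 1$) from any competing bundle leaves value at most~$1$, whence EF1 follows immediately. I would also note that the two endpoints of the path coincide with the initial and target allocations (which are EF1 by the same bound), and that the symmetric case where the maximal good starts in agent~$2$'s bundle is handled by interchanging the roles of agents~$1$ and~$2$. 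The one genuine subtlety to pin down is that, after parking $c$, the reduced instance matches \Cref{lem:result_1} exactly---in particular that the freshly introduced zero counts as one of agent~$1$'s $k$ transferable goods and is dominated by $c$---which is immediate once those goods are re-sorted in non-increasing order.
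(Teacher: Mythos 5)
Your proposal is correct and follows essentially the same route as the paper's proof: park the globally maximal good (WLOG $a_1$) with agent~$3$ via one exchange, apply \Cref{lem:result_1} to the resulting instance for the middle $k$ exchanges, and then return the parked good to agent~$2$'s bundle in one final exchange. Your verification that the two bracketing exchanges preserve EF1 and that the hypotheses of \Cref{lem:result_1} are met matches what the paper asserts.
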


\begin{proof}
Note that the initial and target allocations are EF1.
Without loss of generality, we may assume that $a_1 \geq \cdots \geq a_k$ and $b_1 \geq \cdots \geq b_k$, and moreover that $a_1\ge b_1$. 
We refer to a good by its utility and to a good with utility~$0$ as a dummy good. 
First, exchange $a_1$ with a dummy good in agent~$3$'s bundle. 
Note that the resulting allocation is EF1 and satisfies the assumptions of \Cref{lem:result_1}. 
Hence, by \Cref{lem:result_1}, we can exchange $\{a_2, \ldots, a_k,0\}$ and $\{b_1, \ldots, b_k\}$ in $k$ steps while maintaining EF1. 
Finally, exchange $a_1$ with agent~$3$'s original dummy good in agent~$2$'s bundle. 
We have arrived at the target allocation, and the total number of exchanges is $k+2$.
\end{proof}

\begin{lemma} \label{lem:result_3}
Suppose three agents with identical utilities have goods with the following utilities:
\begin{center}
\begin{tabular}{llllll}
Agent $1$: & $a_0$, & $a_1$, & $a_2$, & $\ldots$, & $a_k$ \\
Agent $2$: & $b_0$, & $b_1$, & $b_2$, & $\ldots$, & $b_k$ \\
Agent $3$: & $c_0$, & $0$,   & $0$,   & $\ldots$, & $0$
\end{tabular}
\end{center}
for some positive integer $k$, where $\min\{a_0, b_0, c_0\} \geq \max \{ \sum_{\ell=1}^k a_\ell, \sum_{\ell=1}^k b_\ell \}$. Then, there exists an EF1 exchange path of length $k + 2$ between this allocation and the allocation with $\{a_1, \ldots, a_k\}$ and $\{b_1, \ldots, b_k\}$ exchanged.
\end{lemma}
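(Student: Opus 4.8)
The plan is to reduce \Cref{lem:result_3} to \Cref{lem:result_2} by first equalizing the three large goods $a_0,b_0,c_0$ and then arguing that enlarging them back cannot destroy EF1. Write $M=\max\{\sum_{\ell=1}^k a_\ell,\sum_{\ell=1}^k b_\ell\}$. If $M=0$, then every allocation in sight gives each agent value equal to her large good and is trivially EF1, so an explicit park--swap--unpark path of length $k+2$ works directly; I therefore assume $M>0$. Consider the auxiliary instance obtained from the given one by leaving all $a_\ell,b_\ell$ and the dummies unchanged but resetting $a_0=b_0=c_0=M$. Since EF1 is invariant under scaling all utilities by $1/M$, this auxiliary instance is exactly an instance of \Cref{lem:result_2}: each of agents~$1$ and~$2$ holds one good of value $M$ together with small goods summing to at most $M$, and agent~$3$ holds one good of value $M$ together with dummies. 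Hence \Cref{lem:result_2} supplies an EF1 exchange path of length $k+2$ between the initial and target allocations of the auxiliary instance, and by construction this path never moves the three large goods---it only parks the largest small good with agent~$3$, swaps small goods between agents~$1$ and~$2$, and finally unparks.

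The core step is a monotonicity observation: the very same sequence of exchanges is EF1 in the original instance. The key structural fact is that in every allocation along this path, each bundle's pointwise maximum-value good is its large good. Indeed, each agent~$i\in\{1,2\}$ always holds her large good together with small goods, each individually of value at most $\max\{\sum a_\ell,\sum b_\ell\}=M\le a_0,b_0$; and agent~$3$ always holds her large good together with at most the single parked good $a_1\le M\le c_0$ (plus dummies). Consequently, for every pair $i\ne j$ the EF1 test ``agent~$i$ does not envy agent~$j$ by more than one good'' reduces, after deleting agent~$j$'s maximum good, to
\begin{align*}
u(A_i)\ge u(A_j)-(\text{large good of } j),
\end{align*}
whose right-hand side does not involve the magnitude of any large good, while the left-hand side is nondecreasing in agent~$i$'s own large good. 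Therefore raising the large goods from the common value $M$ to their true values $a_0,b_0,c_0\ge M$ (each of which remains the maximum in its bundle) can only preserve or strengthen every EF1 inequality. Applying this to each allocation along the auxiliary path yields an EF1 exchange path of length $k+2$ in the original instance, as desired.

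The main obstacle is justifying this monotonicity cleanly, and in particular verifying its one subtle premise: that along the path produced by \Cref{lem:result_2} each large good really is the pointwise maximum of its bundle. This is delicate because the proof of \Cref{lem:result_2} (via \Cref{lem:result_1}) permits an agent's \emph{total} small-good value to exceed $M$ at intermediate steps; what rescues the argument is that no \emph{individual} small good, and no parked good, ever exceeds $M$, so the large good stays the maximum even when the small-good sum does not. Once this premise is confirmed, the displayed inequality is immediate and the reduction closes; the remaining bookkeeping (the degenerate case $M=0$ and the explicit identification of goods so that the auxiliary path transfers verbatim) is routine.
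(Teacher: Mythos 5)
Your proof is correct and follows essentially the same route as the paper's: both reduce to \Cref{lem:result_2} by normalizing the three large goods to a common value (you use $M=\max\{\sum_\ell a_\ell,\sum_\ell b_\ell\}$, the paper uses $m_0=\min\{a_0,b_0,c_0\}$, which is immaterial) and then observe that restoring $a_0,b_0,c_0$ preserves EF1 because each large good remains the maximum of its bundle throughout the path. Your write-up actually makes explicit the monotonicity justification that the paper only asserts, and your treatment of the degenerate case $M=0$ matches the paper's ``assume positive values'' reduction.
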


\begin{proof}
We refer to a good by its utility.
Note that the initial and target allocations are EF1. 
We shall describe an exchange path such that $a_0$, $b_0$, and $c_0$ are not moved. 
Note that $a_0$, $b_0$, and $c_0$ are the three most valuable goods among all the goods, and without loss of generality, assume that they all have positive value. 
Let $m_0 = \min \{a_0, b_0, c_0\} > 0$. 
First, replace $a_0$, $b_0$, and $c_0$ by $m_0$ each. 
For an EF1 allocation where these goods remain with the respective agents (after the replacement), the same allocation but with $m_0$ replaced back by $a_0$, $b_0$, and $c_0$ respectively is also EF1. 
Next, scale all the utilities by a factor of $1/m_0$---note that scaling by a positive factor does not affect the EF1 property of the allocations. 
Now, the initial allocation follows the setting in \Cref{lem:result_2}. 
Hence, by \Cref{lem:result_2}, we can exchange $\{a_1, \ldots, a_k\}$ and $\{b_1, \ldots, b_k\}$ in $k + 2$ steps while maintaining EF1. 
Finally, scale the utilities by a factor of $m_0$, and replace $m_0$ with $a_0$, $b_0$, and $c_0$ respectively to get the desired exchange path.
\end{proof}

We are now ready to prove \Cref{thm:exchange_and_transfer}.

\begin{proof}[Proof of \Cref{thm:exchange_and_transfer}]
Consider the scenario where agent~$3$ is given $k$ additional goods with zero utility. 
This is exactly the scenario in \Cref{lem:result_3}, which means that there is an EF1 exchange path of length $k+2$ between this allocation and the allocation with $\{a_1, \ldots, a_k\}$ and $\{b_1, \ldots, b_k\}$ exchanged. 
Consider this EF1 exchange path. 
Whenever an exchange is performed which involves one of the additional zero-utility goods in agent $3$'s bundle, it can be thought of as a \emph{transfer} of a good in the original scenario where agent $3$ did not receive the zero-utility goods. 
This means that there is an EF1 exchange-and-transfer path of length $k+2$.
\end{proof}

Finally, we remark that the counterexample in \Cref{thm:binary_3_connected} for binary utilities is likewise no longer applicable in the exchange-and-transfer setting.
On the other hand, the example in \Cref{thm:gen_2_connected} for two agents with general utilities \emph{is} still a counterexample when viewed in this setting.

\section{Proof of \texorpdfstring{\Cref{lem:directed_triangle_partition_nphard}}{Lemma 4.2}}
\label{app:directedtrianglepartition}

We prove \Cref{lem:directed_triangle_partition_nphard} by reducing the well-known NP-hard problem \textsc{3SAT} to \textsc{Directed Triangle Partition}.
\begin{itemize}
    \item \textbf{\textsc{3SAT}. } Given a set of variables $Y = \{y_1, \ldots, y_q\}$ and a set of clauses $C = \{c_1, \ldots, c_r\}$ where each clause is a disjunction of three literals, i.e., $c_j = \ell_{j, 1} \lor \ell_{j, 2} \lor \ell_{j, 3}$, and each literal is either a variable (i.e., $\ell_{j, k} = y_i$) or its negation (i.e., $\ell_{j, k} = \overline{y_i}$), determine whether there exists an assignment to the variables in $Y$ such that every clause in $C$ is satisfied.
    \item \textbf{\textsc{Directed Triangle Partition}. } Given a directed graph $G = (V, E)$ with no directed cycles of length $1$ or $2$, determine whether there is a partition of the edges into triangles (i.e., directed cycles of length $3$).
\end{itemize}

The reduction works by constructing a graph for each variable and each literal that can be edge-partitioned into triangles in exactly two ways---one representing ``true'' and the other representing ``false''---and joining these graphs together in special ways to restrict the truth values that they represent. 
This idea is similar to that used by \citet{Holyer81} in his proof of the corresponding result for \emph{undirected} graphs.

\begin{figure*}[tb]
\centering
  \begin{subfigure}[b]{.60\textwidth}
    \centering 
    \begin{tikzpicture}
    \usetikzlibrary{positioning}
    \newdimen\nodeDist
    \nodeDist=12mm
    \newcommand{\arrow}{ \tikz \draw[-{Stealth[length=1.5mm, width=1.5mm]}] (-1pt,0) -- (1pt,0); }
    \node[label=below:{\footnotesize $(0, 0, 0)$}] at (0,0) (v00) {};
    \draw [black, fill=black] (v00) circle [radius=0.05];
    \foreach \x\y\a\b in {0/1/1/3, 1/2/2/2, 2/3/3/1, 3/4/0/0} {
      \node at ([shift=({60:\nodeDist})]v0\x) (v0\y) {};
      \draw [black, fill=black] (v0\y) circle [radius=0.05];
      \node at ([shift=({0:\nodeDist})]v\x0) (v\y0) {};
      \draw [black, fill=black] (v\y0) circle [radius=0.05];
    }
    \foreach \y in {1, ..., 4} {
      \foreach \x\z in {0/1, 1/2, 2/3, 3/4} {
        \node at ([shift=({0:\nodeDist})]v\x\y) (v\z\y) {};
        \draw [black, fill=black] (v\z\y) circle [radius=0.05];
      }
    }
    \foreach \y\z in {1/3, 2/2, 3/1} {
      \node[label=left:{\footnotesize $(\y, 0, \z)$}] at (v0\y) (v0\y p) {};
      \node[label=below:{\footnotesize $(\y, \z, 0)$}] at (v\y0) (v\y0p) {};
      \node[label=right:{\footnotesize $(\y, 0, \z)$}] at (v4\y) (v4\y p) {};
      \node[label=above:{\footnotesize $(\y, \z, 0)$}] at (v\y4) (v\y4p) {};
    }
    \node[label=above:{\footnotesize $(0, 0, 0)$}] at (v04) (v04p) {};
    \node[label=above:{\footnotesize $(0, 0, 0)$}] at (v44) (v44p) {};
    \node[label=below:{\footnotesize $(0, 0, 0)$}] at (v40) (v40p) {};
    \foreach \w\x in {0/1, 1/2, 2/3, 3/4} {
      \foreach \y\z in  {0/1, 1/2, 2/3, 3/4} {
        \draw (v\w\z.center) -- (v\w\y.center) node[
          sloped, midway, allow upside down
        ]{\arrow};;
        \draw (v\y\w.center) -- (v\z\w.center) node[
          sloped, midway, allow upside down
        ]{\arrow};;
        \draw (v\x\y.center) -- (v\w\z.center) node[
          sloped, midway, allow upside down
        ]{\arrow};;
      }
    }
    \foreach \y\z in  {0/1, 1/2, 2/3, 3/4} {
      \draw[dashed] (v4\z.center) -- (v4\y.center);
      \draw[dashed] (v\y4.center) -- (v\z4.center);
    }
    \node at ([shift=({180:1.2*\nodeDist})]v00) (l0) {};
    \node at ([shift=({180:1.2*\nodeDist})]v04) (l1) {};
    \draw [line width=0.4mm, -{Stealth}{Stealth}] (l0.center) -- (l1.center);
    \node at ([shift=({0:1.2*\nodeDist})]v40) (r0) {};
    \node at ([shift=({0:1.2*\nodeDist})]v44) (r1) {};
    \draw [line width=0.4mm, -{Stealth}{Stealth}] (r0.center) -- (r1.center);
    \node at ([shift=({-90:0.6*\nodeDist})]v00) (d0) {};
    \node at ([shift=({-90:0.6*\nodeDist})]v40) (d1) {};
    \draw [line width=0.4mm, -Stealth] (d0.center) -- (d1.center);
    \node at ([shift=({90:0.6*\nodeDist})]v04) (u0) {};
    \node at ([shift=({90:0.6*\nodeDist})]v44) (u1) {};
    \draw [line width=0.4mm, -Stealth] (u0.center) -- (u1.center);
    \end{tikzpicture}
    \caption{The graph $H_4$. The opposite sides wrap around.}
  \end{subfigure}
\begin{subfigure}[b]{.35\textwidth}
    \centering 
    \begin{tikzpicture}
    \usetikzlibrary{positioning}
    \newdimen\nodeDist
    \nodeDist=12mm
    \newcommand{\arrow}{ \tikz \draw[-{Stealth[length=1.5mm, width=1.5mm]}] (-1pt,0) -- (1pt,0); }
    \node at (0,0) (t0) {};
    \node at ([shift=({0:\nodeDist})]t0) (t1) {};
    \node at ([shift=({120:\nodeDist})]t1) (t2) {};
    \node at ([shift=({-90:1.5*\nodeDist})]t0) (f0) {};
    \node at ([shift=({0:\nodeDist})]f0) (f1) {};
    \node at ([shift=({-120:\nodeDist})]f1) (f2) {};
    \foreach \x\y in {0/1, 1/2, 2/0} {
      \draw [black, fill=black] (t\x) circle [radius=0.05];
      \draw [black, fill=black] (f\x) circle [radius=0.05];
      \draw (t\x.center) -- (t\y.center) node[
        sloped, midway, allow upside down
      ]{\arrow};;
      \draw (f\x.center) -- (f\y.center) node[
        sloped, midway, allow upside down
      ]{\arrow};;
    }
    \node at ([shift=({-90:1.5*\nodeDist})]f0) (v) {}; 
    \end{tikzpicture}
    \caption{A $T$-triangle (top) and an $F$-triangle (bottom).}
  \end{subfigure}
\caption{The graph $H_4$ and an example of a $T$-triangle and an $F$-triangle.} \label{fig:mep_Hp}
\end{figure*}
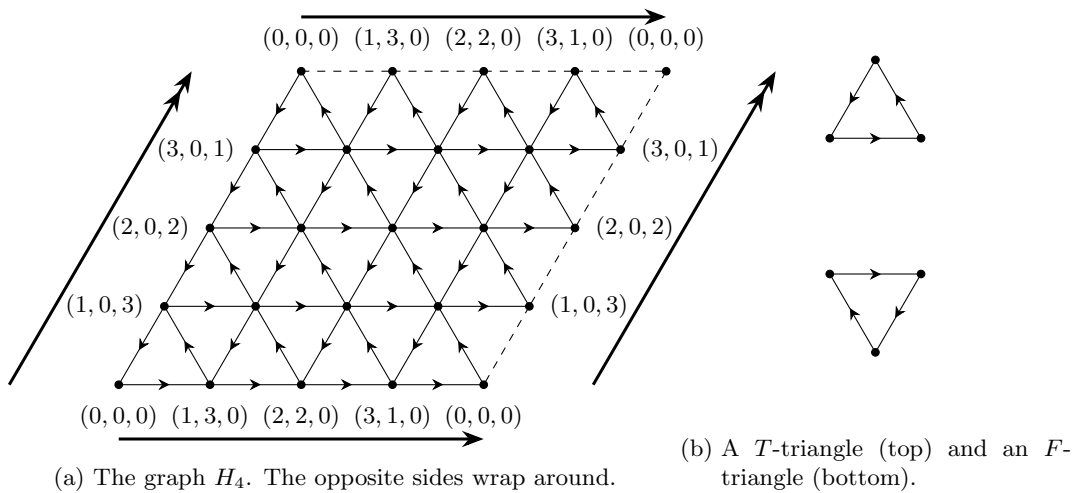

Define the directed graph $H_p = (V_p, E_p)$ for each positive integer $p$ as follows:
\begin{align*}
    V_p &= \{ (a_1, a_2, a_3) \in \mathbb{Z}_p^3 \mid a_1 + a_2 + a_3 \equiv 0 \}, \\
    E_p &= \{ (a_1, a_2, a_3) \to (b_1, b_2, b_3) \mid 
    \exists (i, j, k) \in \{(1, 2, 3), (2, 3, 1), (3, 1, 2)\} \text{ such that } \\
    & \hspace{50mm} b_i \equiv a_i, \; b_j \equiv a_j + 1, \text{ and } b_k \equiv a_k - 1 \},
\end{align*}
where all equivalences are modulo $p$. 
There are only two types of triangles in $H_p$: \emph{$T$-triangles} of the form $(a_1, a_2, a_3) \to (a_1 + 1, a_2 - 1, a_3) \to (a_1 + 1, a_2, a_3 - 1) \to (a_1, a_2, a_3)$, and \emph{$F$-triangles} of the form $(a_1, a_2, a_3) \to (a_1 + 1, a_2 - 1, a_3) \to (a_1, a_2 - 1, a_3 + 1) \to (a_1, a_2, a_3)$. 
See \Cref{fig:mep_Hp} for an illustration.
Note that each vertex has an indegree and an outdegree of $3$.

\begin{figure}[tb]
\centering
  \begin{subfigure}[b]{.23\textwidth}
    \centering 
    \begin{tikzpicture}
    \usetikzlibrary{positioning}
    \newdimen\nodeDist
    \nodeDist=8.2mm
    \newcommand{\arrow}{ \tikz \draw[-{Stealth[length=1.5mm, width=1.5mm]}] (-1pt,0) -- (1pt,0); }
    \node at (0,0) (v00) {};
    \foreach \x\y in {0/1, 1/2, 2/3, 3/4} {
      \node at ([shift=({-120:\nodeDist})]v0\x) (v0\y) {};
      \node at ([shift=({180:\nodeDist})]v\x0) (v\y0) {};
    }
    \node at ([shift=({180:\nodeDist})]v40) (v50) {};
    \foreach \y in {1, ..., 4} {
      \foreach \x\z in {0/1, 1/2, 2/3, 3/4} {
        \pgfmathparse{\x+\y <= 4 ? int(1) : int(0)}
        \ifnum\pgfmathresult=1
          \node at ([shift=({180:\nodeDist})]v\x\y) (v\z\y) {};
        \fi
      }
    }
    \draw [black, fill=black] (v11) circle [radius=0.05];
    \draw [black, fill=black] (v21) circle [radius=0.05];
    \draw [black, fill=black] (v31) circle [radius=0.05];
    \draw [black, fill=black] (v12) circle [radius=0.05];
    \draw [black, fill=black] (v22) circle [radius=0.05];
    \draw [black, fill=black] (v13) circle [radius=0.05];
    \draw (v21.center) -- (v11.center) node[sloped, midway, allow upside down]{\arrow};;
    \draw (v31.center) -- (v21.center) node[sloped, midway, allow upside down]{\arrow};;
    \draw (v22.center) -- (v31.center) node[sloped, midway, allow upside down]{\arrow};;
    \draw (v13.center) -- (v22.center) node[sloped, midway, allow upside down]{\arrow};;
    \draw (v12.center) -- (v13.center) node[sloped, midway, allow upside down]{\arrow};;
    \draw (v11.center) -- (v12.center) node[sloped, midway, allow upside down]{\arrow};;
    \draw [line width=0.6mm, blue] (v12.center) -- (v21.center) node[sloped, midway, allow upside down]{\arrow};;
    \draw [line width=0.6mm, blue] (v22.center) -- (v12.center) node[sloped, midway, allow upside down]{\arrow};;
    \draw [line width=0.6mm, blue] (v21.center) -- (v22.center) node[sloped, midway, allow upside down]{\arrow};;
    \foreach \w\x in {1/2, 2/3, 3/4} {
      \draw[dotted] (v0\w.center) -- (v1\w.center);
      \draw[dotted] (v\w0.center) -- (v\w1.center);
      \draw[dotted] (v0\x.center) -- (v1\w.center);
      \draw[dotted] (v\x0.center) -- (v\w1.center);
    }
    \draw[dotted] (v13.center) -- (v14.center);
    \draw[dotted] (v22.center) -- (v23.center);
    \draw[dotted] (v31.center) -- (v32.center);
    \draw[dotted] (v31.center) -- (v41.center);
    \draw[dotted] (v22.center) -- (v32.center);
    \draw[dotted] (v13.center) -- (v23.center);
    \end{tikzpicture}
    \caption{A $T$-patch.}
  \end{subfigure}
  \begin{subfigure}[b]{.23\textwidth}
    \centering 
    \begin{tikzpicture}
    \usetikzlibrary{positioning}
    \newdimen\nodeDist
    \nodeDist=8.2mm
    \newcommand{\arrow}{ \tikz \draw[-{Stealth[length=1.5mm, width=1.5mm]}] (-1pt,0) -- (1pt,0); }
    \node at (0,0) (v00) {};
    \foreach \x\y in {0/1, 1/2, 2/3, 3/4} {
      \node at ([shift=({60:\nodeDist})]v0\x) (v0\y) {};
      \node at ([shift=({0:\nodeDist})]v\x0) (v\y0) {};
    }
    \node at ([shift=({0:\nodeDist})]v40) (v50) {};
    \foreach \y in {1, ..., 4} {
      \foreach \x\z in {0/1, 1/2, 2/3, 3/4} {
        \pgfmathparse{\x+\y <= 4 ? int(1) : int(0)}
        \ifnum\pgfmathresult=1
          \node at ([shift=({0:\nodeDist})]v\x\y) (v\z\y) {};
        \fi
      }
    }
    \draw [black, fill=black] (v11) circle [radius=0.05];
    \draw [black, fill=black] (v21) circle [radius=0.05];
    \draw [black, fill=black] (v31) circle [radius=0.05];
    \draw [black, fill=black] (v12) circle [radius=0.05];
    \draw [black, fill=black] (v22) circle [radius=0.05];
    \draw [black, fill=black] (v13) circle [radius=0.05];
    \draw (v11.center) -- (v21.center) node[sloped, midway, allow upside down]{\arrow};;
    \draw (v21.center) -- (v31.center) node[sloped, midway, allow upside down]{\arrow};;
    \draw (v31.center) -- (v22.center) node[sloped, midway, allow upside down]{\arrow};;
    \draw (v22.center) -- (v13.center) node[sloped, midway, allow upside down]{\arrow};;
    \draw (v13.center) -- (v12.center) node[sloped, midway, allow upside down]{\arrow};;
    \draw (v12.center) -- (v11.center) node[sloped, midway, allow upside down]{\arrow};;
    \draw [line width=0.6mm, blue] (v21.center) -- (v12.center) node[sloped, midway, allow upside down]{\arrow};;
    \draw [line width=0.6mm, blue] (v12.center) -- (v22.center) node[sloped, midway, allow upside down]{\arrow};;
    \draw [line width=0.6mm, blue] (v22.center) -- (v21.center) node[sloped, midway, allow upside down]{\arrow};;
    \foreach \w\x in {1/2, 2/3, 3/4} {
      \draw[dotted] (v0\w.center) -- (v1\w.center);
      \draw[dotted] (v\w0.center) -- (v\w1.center);
      \draw[dotted] (v0\x.center) -- (v1\w.center);
      \draw[dotted] (v\x0.center) -- (v\w1.center);
    }
    \draw[dotted] (v13.center) -- (v14.center);
    \draw[dotted] (v22.center) -- (v23.center);
    \draw[dotted] (v31.center) -- (v32.center);
    \draw[dotted] (v31.center) -- (v41.center);
    \draw[dotted] (v22.center) -- (v32.center);
    \draw[dotted] (v13.center) -- (v23.center);
    \end{tikzpicture}
    \caption{An $F$-patch.}
  \end{subfigure}
\caption{A $T$-patch and an $F$-patch. The center $T$-triangle and $F$-triangle are denoted by bold lines, and the exteriors are denoted by non-bold solid lines. The dotted lines are not part of the patches.} \label{fig:mep_patch}
\end{figure}
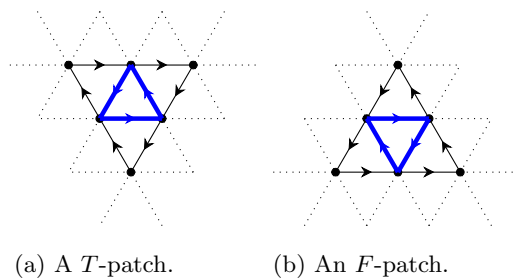

Two triangles are called \emph{neighbors} if they share a common edge. 
A \emph{patch} is a triangle together with its neighbors. The set of edges of the triangle is called the \emph{center} of the patch, and the set of edges of a patch that do not belong to the center is called the \emph{exterior} of the patch. A \emph{$T$-patch} (resp., \emph{$F$-patch}) is a patch in which the center is a $T$-triangle (resp., $F$-triangle). 
See \Cref{fig:mep_patch} for an illustration. 
Two patches $P^1$ and $P^2$ are \emph{non-interfering} if the distance between any vertex in $P^1$ and any vertex in $P^2$ is at least (say) $10$ on $H_p$, where distance is measured along a shortest path. 
We shall also require patches to be of distance at least $10$ from the vertex $\mathbf{0} = (0, 0, 0)$.

Consider the graph $H_p$ with non-interfering patches and with some edges of the patches removed. 
Suppose there is an edge-partition of the resulting graph into triangles. 
The vertex $\mathbf{0}$ has an indegree and an outdegree of $3$, so any edge-partition into triangles requires $\mathbf{0}$ to belong to exactly \emph{three} triangles. 
The only ways to have these three triangles are when they are all $T$-triangles or all $F$-triangles. 
Then, all neighboring vertices to $\mathbf{0}$ belong to triangles of the same type. 
By a similar argument, the neighboring vertices must each belong to exactly three triangles of the same type. 
This cascades through the whole graph (except possibly at the patches), and therefore, we see that an edge-partition of $H_p$ into triangles necessarily consists only of $T$-triangles or only of $F$-triangles (except possibly at the patches).

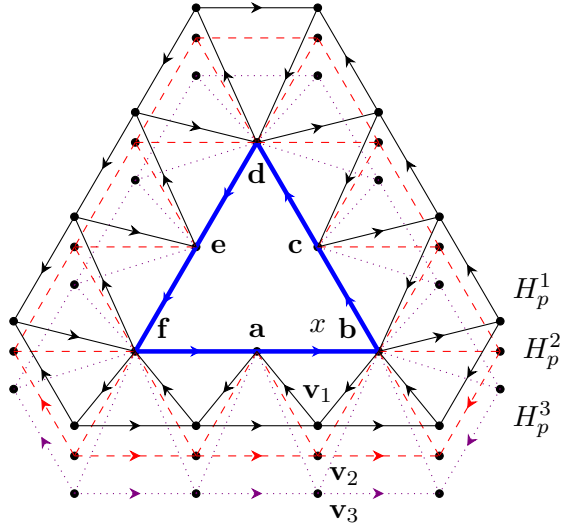
\begin{figure}[htb]
\centering
  \begin{subfigure}[b]{.48\textwidth}
    \centering 
    \begin{tikzpicture}
    \usetikzlibrary{positioning}
    \newdimen\nodeDist
    \nodeDist=16mm
    \newcommand{\arrow}{ \tikz \draw[-{Stealth[length=1.5mm, width=1.5mm]}] (-1pt,0) -- (1pt,0); }
    \node at (0,-5mm) (v100) {};
    \node at (0,0) (v200) {};
    \node at (0,4mm) (v300) {};
    \foreach \h in {1, 2, 3} {
      \foreach \x\y in {0/1, 1/2, 2/3, 3/4} {
        \node at ([shift=({60:\nodeDist})]v\h0\x) (v\h0\y) {}; 
        \draw [black, fill=black] (v\h0\y) circle [radius=0.05];
        \node at ([shift=({0:\nodeDist})]v\h\x0) (v\h\y0) {}; 
        \draw [black, fill=black] (v\h\y0) circle [radius=0.05];
      }
      \node at ([shift=({0:\nodeDist})]v\h04) (v\h05) {}; 
      \node at ([shift=({0:\nodeDist})]v\h40) (v\h50) {}; 
      \foreach \y in {1, ..., 4} {
        \foreach \x\z in {0/1, 1/2, 2/3, 3/4} {
          \pgfmathparse{\x+\y <= 4 ? int(1) : int(0)}
          \ifnum\pgfmathresult=1
            \node at ([shift=({0:\nodeDist})]v\h\x\y) (v\h\z\y) {};
          \fi
        }
      }
      \draw [black, fill=black] (v\h14) circle [radius=0.05];
      \draw [black, fill=black] (v\h23) circle [radius=0.05];
      \draw [black, fill=black] (v\h32) circle [radius=0.05];
      \draw [black, fill=black] (v\h41) circle [radius=0.05];
    } 
    \draw [black, fill=black] (v211) circle [radius=0.05] node[above right=0.4mm and 1.5mm] {$\mathbf{f}$};
    \draw [black, fill=black] (v221) circle [radius=0.05] node[above=0.6mm] {$\mathbf{a}$};
    \draw [black, fill=black] (v231) circle [radius=0.05] node[above left=0.4mm and 1.5mm] {$\mathbf{b}$};
    \draw [black, fill=black] (v222) circle [radius=0.05] node[left=0.6mm] {$\mathbf{c}$};
    \draw [black, fill=black] (v213) circle [radius=0.05] node[below=1.4mm] {$\mathbf{d}$};
    \draw [black, fill=black] (v212) circle [radius=0.05] node[right=0.6mm] {$\mathbf{e}$};
    \draw [line width=0.6mm, blue] (v211.center) -- (v221.center) node[sloped, midway, allow upside down]{\arrow};;
    \draw [line width=0.6mm, blue] (v221.center) -- node[above=1mm, black] {$x$} (v231.center) node[sloped, midway, allow upside down]{\arrow};;
    \draw [line width=0.6mm, blue] (v231.center) -- (v222.center) node[sloped, midway, allow upside down]{\arrow};;
    \draw [line width=0.6mm, blue] (v222.center) -- (v213.center) node[sloped, midway, allow upside down]{\arrow};;
    \draw [line width=0.6mm, blue] (v213.center) -- (v212.center) node[sloped, midway, allow upside down]{\arrow};;
    \draw [line width=0.6mm, blue] (v212.center) -- (v211.center) node[sloped, midway, allow upside down]{\arrow};;
    \foreach \x\y\w\z in {1/2/3/4, 2/3/2/3, 3/4/1/2} {
      \draw [dotted, violet] (v1\x0.center) -- (v1\y0.center) node[sloped, midway, allow upside down]{\arrow};;
      \draw [dotted, violet] (v2\x1.center) -- (v1\x0.center);
      \draw [dotted, violet] (v1\y0.center) -- (v2\x1.center);
      \draw [dotted, violet] (v10\y.center) -- (v10\x.center);
      \draw [dotted, violet] (v10\x.center) -- (v21\x.center);
      \draw [dotted, violet] (v21\x.center) -- (v10\y.center);
      \draw [dotted, violet] (v2\x\w.center) -- (v1\y\w.center);
      \draw [dotted, violet] (v1\y\w.center) -- (v1\x\z.center);
      \draw [dotted, violet] (v1\x\z.center) -- (v2\x\w.center);
    }
    \draw [dotted, violet] (v110.center) -- (v101.center) node[sloped, midway, allow upside down]{\arrow};;
    \draw [dotted, violet] (v141.center) -- (v140.center) node[sloped, midway, allow upside down]{\arrow};;
    \draw [dotted, violet] (v104.center) -- (v114.center);
    \foreach \x\y\w\z in {1/2/3/4, 2/3/2/3, 3/4/1/2} {
      \draw [dashed, red] (v2\x0.center) -- (v2\y0.center) node[sloped, midway, allow upside down]{\arrow};;
      \draw [dashed, red] (v2\x1.center) -- (v2\x0.center);
      \draw [dashed, red] (v2\y0.center) -- (v2\x1.center);
      \draw [dashed, red] (v20\y.center) -- (v20\x.center);
      \draw [dashed, red] (v20\x.center) -- (v21\x.center);
      \draw [dashed, red] (v21\x.center) -- (v20\y.center);
      \draw [dashed, red] (v2\x\w.center) -- (v2\y\w.center);
      \draw [dashed, red] (v2\y\w.center) -- (v2\x\z.center);
      \draw [dashed, red] (v2\x\z.center) -- (v2\x\w.center);
    }
    \draw [dashed, red] (v210.center) -- (v201.center) node[sloped, midway, allow upside down]{\arrow};;
    \draw [dashed, red] (v241.center) -- (v240.center) node[sloped, midway, allow upside down]{\arrow};;
    \draw [dashed, red] (v204.center) -- (v214.center);
    \foreach \x\y\w\z in {1/2/3/4, 2/3/2/3, 3/4/1/2} {
      \draw [] (v3\x0.center) -- (v3\y0.center) node[sloped, midway, allow upside down]{\arrow};;
      \draw [] (v2\x1.center) -- (v3\x0.center) node[sloped, midway, allow upside down]{\arrow};;
      \draw [] (v3\y0.center) -- (v2\x1.center) node[sloped, midway, allow upside down]{\arrow};;
      \draw [] (v30\y.center) -- (v30\x.center) node[sloped, midway, allow upside down]{\arrow};;
      \draw [] (v30\x.center) -- (v21\x.center) node[sloped, midway, allow upside down]{\arrow};;
      \draw [] (v21\x.center) -- (v30\y.center) node[sloped, midway, allow upside down]{\arrow};;
      \draw [] (v2\x\w.center) -- (v3\y\w.center) node[sloped, midway, allow upside down]{\arrow};;
      \draw [] (v3\y\w.center) -- (v3\x\z.center) node[sloped, midway, allow upside down]{\arrow};;
      \draw [] (v3\x\z.center) -- (v2\x\w.center) node[sloped, midway, allow upside down]{\arrow};;
    }
    \draw [] (v310.center) -- (v301.center) node[sloped, midway, allow upside down]{\arrow};;
    \draw [] (v341.center) -- (v340.center) node[sloped, midway, allow upside down]{\arrow};;
    \draw [] (v304.center) -- (v314.center) node[sloped, midway, allow upside down]{\arrow};;
    \draw (v141) node[below right=0.4mm] {$H_p^3$};
    \draw (v241) node[right=1.6mm] {$H_p^2$};
    \draw (v341) node[above right=0.4mm] {$H_p^1$};
    \draw (v130) node[below right=0.2mm] {$\mathbf{v}_3$};
    \draw (v230) node[below right=0.2mm] {$\mathbf{v}_2$};
    \draw (v330) node[above=1.8mm] {$\mathbf{v}_1$};
    \end{tikzpicture}
  \end{subfigure}
\caption{An $F$-$F$-$F$ join on $H_p^1$ (solid lines), $H_p^2$ (dashed lines) and $H_p^3$ (dotted lines). All three graphs share the exterior of the patch $P_F$ (bold lines).} \label{fig:mep_FFF}
\end{figure}

Let $H_p^1$, $H_p^2$, and $H_p^3$ be three copies of $H_p$, and let $P_F^k$ be an $F$-patch on $H_p^k$ for each $k \in \{1, 2, 3\}$. 
We say that we apply an \emph{$F$-$F$-$F$ join} on $(H_p^1, H_p^2, H_p^3)$ if we remove the patches $P_F^1$, $P_F^2$, and $P_F^3$ on the respective copies and replace them by \emph{one} copy of the vertices of an $F$-patch $P_F$ and \emph{one} copy of the exterior of $P_F$. See \Cref{fig:mep_FFF} for an illustration. 
We claim that any edge-partition of this new graph into triangles results in \emph{exactly one} of $H_p^1$, $H_p^2$, and $H_p^3$ being partitioned into $F$-triangles (and the other two into $T$-triangles). 
To see this, consider an edge $x = \mathbf{a} \to \mathbf{b}$ belonging to the exterior of $P_F$. Since $x$ belongs to a triangle, we consider the candidates for the third vertex of the triangle. 
There are only three such candidates: $\mathbf{v}_1$, $\mathbf{v}_2$, and $\mathbf{v}_3$, which are parallel vertices on $H_p^1$, $H_p^2$, and $H_p^3$, respectively.

Assume without loss of generality that the triangle is $\mathbf{a} \to \mathbf{b} \to \mathbf{v}_1 \to \mathbf{a}$ (note that this is an $F$-triangle), and consider the triangles containing the vertex $\mathbf{v}_1$. 
Since $\mathbf{v}_1$ already has an $F$-triangle, the other triangles containing $\mathbf{v}_1$ can only be $F$-triangles, and the cascading effect implies that $\mathbf{0}$ in $H_p^1$, and all other vertices in $H_p^1$ (except possibly at~$P_F$), belong to $F$-triangles. 
Note that this implies that each edge in the exterior of~$P_F$ is combined with edges in $H_p^1$ to form $F$-triangles. 
On the other hand, the indegree and the outdegree of both $\mathbf{v}_2$ and $\mathbf{v}_3$ are $3$, so there must be exactly three triangles containing $\mathbf{v}_2$ and $\mathbf{v}_3$, respectively. 
Since $x$ is used by the edges $\mathbf{b} \to \mathbf{v}_1$ and $\mathbf{v}_1 \to \mathbf{a}$ in $H_p^1$, it cannot be used by the edges $\mathbf{b} \to \mathbf{v}_2$ and $\mathbf{v}_2 \to \mathbf{a}$ in $H_p^2$ or $\mathbf{b} \to \mathbf{v}_3$ and $\mathbf{v}_3 \to \mathbf{a}$ in $H_p^3$ to form the respective $F$-triangles.
As such, the triangles containing $\mathbf{v}_2$ and $\mathbf{v}_3$ must all be $T$-triangles, and the cascading effect implies that $\mathbf{0}$ in $H_p^2$ and $H_p^3$, and all other vertices in $H_p^2$ and $H_p^3$ (except at $P_F$), belong to $T$-triangles. 
It can be verified that these edge-partitions into triangles are indeed valid.

\begin{figure}[htb]
\centering
  \begin{subfigure}[b]{.5\textwidth}
    \centering 
    \begin{tikzpicture}
    \usetikzlibrary{positioning}
    \newdimen\nodeDist
    \nodeDist=15mm
    \newcommand{\arrow}{ \tikz \draw[-{Stealth[length=1.5mm, width=1.5mm]}] (-1pt,0) -- (1pt,0); }
    \node at (0,0) (v200) {};
    \node at (0,3mm) (v300) {};
    \foreach \h in {2, 3} {
      \foreach \x\y in {0/1, 1/2, 2/3, 3/4} {
        \node at ([shift=({60:\nodeDist})]v\h0\x) (v\h0\y) {}; 
        \draw [black, fill=black] (v\h0\y) circle [radius=0.05];
        \node at ([shift=({0:\nodeDist})]v\h\x0) (v\h\y0) {}; 
        \draw [black, fill=black] (v\h\y0) circle [radius=0.05];
      }
      \node at ([shift=({0:\nodeDist})]v\h04) (v\h05) {}; 
      \node at ([shift=({0:\nodeDist})]v\h40) (v\h50) {}; 
      \foreach \y in {1, ..., 4} {
        \foreach \x\z in {0/1, 1/2, 2/3, 3/4} {
          \pgfmathparse{\x+\y <= 4 ? int(1) : int(0)}
          \ifnum\pgfmathresult=1
            \node at ([shift=({0:\nodeDist})]v\h\x\y) (v\h\z\y) {};
          \fi
        }
      }
      \draw [black, fill=black] (v\h14) circle [radius=0.05];
      \draw [black, fill=black] (v\h23) circle [radius=0.05];
      \draw [black, fill=black] (v\h32) circle [radius=0.05];
      \draw [black, fill=black] (v\h41) circle [radius=0.05];
    } 
    \draw [black, fill=black] (v211) circle [radius=0.05] node[above=1.6mm] {$\mathbf{f}$};
    \draw [black, fill=black] (v221) circle [radius=0.05] node[above=1.6mm] {$\mathbf{a}$};
    \draw [black, fill=black] (v231) circle [radius=0.05] node[above=1.6mm] {$\mathbf{b}$};
    \draw [black, fill=black] (v222) circle [radius=0.05] node[below=1.6mm] {$\mathbf{c}$};
    \draw [black, fill=black] (v213) circle [radius=0.05] node[below=1.4mm] {$\mathbf{d}$};
    \draw [black, fill=black] (v212) circle [radius=0.05] node[below=1.6mm] {$\mathbf{e}$};
    \draw [line width=0.6mm, blue] (v211.center) -- (v221.center) node[sloped, midway, allow upside down]{\arrow};;
    \draw [line width=0.6mm, blue] (v221.center) --  node[above=0.5mm, black] {$x$} (v231.center) node[sloped, midway, allow upside down]{\arrow};;
    \draw [line width=0.6mm, blue] (v231.center) -- (v222.center) node[sloped, midway, allow upside down]{\arrow};;
    \draw [line width=0.6mm, blue] (v222.center) -- (v213.center) node[sloped, midway, allow upside down]{\arrow};;
    \draw [line width=0.6mm, blue] (v213.center) -- (v212.center) node[sloped, midway, allow upside down]{\arrow};;
    \draw [line width=0.6mm, blue] (v212.center) -- (v211.center) node[sloped, midway, allow upside down]{\arrow};;
    \draw [line width=0.6mm, blue] (v221.center) -- (v212.center) node[sloped, midway, allow upside down]{\arrow};;
    \draw [line width=0.6mm, blue] (v212.center) -- (v222.center) node[sloped, midway, allow upside down]{\arrow};;
    \draw [line width=0.6mm, blue] (v222.center) -- (v221.center) node[sloped, midway, allow upside down]{\arrow};;
    \foreach \x\y\w\z in {1/2/3/4, 2/3/2/3, 3/4/1/2} {
      \draw [dashed, red] (v2\x0.center) -- (v2\y0.center) node[sloped, midway, allow upside down]{\arrow};;
      \draw [dashed, red] (v2\x1.center) -- (v2\x0.center);
      \draw [dashed, red] (v2\y0.center) -- (v2\x1.center);
      \draw [dashed, red] (v20\y.center) -- (v20\x.center);
      \draw [dashed, red] (v20\x.center) -- (v21\x.center);
      \draw [dashed, red] (v21\x.center) -- (v20\y.center);
      \draw [dashed, red] (v2\x\w.center) -- (v2\y\w.center);
      \draw [dashed, red] (v2\y\w.center) -- (v2\x\z.center);
      \draw [dashed, red] (v2\x\z.center) -- (v2\x\w.center);
    }
    \draw [dashed, red] (v210.center) -- (v201.center) node[sloped, midway, allow upside down]{\arrow};;
    \draw [dashed, red] (v241.center) -- (v240.center) node[sloped, midway, allow upside down]{\arrow};;
    \draw [dashed, red] (v204.center) -- (v214.center);
    \foreach \x\y\w\z in {1/2/3/4, 2/3/2/3, 3/4/1/2} {
      \draw [] (v3\x0.center) -- (v3\y0.center) node[sloped, midway, allow upside down]{\arrow};;
      \draw [] (v2\x1.center) -- (v3\x0.center) node[sloped, midway, allow upside down]{\arrow};;
      \draw [] (v3\y0.center) -- (v2\x1.center) node[sloped, midway, allow upside down]{\arrow};;
      \draw [] (v30\y.center) -- (v30\x.center) node[sloped, midway, allow upside down]{\arrow};;
      \draw [] (v30\x.center) -- (v21\x.center) node[sloped, midway, allow upside down]{\arrow};;
      \draw [] (v21\x.center) -- (v30\y.center) node[sloped, midway, allow upside down]{\arrow};;
      \draw [] (v2\x\w.center) -- (v3\y\w.center) node[sloped, midway, allow upside down]{\arrow};;
      \draw [] (v3\y\w.center) -- (v3\x\z.center) node[sloped, midway, allow upside down]{\arrow};;
      \draw [] (v3\x\z.center) -- (v2\x\w.center) node[sloped, midway, allow upside down]{\arrow};;
    }
    \draw [] (v310.center) -- (v301.center) node[sloped, midway, allow upside down]{\arrow};;
    \draw [] (v341.center) -- (v340.center) node[sloped, midway, allow upside down]{\arrow};;
    \draw [] (v304.center) -- (v314.center) node[sloped, midway, allow upside down]{\arrow};;
    \draw (v241) node[below right=0.4mm] {$H_p^2$};
    \draw (v341) node[right=1.6mm] {$H_p^1$};
    \draw (v230) node[below=0.2mm] {$\mathbf{v}_2$};
    \draw (v330) node[above=1.0mm] {$\mathbf{v}_1$};
    \end{tikzpicture}
    \caption{An $F$-$F$ join.}
  \end{subfigure}
  \begin{subfigure}[b]{.5\textwidth}
    \centering 
    \begin{tikzpicture}
    \usetikzlibrary{positioning}
    \newdimen\nodeDist
    \nodeDist=14mm
    \newcommand{\arrow}{ \tikz \draw[-{Stealth[length=1.5mm, width=1.5mm]}] (-1pt,0) -- (1pt,0); }
    \node at (0,0) (v00) {};
    \foreach \x\y in {0/1, 1/2} {
      \node at ([shift=({60:\nodeDist})]v0\x) (v0\y) {}; 
    }
    \foreach \y in {0, 1, 2} {
      \foreach \x\z in {0/1, 1/2, 2/3, 3/4} {
          \node at ([shift=({0:\nodeDist})]v\x\y) (v\z\y) {};
      }
    }
    \draw [black, fill=black] (v00) circle [radius=0.05] node[below left=1.6mm] {$\mathbf{f}$};
    \draw [black, fill=black] (v10) circle [radius=0.05] node[below=1.6mm] {$\mathbf{a}$} node[below=8mm] {$P_F^1$};
    \draw [black, fill=black] (v20) circle [radius=0.05] node[below right=1.6mm] {$\mathbf{b}$};
    \draw [black, fill=black] (v11) circle [radius=0.05] node[right=1.6mm] {$\mathbf{c}$};
    \draw [black, fill=black] (v02) circle [radius=0.05] node[above=1.4mm] {$\mathbf{d}$};
    \draw [black, fill=black] (v01) circle [radius=0.05] node[left=1.6mm] {$\mathbf{e}$};

    \draw [black, fill=black] (v22) circle [radius=0.05] node[above left=1.6mm] {$\mathbf{f}$};
    \draw [black, fill=black] (v32) circle [radius=0.05] node[above=1.6mm] {$\mathbf{a}$};
    \draw [black, fill=black] (v42) circle [radius=0.05] node[above right=1.6mm] {$\mathbf{b}$};
    \draw [black, fill=black] (v41) circle [radius=0.05] node[right=1.6mm] {$\mathbf{c}$};
    \draw [black, fill=black] (v40) circle [radius=0.05] node[below=1.4mm] {$\mathbf{d}$} node[below=8mm] {$P_T^2$};
    \draw [black, fill=black] (v31) circle [radius=0.05] node[left=1.6mm] {$\mathbf{e}$};

    \draw [] (v00.center) -- (v10.center) node[sloped, midway, allow upside down]{\arrow};;
    \draw [] (v10.center) -- (v20.center) node[sloped, midway, allow upside down]{\arrow};;
    \draw [] (v20.center) -- (v11.center) node[sloped, midway, allow upside down]{\arrow};;
    \draw [] (v11.center) -- (v02.center) node[sloped, midway, allow upside down]{\arrow};;
    \draw [] (v02.center) -- (v01.center) node[sloped, midway, allow upside down]{\arrow};;
    \draw [] (v01.center) -- (v00.center) node[sloped, midway, allow upside down]{\arrow};;
    \draw [] (v10.center) -- (v01.center) node[sloped, midway, allow upside down]{\arrow};;
    \draw [] (v01.center) -- (v11.center) node[sloped, midway, allow upside down]{\arrow};;
    \draw [] (v11.center) -- (v10.center) node[sloped, midway, allow upside down]{\arrow};;
    
    \draw [] (v22.center) -- (v32.center) node[sloped, midway, allow upside down]{\arrow};;
    \draw [] (v32.center) -- (v42.center) node[sloped, midway, allow upside down]{\arrow};;
    \draw [] (v42.center) -- (v41.center) node[sloped, midway, allow upside down]{\arrow};;
    \draw [] (v41.center) -- (v40.center) node[sloped, midway, allow upside down]{\arrow};;
    \draw [] (v40.center) -- (v31.center) node[sloped, midway, allow upside down]{\arrow};;
    \draw [] (v31.center) -- (v22.center) node[sloped, midway, allow upside down]{\arrow};;
    \draw [] (v32.center) -- (v31.center) node[sloped, midway, allow upside down]{\arrow};;
    \draw [] (v31.center) -- (v41.center) node[sloped, midway, allow upside down]{\arrow};;
    \draw [] (v41.center) -- (v32.center) node[sloped, midway, allow upside down]{\arrow};;
    
    \end{tikzpicture}
    \caption{An $F$-$T$ join.}
  \end{subfigure}
\caption{(a) An $F$-$F$ join on $H_p^1$ (solid lines) and $H_p^2$ (dashed lines). Both graphs share the patch $P_F$ (bold lines). (b) The association of vertices between $P_F^1$ and $P_T^2$ in an $F$-$T$ join on $H_p^1$ and $H_p^2$---note the labelling of vertices.} \label{fig:mep_FF_FT}
\end{figure}
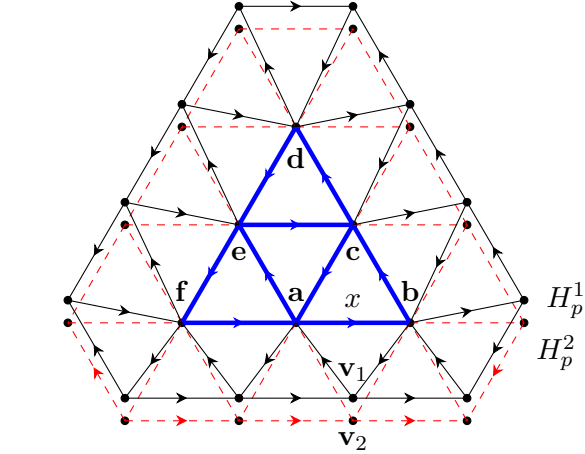
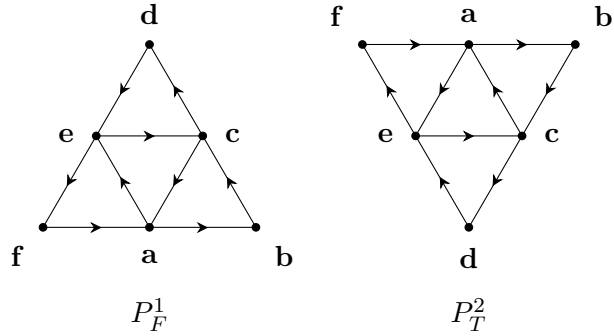

Let $H_p^1$ and $H_p^2$ be two copies of $H_p$, and let $P_F^k$ be an $F$-patch on $H_p^k$ for each $k \in \{1, 2\}$. 
We say that we apply an \emph{$F$-$F$ join} on $(H_p^1, H_p^2)$ if we remove the patches $P_F^1$ and $P_F^2$ on the respective copies and replace them by \emph{one} copy of an $F$-patch $P_F$ (note that this construction is slightly different from the $F$-$F$-$F$ join, as the center of the patch is also included here). 
See \Cref{fig:mep_FF_FT}(a) for an illustration.
We claim that any edge-partition of this new graph into triangles results in \emph{at least one} of $H_p^1$ and $H_p^2$ being partitioned into $T$-triangles. 
Similar to the proof for the $F$-$F$-$F$ join, consider an edge $x = \mathbf{a} \to \mathbf{b}$ belonging to the exterior of $P_F$. 
There are now \emph{three} possible candidates for the third vertex of the triangle containing $x$: $\mathbf{v}_1$, $\mathbf{v}_2$, and $\mathbf{c}$, where $\mathbf{v}_1$ and $\mathbf{v}_2$ are parallel vertices on $H_p^1$ and $H_p^2$, respectively, and $\mathbf{c}$ is a vertex in the center of $P_F$. 
If the third vertex of the triangle is $\mathbf{v}_1$ or $\mathbf{v}_2$, then the same proof for the $F$-$F$-$F$ join can be used to conclude that exactly one of $H_p^1$ and $H_p^2$ is partitioned into $T$-triangles.
Otherwise, the third vertex is $\mathbf{c}$.
In this case, the other edges in the center of the $F$-patch can only belong to $T$-triangles, which implies that both $H_p^1$ and $H_p^2$ can also only be partitioned into $T$-triangles. 
This shows that at least one of $H_p^1$ and $H_p^2$ is partitioned into $T$-triangles. 
It can be verified that these edge-partitions into triangles are indeed valid.

Let $H_p^1$ and $H_p^2$ be two copies of $H_p$, and let $P_F^1$ be an $F$-patch on $H_p^1$ and $P_T^2$ be a $T$-patch on $H_p^2$. 
We say that we apply an \emph{$F$-$T$ join} on $(H_p^1, H_p^2)$ if we remove the patches $P_F^1$ and $P_T^2$ on the respective copies and replace them by one copy of an $F$-patch $P_F$---here, the replacement of $P_T^2$ by $P_F$ is ``mirrored''. 
See \Cref{fig:mep_FF_FT}(b) for an illustration, where the mirroring is across the edge $\mathbf{e} \to \mathbf{c}$. 
We claim that any edge-partition of this new graph into triangles results in $H_p^1$ being partitioned into $T$-triangles or $H_p^2$ being partitioned into $F$-triangles (or both). 
The proof is similar to that of the $F$-$F$ join, except that we reverse the argument regarding $H_p^2$ due to the ``mirror'' effect on $P_T^2$.

We are now ready to prove our result.

\directedtrianglepartitionnphard*

\begin{proof}
We shall reduce \textsc{3SAT} to \textsc{Directed Triangle Partition}. 
Recall that in an instance of \textsc{3SAT}, we are given a set of variables $Y = \{y_1, \ldots, y_q\}$ and a set of clauses $C = \{c_1, \ldots, c_r\}$ where each clause is a disjunction of three literals, i.e., $c_j = \ell_{j, 1} \lor \ell_{j, 2} \lor \ell_{j, 3}$, and each literal is either a variable (i.e., $\ell_{j, k} = y_i$) or its negation (i.e., $\ell_{j, k} = \overline{y_i}$).

Choose $p$ large enough so that there are at least $3r$ $T$-patches and $3r$ $F$-patches in $H_p$ that are pairwise non-interfering (say, $p = 100r$). 
Assign to each variable $y_i$ a separate copy of the graph $Y_i$ isomorphic to $H_p$, and assign to each literal $\ell_{j, k}$ a separate copy of the graph $L_{j, k}$ isomorphic to $H_p$. For each $j$, apply an $F$-$F$-$F$ join on $(L_{j, 1}, L_{j, 2}, L_{j, 3})$ via any $F$-patch in $L_{j, k}$. 
For each $(j, k)$, if the literal $\ell_{j, k}$ corresponds to the variable $y_i$, apply an $F$-$F$ join on $(L_{j, k}, Y_i)$ via any unused $F$-patches, and if the literal $\ell_{j, k}$ corresponds to~$\overline{y_i}$, apply an $F$-$T$ join on $(L_{j, k}, Y_i)$ via any unused $F$-patch and $T$-patch.

Let $G = (V, E)$ denote the constructed graph. This construction can be done in time polynomial in the size of the \textsc{3SAT} instance. 
Note that $G$ is a directed graph with no cycles of length $1$ or $2$. 
We claim that there exists a satisfying assignment in the \textsc{3SAT} instance if and only if $G$ can be edge-partitioned into triangles.

Suppose that there exists a partition of the edges of $G$ into triangles. 
Consider one such partition, and assign $y_i$ as true if and only if $Y_i$ is partitioned into $T$-triangles. 
For each $j$, note that $L_{j, k}$ is partitioned into $F$-triangles for some $k \in \{1, 2, 3\}$ due to the $F$-$F$-$F$ join---we claim that the corresponding literal $\ell_{j, k}$ is satisfied.
If $\ell_{j, k} = y_i$ for some $i$, then $Y_i$ must be partitioned into $T$-triangles by the $F$-$F$ join on $(L_{j, k}, Y_i)$, which means that $\ell_{j, k} = y_i$ is true. 
If $\ell_{j, k} = \overline{y_i}$ for some $i$, then $Y_i$ must be partitioned into $F$-triangles by the $F$-$T$ join on $(L_{j, k}, Y_i)$, which means that $y_i$ is false and $\ell_{j, k} = \overline{y_i}$ is true. 
In both cases, we see that the literal $\ell_{j, k}$ is satisfied.

Conversely, suppose there exists a satisfying assignment in the \textsc{3SAT} instance, and consider any satisfying assignment.
For each $i$, if $y_i$ is true, partition $Y_i$ into $T$-triangles; else, partition $Y_i$ into $F$-triangles. 
For each $j$, at least one of the literals in $c_j$ is true; pick any one of them, say $\ell_{j, k}$, and partition $L_{j, k}$ into $F$-triangles, and partition the other two $L_{j, k'}$ into $T$-triangles. 
We now verify that the edge-partition is a valid partition by checking that the restrictions caused by the joins are not violated.
For each $j$, consider the $F$-$F$-$F$ join on $(L_{j, 1}, L_{j, 2}, L_{j, 3})$---since one $L_{j, k}$ is edge-partitioned into $F$-triangles and the other two into $T$-triangles, the requirement on the $F$-$F$-$F$ join is satisfied.
Now, for each $(j, k)$, if the literal $\ell_{j, k}$ corresponds to the variable $y_i$, then the join on $(L_{j, k}, Y_i)$ is $F$-$F$, and at least one of $L_{j, k}$ and $Y_i$ is partitioned into $T$-triangles (otherwise, if both are partitioned into $F$-triangles, then $y_i$ is false and $\ell_{j, k}$ is true, which is not possible).
On the other hand, if the literal $\ell_{j, k}$ corresponds to $\overline{y_i}$, then the join on $(L_{j, k}, Y_i)$ is $F$-$T$, and $L_{j, k}$ is partitioned into $T$-triangles or $Y_i$ is partitioned into $F$-triangles, or both (otherwise, if $L_{j, k}$ is partitioned into $F$-triangles and $Y_i$ is partitioned into $T$-triangles, then both $\overline{y_i} = \ell_{j, k}$ and $y_i$ are true, which is not possible). 
Therefore, the edges of $G$ can be partitioned into triangles.
\end{proof}

\end{document}